\newtheorem{theorem}{Theorem}
\newtheorem{corollary}{Corollary}
\newtheorem{definition}{Definition}
\newtheorem{lemma}{Lemma}
\newtheorem{proposition}{Proposition}
\newtheorem{remma}{Remark}
\newtheorem{property}{Property}
\renewenvironment{proof}{\smallskip \noindent{\textbf{Proof:}}}{\hfill \hspace*{1pt}\hfill $\blacksquare$}
\renewenvironment{proofname}[1]{\smallskip \noindent {\textbf{#1}}}{\hfill \hspace*{1pt}\hfill $\blacksquare$}
\newenvironment{remark}{\begin{remma}\rm }{\hfill \hspace*{1pt} \hfill $\square$\end{remma}}
\DeclareMathOperator{\argmin}{argmin}
\DeclareMathOperator{\sat}{sat}
\newcommand{\smallmat}[1]{\left[ \begin{smallmatrix}#1
	\end{smallmatrix} \right]}
\newcommand{\atan}{\ensuremath{\mathrm{atan}}}
\newcommand{\dom}{\ensuremath{\mathrm{dom}}\,}
\newcommand{\sgn}{\ensuremath{\mathrm{sign}}}
\newcommand{\nulls}{\ensuremath{\mathrm{null}}}
\newcommand{\gph}{\ensuremath{\mathrm{gph}\ }}
\newcommand{\A}{\ensuremath{\mathcal{A}}}
\newcommand{\E}{\ensuremath{\mathcal{E}}}
\newcommand{\ext}{\ensuremath{\mathrm{ext}}}
\newcommand{\G}{\ensuremath{\mathcal{G}}}
\newcommand{\K}{\ensuremath{\mathcal{K}}}
\newcommand{\KL}{\ensuremath{\mathcal{KL}}}
\newcommand{\V}{\ensuremath{\mathcal{V}}}
\newcommand{\real}{\ensuremath{{\mathbb R}}}
\newcommand{\integer}{\ensuremath{{\mathbb Z}}}
\newcommand{\domain}{\ensuremath{\mathrm{dom\hspace{0.5mm}}}}
\newcommand{\cl}{\ensuremath{\mathrm{cl}}}
\newcommand{\sign}{\ensuremath{\mathrm{sign}}}
\newcommand{\rank}{\ensuremath{\mathrm{rank}}}
\newcommand{\uni}{\ensuremath{\mathrm{uni}}}
\begin{document}

\title{\Huge{Hybrid coupling rules for leaderless heterogeneous oscillators: uniform global asymptotic and finite-time synchronization} \thanks{Work supported by the ANR under grant HANDY  ANR-18-CE40-0010. Corresponding author S. Mariano. A preliminary version of this work was presented at the 21th IFAC World Congress, Berlin, Germany (\cite{IFACKura}). }}
\author{S. Mariano\thanks{S. Mariano is with the Department of Electrical and Electronic Engineering, University of Melbourne, Australia (e-mail:simone.mariano@unimelb.edu.au).},
R. Bertollo \thanks{R. Bertollo is with Department of Mechanical Engineering, TU Eindhoven, The Netherlands (e-mail:r.bertollo@tue.nl)},
R. Postoyan\thanks{R. Postoyan is with Universit\'e de
	Lorraine, CNRS, CRAN, F-54000 Nancy, France (e-mail:romain.postoyan@univ-lorraine.fr).},
L. Zaccarian\thanks{L. Zaccarian is with LAAS-CNRS, Universit\'e de Toulouse, CNRS, Toulouse, France
	and the Department of Industrial Engineering, University of Trento, Trento, Italy (e-mail: zaccarian@laas.fr).}}	

 \date{}
\maketitle

\begin{abstract}
We investigate the engineering scenario where the objective is to synchronize heterogeneous oscillators in a distributed fashion. The internal dynamics of each oscillator are general enough to capture their time-varying natural frequency as well as physical couplings and unknown bounded terms. A communication layer is set in place to allow the oscillators to exchange synchronizing coupling actions through a tree-like leaderless network. In particular, we present a class of hybrid coupling rules depending only on local information to ensure uniform global practical or asymptotic synchronization, which is impossible to obtain by using the Kuramoto model customarily used in the literature. We further show that the synchronization set can be made uniformly globally prescribed finite-time stable by selecting the coupling function to be discontinuous at the origin. Novel mathematical tools on non-pathological functions and set-valued Lie derivatives are developed to carry out the stability analysis. The effectiveness of the approach is illustrated in simulations where we apply our synchronizing hybrid coupling rules to models of power grids previously used in the literature.
\end{abstract}

\section{Introduction}
\label{sec:introduction-introduction}

The Kuramoto model (\cite{kuramoto1975self}) is used in various research fields to describe and analyze the dynamics of a broad family of systems with oscillatory behavior (\cite{AcebronKuramoto})
 including neuroscience (\cite{aoki2015self,TassDeepBrain,CuminUnsworthNeurons}), chemistry (\cite{FlameDynamics}), power networks (\cite{dorfler2012synchronization})
and natural sciences (\cite{LeonardAnimalGroups}), to cite a few (see also (\cite{strogatz2003sync})). 
The many application areas where Kuramoto dynamics emerged from physical considerations motivated a detailed analysis of the synchronization properties of the model, first for the all-to-all connection case (\cite{Aey1}), as originally described by Kuramoto, then for a
general interconnection layout (\cite{jad1}), with a focus on the derivation of the least conservative lower
bound for a stabilizing coupling gain (\cite{JafarpourBullo,chopra,BulloCriticalCoupling}).

Given its simple and accurate description of natural synchronization phenomena, the Kuramoto
model has also inspired the design of distributed communication protocols in engineering applications where the coupling function among different agents
can be arbitrarily assigned to achieve synchronization, as in the bio-inspired synchronization of moving particles in (\cite{sepulchre2007stabilization}), the synchronized acquisition of oceanographic data from Autonomous Underwater Vehicles (\cite{baldoni2007adaptive}), in clock synchronization (\cite{kiss2018synchronization}), in mobile sensors networks modeled as particles with coupled oscillator dynamics (\cite{paley2007oscillator}), in monotone coupled oscillators (\cite{mauroy2012contraction}) or in other engineering applications surveyed in (\cite{dorfler2014synchronization}).

While the sinusoidal coupling of Kuramoto models provides a powerful tool to obtain synchronization in coupled networks of oscillators, it also introduces some undesirable properties for engineering applications.
For example, when the network comprises oscillators with the same natural frequency,
it is now well-known that a system of Kuramoto oscillators admits, in addition to stable equilibria coinciding with the synchronization set, equilibria that are unstable (see, e.g., (\cite{Strogatz00,sepulchre2007stabilization})). The downside of this result is that the closer a solution is initialized to an unstable equilibrium, the longer it will take for phase synchronization to arise: we talk of \emph{non-uniform} convergence (\cite{sepulchre2007stabilization}). Although non-uniform synchronization may naturally characterize certain physical (\cite{OudMetronomes}) and biological systems, in general, it is not a desirable property for engineering applications. Indeed, the lack of uniformity may induce arbitrarily slow convergence to the attractor set and poor robustness properties (\cite{miller1997maneuvering}). Secondly, it may occur  in the Kuramoto model that the angular phase mismatch between adjacent oscillators remains constant and different from zero indefinitely: in this case we talk of \emph{phase locking} (\cite{Aey1}), which hampers the capability to reach asymptotic collective synchronization.
Thirdly, in critical applications, finite-time stability, instead of only asymptotic synchronization, may be a mandatory requirement (\cite{Polyakov11}).  

In this work, we investigate the engineering scenario where the goal is to synthesize local coupling rules to synchronize a set of heterogeneous oscillators. We assume the model of the oscillators to be general enough to capture not only their (time-varying) natural frequency but also physical coupling actions and other unknown bounded terms, thus being able to represent, among many possibilities, networks of Kuramoto oscillators with heterogeneous time-varying natural frequencies. Furthermore, without loss of generality, we introduce suitable resets of the oscillators' phase coordinates, so that they are unwrapped to evolve in a compact set, which includes $[-\pi,\pi]$ consistently with their angular nature.  Consequently, we define hybrid $2 \pi$-unwinding mechanisms to ensure the forward completeness of the oscillating solutions.  

To achieve uniform global phase synchronization, thereby overcoming the limitations of Kuramoto models, we equip the oscillators with a leaderless tree-like communication network to locally exchange coupling actions based on local information. This approach has been already exploited in the context of DC microgrids as in, e.g., (\cite{cucuzzella2018robust}), or (\cite{giraldo2019synchronisation}), for a network of Kuramoto oscillators equipped with a leader. The selection of a tree-like graph, which can always be derived in a distributed way by using the algorithms surveyed in (\cite{pandurangan2018distributed}), is also not new while addressing a problem of distributed cooperative control: see (\cite{HTsync}) in the context of hybrid dynamical systems, or (\cite{bai2011cooperative}) and (\cite{alagoz2012user}) for continuous-time networked systems and power grids, respectively. To define the coupling actions, we present novel hybrid coupling rules for which a Lyapunov-based analysis ensures uniform global (practical or asymptotic) phase synchronization. This result overcomes both the lack of uniform convergence and the phase-locking issues characterizing the Kuramoto model (\cite{sepulchre2007stabilization}). Interestingly, we can design the coupling rules in such a way that the network of oscillators behaves like the original Kuramoto models when the oscillators are near phase synchronization. Furthermore, due to the mild properties that we require for our hybrid coupling function, discontinuous selections are allowed, like in (\cite{coraggio2020distributed}). When the discontinuity is at the origin, we prove finite-time stability properties. In particular, exact synchronization can be reached in a prescribed finite-time  (\cite{Krstic17}), and convergence is thus independent of the initial conditions. Compared to the related works in (\cite{mauroy2012contraction}) and (\cite{Wu18}), the finite-time stability property we ensure is global and the convergence time can be arbitrarily prescribed, respectively. We resort for this purpose to non-smooth Lyapunov theory, in particular non-pathological Lyapunov functions and set-valued Lie derivatives (\cite{BacCer03}), for which we provide new results and novel proof techniques that are of independent interest. Due to the possible presence of discontinuities in the coupling function, the stability analysis is carried out by focusing on the regularization of the dynamics, as typically done in the hybrid formalism of \cite[Ch. 4]{TeelBook12}. Finally, simulations are provided to illustrate the theoretical guarantees and demonstrate the potential strength of our hybrid theoretical tools to address both first and second-order oscillators modeling generators in power grids considered in (\cite{dorfler2012synchronization}).

The recent submission (\cite{BossoLeaderKura}) (see also (\cite{BossoLeaderKuraCDC})) also uses hybrid tools to obtain uniform global synchronization guarantees in a Kuramoto setting but in a different context,  namely for second-order oscillators (where the $\omega_i$'s are states rather than external inputs) and, most importantly, for a network with a leader, which significantly changes the setting compared to the leaderless scenario investigated in this work, where no oscillator is insensible to the coupling actions from its neighbours. With respect to the preliminary version of this work in (\cite{IFACKura}), we include the next novel elements: relaxed requirements on the coupling function, time-varying, phase-dependent, (possibly) non-identical natural frequencies, generalizing the two-agents theorems of (\cite{IFACKura}) to the case of $n$ oscillators in addition to establishing a set of new stability results missing in (\cite{IFACKura}) (finite-time, practical properties and other ancillary results). 
 
The rest of the paper is organized as follows. Notation and background material are given in Section~\ref{sec:introduction}. The local hybrid coupling rules and oscillators network model are derived in Section~\ref{sec:model_disc}. In Section~\ref{sec:reg_mod}, we introduce the regularized version of the dynamics presented in Section~\ref{sec:model_disc}. In Section~\ref{sec:as_stability}, we present Lyapunov-based analysis tools establishing the asymptotic properties of our model, while prescribed finite-time results are given in Section~\ref{sec:fixed-time}. Numerical illustrations are provided in Section~\ref{sec:sims}, while most of the technical aspects of our proofs requiring non-smooth analysis concepts are gathered in Section~\ref{sec:proofs}. A few proofs of minor importance are relegated to the Appendix. 
\section{Preliminaries}
\label{sec:introduction}

\noindent
{\bf Notation}.
Let $\real:=(-\infty,\infty)$, $\real_{\geq0}:=[0,\infty)$, $\real_{>0}:=(0,\infty)$, $\integer_{\geq 0}:=\{0,1,\dots\}$, $\integer_{>0}:=\{1,2,\dots\}$ and $\integer_{>1}:=\{2,\dots\}$. The notation $\real^n$ denotes the $n$-dimensional Euclidean space with $n\in\integer_{>0}$ and $e_i$ is the $i$-th element of the natural base of $\real^n$, with $i \in \{1, \dots, n\}$. The notation $\mathbb{B}_n$ denotes the closed unit ball of $\real^n$ centered at the origin and we write $\mathbb{B}$ when its dimension is clear from the context. We denote with $\emptyset$ the \emph{empty set}. Given a vector $x\in\real^n$, we denote with $x_{\ell}$ its $\ell$-th element, with $\ell \in \{1,\dots, n\}$, $|x|$ is its Euclidean norm and $|x|_1$ is its 1-norm. The notation $\boldsymbol{0}_n$ denotes a vector  whose $n\in \integer_{>0}$ elements are all equal to $0$. The notation $\boldsymbol{1}_n$  denotes a vector whose $n\in \integer_{>0}$ elements are all equal to $1$. Given two vectors $x_1\in\real^n$ and $x_2\in\real^m$, we denote $(x_1,x_2):= [x_1^\top x_2^\top]^\top$. Given a matrix $A\in\real^{n \times m}$, $[A]_{\ell}$ stands for its $\ell$-th column and $(A)_{s}$ for its $s$-th row, where $\ell \in \{1,\dots, m\}$ and $s \in \{1,\dots, n\}$. Given a vector $x\in \real^n$ and a non-empty set $\A \subset \real^n$ with $n\in\integer_{>0}$, $|x|_\A :=\inf\{|x-y|:y\in \A \}$ is the distance of $x$ to $\A$. Given a set $\mathcal{S} \subset \real^n$,  cl($\mathcal{S}$) stands for its closure, $\partial\mathcal{S}$ is its boundary, int($\mathcal{S}$) is its interior and $\overline{\text{co}} \, \mathcal{S}$ is its closed convex hull. Given a finite set $\mathcal{S} \subset \real^n$,  $|\mathcal{S}|$ denotes its cardinal number. A function $f: \real^n \rightarrow \real_{\geq 0}$ is radially unbounded  if $f(x) \rightarrow \infty$  as $|x| \rightarrow \infty$. Let $f: \real^n \rightarrow  \real$ and $r \in \real$, we denote by $f(r)^{-1}$ the set $\{x \in \real^n : f(x) = r\}$. Let $X$ and $Y$ two non-empty sets,  $T:X \rightrightarrows Y$ denotes a \emph{set-valued} map from $X$ to  $Y$. We define the set-valued map $\sign: \real \rightrightarrows \{-1,1\}$ as $\sign(z)= -1$ when $z<0$, $\sign(z)=1$ when $z>0$ and $\sign(0)=\{-1,1\}$. We refer to class $\K$, $\K_\infty$ and $\KL$ functions as defined in \cite[Chap. 3]{TeelBook12}.
A function $f:\real \to \real$ is  \emph{piecewise continuous} if for any given interval $[a,b]$, with $a<b\in \real$, there exist a finite number of points $a\leq x_0<x_1<x_2<\dots<x_{k-1}<x_{k}\leq b$, with $k\in\integer_{\geq0}$  such that $f$ is continuous on $(x_{i-1},x_i)$ for any $i\in\{1,\dots,k\}$ and its one-sided limits exist as finite numbers. A function $f:\real \to \real$ is \emph{piecewise continuously differentiable} if  for any given interval $[a,b]$, with $a<b\in \real$, there exists a finite number of points $a\leq x_0<x_1<x_2<\dots<x_{k-1}<x_{k}\leq b$, with $k\in\integer_{\geq0}$  such that $f$ is continuous,  $f$ is continuously differentiable on $(x_{i-1},x_i)$ for any $i\in\{1,\dots,k\}$ and its one-sided limits of the difference quotient exist as finite numbers. We define with $\uni([a,b])$ the continuous uniform distribution over the compact interval $[a,b]$ with $a<b\in \real$.
\smallskip

\noindent
{\bf Background on graph theory}.
We denote an unweighted undirected graph as $\G_u = (\V, \E_u)$, where $\V$ is the set of vertices, or nodes, and $\E_u \subseteq \V \times \V$ is the set of edges, or arcs, composed by unordered pairs of nodes. If a pair $(i,j)$ of nodes belongs to $\E_u$, we say that those nodes are \emph{adjacent} and that $j$ is a \emph{neighbour} of $i$ and vice versa. Given two nodes $x$ and $y$ of an undirected graph $\G_u$, we define as \emph{path} from $x$ to $y$ a set of vertices starting with $x$ and ending with $y$, such that consecutive vertices are adjacent. If there is a path between any couple of nodes, the graph is called \emph{connected}, otherwise it is called \emph{disconnected}. We define as \emph{subgraph} of $\G_u$ a graph $\G_s = (\V_s, \E_s)$, where $\V_s \subset \V$ and $\E_s \subset \E_u$. An induced subgraph of $\G_u$ that is maximal, subject to be connected, is called a \emph{connected component} of $\G_u$. A \emph{cycle} is a connected graph where every vertex has exactly two neighbours.  An acyclic graph is a graph for which no subgraph is a cycle. A connected acyclic graph is called a \emph{tree}.

We denote an unweighted directed graph as $\G = (\V, \E)$, where $\E  \subseteq \V \times \V$ is composed of ordered pairs, therefore arcs have a specific direction. An arc going from node $i$ to node $j$ is denoted by $(i,j) \in \E$.
If a directed graph $\G$ is obtained choosing an arbitrary direction for the edges of an undirected graph $\G_u$, we call it an \emph{oriented} graph, and we say that $\G$ is obtained from an orientation of $\G_u$. If $(i,j) \in \E$, we say that $i$ belongs to the set of \emph{in-neighbors} $\mathcal{I}_j$ of $j$, while $j$ belongs to the set of \emph{out-neighbors} $\mathcal{O}_i$ of $i$. The union of $\mathcal{I}_i$ and $\mathcal{O}_i$ gives the more generic set of neighbors $\mathcal{V}_i := \mathcal{I}_i \cup \mathcal{O}_i$ of node $i$, containing all the nodes connected to it, in any direction. With $B \in \mathbb{R}^{n \times m}$ we denote the incidence matrix of graph $\G$ such that each column $[B]_\ell, \, \ell \in \{1,\ldots,m\}$, is associated to an edge $(i,j) \in \E$, and all entries of $[B]_\ell$ are zero except for $b_{i \ell} = -1$ (the tail of edge $\ell$) and $b_{j \ell} = 1$ (the head of edge $\ell$), namely $[B]_\ell = e_j - e_i$.

\section{Oscillators with hybrid coupling}
\label{sec:model_disc}
\subsection{Flow dynamics}
\label{subsec:flow_model_nagent}

Consider a networked system of $n$ heterogeneous oscillators. To achieve synchronization, the oscillators locally exchange coupling actions through the \emph{unweighted undirected tree}\footnote{As mentioned in the introduction, we can obtain a spanning tree using any of the distributed, finite-time algorithms described in (\cite{pandurangan2018distributed}).} $\G_{u}:= (\V, \E_u)$ made of $n$ nodes and thus $m=n-1$ edges,  $n\in \integer_{>1}$. We assign an arbitrary orientation to $\G_{u}$, which leads to the oriented tree $\G = (\V, \E)$. In this scenario, the oscillator phase corresponding to node $i$, with $i \in \mathcal{V}$, is denoted $\theta_i$ and has the next flow dynamics
\begin{align}
	\label{eq:flow_phase_nagent_unidrected}
	\dot{\theta}_i = \omega_i (\theta,\mathfrak{t}) + \kappa \sum_{j\in\mathcal{O}_{i}} \sigma(\theta_j - \theta_i + 2q_{ij}\pi)   - \kappa \sum_{j\in\mathcal{I}_{i}} \sigma(\theta_i - \theta_j + 2q_{ji}\pi), \quad (\theta,q) \in C
\end{align}
where $\omega_i (\theta,\mathfrak{t})$ is a possibly an unknown term modeling the dynamics of the  $i$-th oscillator, which can capture physical coupling actions, its time-varying natural frequency, and any other unknown bounded dynamics affecting the oscillator; see Section~\ref{sec:sims} for a numerical example. We assume that $\omega_i$  is locally bounded, measurable in $\mathfrak{t}$, piecewise continuous in $\theta$ and such that $\omega_i (\theta,\mathfrak{t}) \in \Omega:=[\omega_{\text{m}}, \omega_{\text{M}}]$ for any time $\mathfrak{t}\geq0$ and  $(\theta,q) \in C$, with $\omega_{\text{m}} \leq \omega_{\text{M}} \in  \real$, namely $\Omega$ is a compact interval of values\footnote{The assumption that $\omega_i$, for any $i\in\V$, takes values in the compact set $\Omega$ could be relaxed by only assuming boundedness of the mismatch $\sup\limits_{(\mathfrak{t}, \theta)  \in \real_{\geq 0} \times [-\pi - \delta, \pi + \delta] } |\omega_i (\mathfrak{t},\theta) - \omega_j (\mathfrak{t},\theta)|$ for any pair $(i,j) \in \E$, and adapting the proofs accordingly.}. 
Since \eqref{eq:flow_phase_nagent_unidrected} possibly has a discontinuous right-hand side, the notion of solution should be carefully defined, and we postpone this discussion to Section~\ref{sec:reg_mod}
(where we also prove the existence of solutions)
 to avoid overloading the exposition. 
For now it suffices to say that a function $\theta$ is a solution of \eqref{eq:flow_phase_nagent_unidrected} if it is absolutely continuous (i.e., it coincides with the integral of its derivative) and satisfies 
\eqref{eq:flow_phase_nagent_unidrected} almost everywhere. 

Phase $\theta_i$ in \eqref{eq:flow_phase_nagent_unidrected} evolves in the set $[-\pi-\delta,\pi+\delta]$, with $\delta \in (0,\pi)$, which thus covers the unit circle corresponding to phases taking values in $[-\pi,\pi]$. Parameter $\delta>0$ inflates the set of angles $[-\pi,\pi]$ to rule out Zeno solutions as explained in the following, see  Section~\ref{subsec:overall_model}. Thus, $\delta$ is a regularization parameter chosen to be  the same for each oscillator. Variable $q_{ij}$, with $(i,j) \in \E$, is a logic state taking values in $\{-1,0,1\}$, which is constant during flows. Its role is to unwind the difference between the two phases $ \theta_j$ and  $\theta_i$ through jumps. Indeed, since $\theta_j$ and $\theta_i$ are angles, to evaluate their mismatch, loosely speaking, we have to consider their minimum mismatch modulo $2 \pi$: $q_{ij}$ is introduced for this purpose as clarified in Section~\ref{subsec:jump_model_nagent}. The vectors $\theta$ and $q$ collect all the states $\theta_i$, $i \in \V$, and  $q_{ij}$, $(i,j) \in \E$, respectively,  as formalized in the following, together with the formal definition of the flow set $C$, namely a compact subset of the state-space where the solutions are allowed to evolve continuously. The gain $\kappa\in\real_{>0}$ is associated with the intensity of each coupling action and it is the same for each interconnection. Finally, the coupling action between each pair of nodes $(i,j) \in \E$ is defined as $\sigma(\theta_j - \theta_i + 2q_{ij}\pi)$, where $\sigma$ is the function used to penalize the phase mismatch $\theta_j - \theta_i + 2q_{ij}\pi$ between phases $\theta_j$ and $\theta_i$, and it satisfies the next property. 
\begin{property}
	\label{prop:sigma}
	Function $\sigma$ is piecewise continuous on $\domain \sigma
	:= [-\pi - \delta, \pi + \delta]$
	and satisfies
	\begin{enumerate}[label=\alph*), leftmargin=*, ref=\alph*]
		\item $\sigma(s) = -\sigma(-s)$ for any $ s \in \domain \sigma$,
		\label{prop:sigma_symm}
		\item there exists $\alpha \in \K$ such that $\sgn (s)\sigma(s) \geq \alpha(|s|)$ for any $ s \in \domain \sigma \setminus \{0\}$. \null \hfill $\square$
		\label{prop:sigma_sect}
	\end{enumerate}
\end{property}
Item~\ref{prop:sigma_symm}) of Property~\ref{prop:sigma} ensures that $\sigma$ is an  odd function and thus implies $\sigma(0)=0$, while item~\ref{prop:sigma_sect}) of Property~\ref{prop:sigma} guarantees that $\sigma(s)$ can only be zero at $s=0$. Notice that the sine function, customarily used in the classical Kuramoto model, satisfies item~\ref{prop:sigma_symm}) but not item~\ref{prop:sigma_sect}) of Property~\ref{prop:sigma}, which is fundamental to establish the global uniform stability result of this work. Examples of functions $\sigma$ satisfying  Property~\ref{prop:sigma} are depicted in Figure~\ref{fig:sigma}, together with the sine function for the sake of comparison. We emphasize that the mild assumptions of Property~\ref{prop:sigma} allow considering, among others, intuitive discontinuous selections such as the sign function of Figure~\ref{fig:sigma}, which leads to an interesting parallel between \eqref{eq:flow_phase_nagent_unidrected} and the ternary controllers considered in (\cite{DePersisFrasca}). Another possible example of $\sigma$ enjoying Property~\ref{prop:sigma} is $\sigma(s)=\sin(s)+u(s)$, where $u$ is such that items~\ref{prop:sigma_symm}) and \ref{prop:sigma_sect}) of Property~\ref{prop:sigma} hold. Note that when $u$ is negligible compared to $\sin$ in a neighborhood of the origin, the model behaves locally like the classical Kuramoto network. Also, Property 1 comes with no loss of generality as we consider the scenario where we have the freedom to design the coupling rules among the oscillators and thus $\sigma$.
\begin{figure}[ht!]
	\begin{center}
		\includegraphics[width=0.6 \columnwidth]{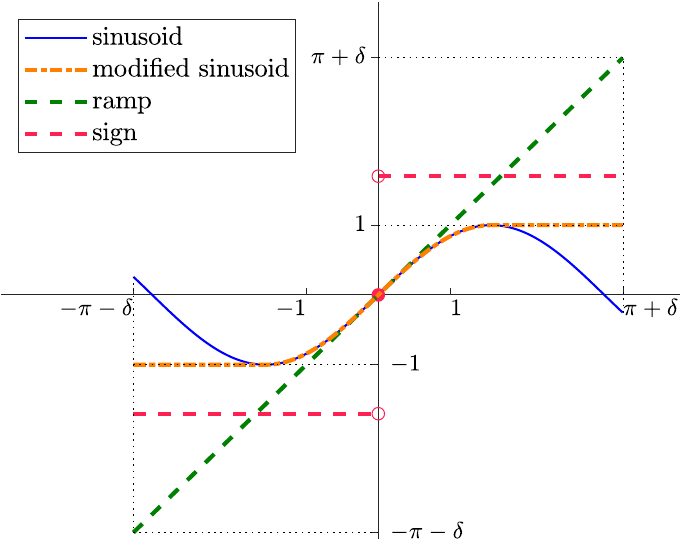}
	\end{center}
	\caption{Examples of functions $\sigma$ satisfying Property~\ref{prop:sigma}, together with the sine function (which does not satisfy Property~\ref{prop:sigma}).}
	\label{fig:sigma}
\end{figure}

Function $\sigma$ is only defined on $\domain \sigma = [- \pi - \delta, \pi+\delta]$ according to Property~\ref{prop:sigma}. 
We ensure in the sequel that the argument of $\sigma$ in \eqref{eq:flow_phase_nagent_unidrected}, namely $\theta_j - \theta_i +2q_{ij} \pi$, belongs to $\dom \sigma$ for all $(i,j)\in \E$, whenever $x \in C$, so that \eqref{eq:flow_phase_nagent_unidrected} is well-defined, see Section~\ref{subsec:jump_model_nagent}.

Collecting in the vector $\boldsymbol \sigma(x) \in \real^m$ all the coupling actions $\sigma(\theta_j - \theta_i + 2q_{ij}\pi)$, with $(i,j) \in \E$, using the same order as the columns of $B$, the flow dynamics in \eqref{eq:flow_phase_nagent_unidrected} is written as
\begin{align}
		\label{eq:flownD_compact}
		\dot{x} = 
		\begin{bmatrix}
			\dot \theta \\
			\dot q
		\end{bmatrix}
		= f(x,\boldsymbol{\omega}(\theta,\mathfrak{t}))&:=
		\begin{bmatrix}
			 \boldsymbol{\omega}(\theta,\mathfrak{t}) - B \kappa \boldsymbol{\sigma}(x) \\
			\boldsymbol{0}_m
		\end{bmatrix},
		& x \in C,
\end{align} 
with $\theta := (\theta_1, \ldots, \theta_n) \in [-\pi - \delta, \pi + \delta]^n$, $\boldsymbol{\omega}(\theta,\mathfrak{t}):= (\omega_1 (\theta,\mathfrak{t}), \ldots, \omega_n (\theta,\mathfrak{t}))\in\Omega^n$, and where $q \in \{-1,0,1\}^m$ is the vector stacking all the $q_{ij}$'s for $(i,j) \in \E$, ordered as in $\boldsymbol{\sigma}(x)$.
Thus, the overall state $x:=(\theta,q)$ evolves in the compact state space defined as
\begin{equation}
	X := [- \pi - \delta, \pi+\delta]^n \times \{-1, 0, 1\}^m.
	\label{eq:X}
\end{equation}
The flow set $C$ in \eqref{eq:flownD_compact} will be selected as the closed complement of the jump set $D$ introduced next.
\begin{figure*}[t]
	\includegraphics[width=0.32\textwidth]{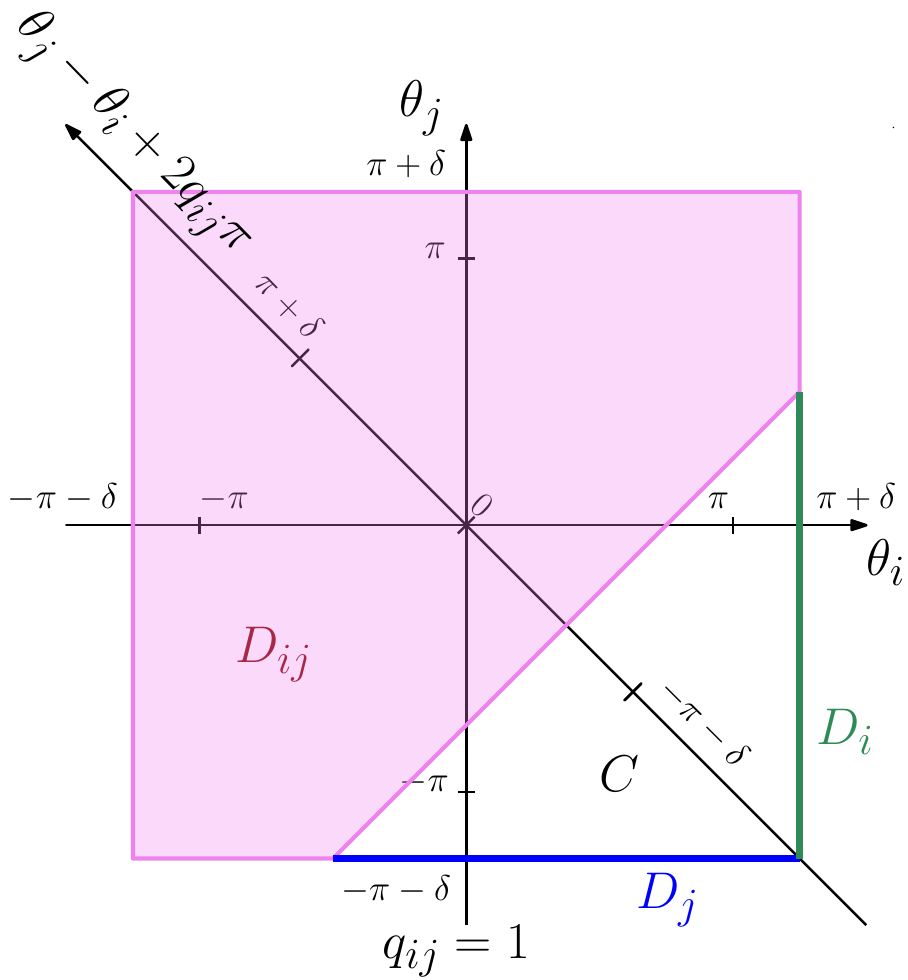} 
	\includegraphics[width=0.32\textwidth]{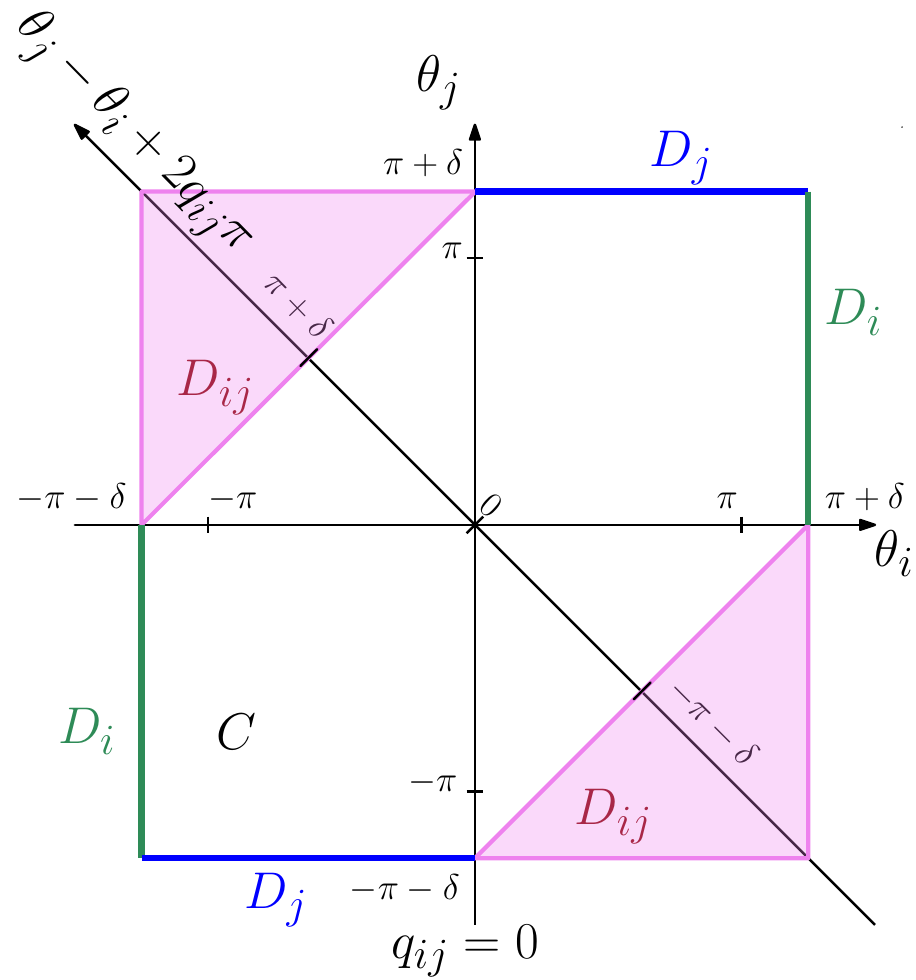} 
	\includegraphics[width=0.32\textwidth]{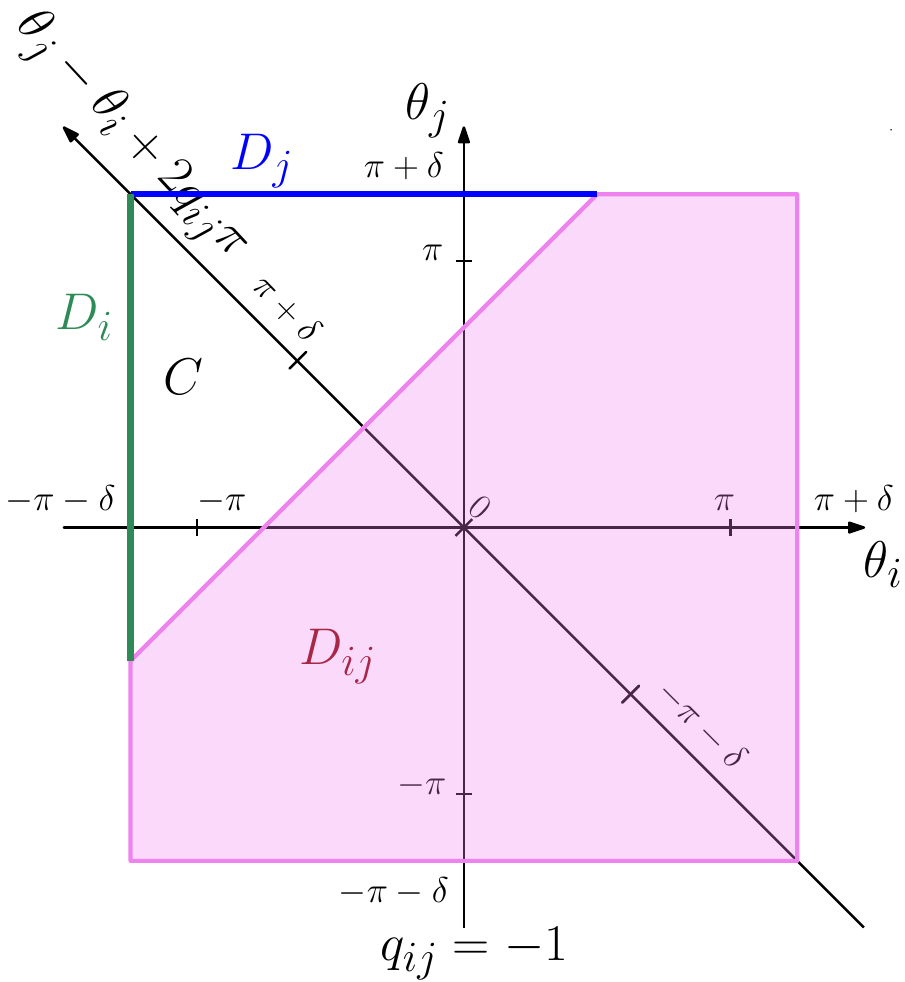}
	\caption{Projection of the flow and jump sets on $(\theta_i, \theta_j)$ for each value of $q_{ij}$.}
	\label{fig:statespace_ij}
\end{figure*}
\subsection{Jump dynamics}
\label{subsec:jump_model_nagent}
We introduce jump rules to constrain each phase $\theta_i$ to take values in $[- \pi - \delta, \pi+\delta]$ as well as to guarantee that the argument $\theta_j-\theta_i+2q_{ij}\pi$ of $\sigma$ in \eqref{eq:flow_phase_nagent_unidrected} belongs to  $ \dom \sigma = [- \pi - \delta, \pi+\delta]$ when flowing. To guarantee the latter property, define, for any $(i,j)\in \E$, the jump set
\begin{subequations}
	\label{eq:jump_rule_theta_ij}
\begin{align}
	\label{eq:Dij_multi}
	D_{ij} := \big\{ x \in X : |\theta_j - \theta_i + 2q_{ij}\pi| \geq \pi + \delta \big\},
\end{align}
and the associated difference inclusion
\begin{align}
	\label{eq:G_ij}
	x^+ = 	\begin{bmatrix}
		\theta^+ \\
		q^+
	\end{bmatrix}\in G_{ij}^\ext(x) &:=
	\begin{bmatrix}
		\theta \\
		G_{ij}(x)
	\end{bmatrix},
	&x \in D_{ij},
\end{align}
where the entries of $G_{ij}: X \rightrightarrows \{-1,0,1\}^m$ are given by
\begin{align}
	\label{eq:G_IJ}
	(G_{ij})_{(u,v)}\hspace{-1mm} :=\hspace{-1mm}
	\begin{cases}
		\hspace{-0.5mm}\argmin\nolimits\limits_{h \in \{-1,0,1\}} |\theta_j - \theta_i + 2h\pi|,\,\mbox{if}\, (u,v)=(i,j), \\
		\hspace{2mm}\{q_{uv}\},\hspace{27.5 mm}\mbox{otherwise},
	\end{cases}
\end{align}
\end{subequations}
with $(u,v),(i,j)\in \E$. Set $D_{ij}$ in \eqref{eq:Dij_multi} enforces a jump when $\theta_j - \theta_i + 2 q_{ij} \pi$ is not in $\domain \sigma$ for $(i,j)\in \E$. Across a jump, according to \eqref{eq:G_ij}, only $q_{ij}$ changes in such a way that $|\theta_j - \theta_i + 2 q_{ij} \pi| < \pi + \delta$ after a jump as formalized in the next lemma whose proof is given in Appendix~\ref{proof:jump_k_prop} to avoid breaking the flow of the exposition.
\begin{lemma}
	\label{lem:jump_k_prop}
	For any $(i,j)\in \E$ and $x \in D_{ij}$, any $x^+ \in G_{ij}^\ext(x)$ as per \eqref{eq:G_ij} satisfies $x^+ \in X$ and $|\theta_j^+ - \theta_i^+ + 2 q_{ij}^+ \pi| < \pi + \delta$. \null \hfill $\square$
\end{lemma}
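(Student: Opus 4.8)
The plan is to verify the two assertions separately, disposing of the membership $x^+\in X$ first (it is immediate) and then concentrating on the quantitative bound, which is the only part needing an argument. Since by \eqref{eq:G_ij} the jump leaves $\theta$ unchanged, i.e. $\theta^+=\theta$, and $x\in D_{ij}\subseteq X$ gives $\theta\in[-\pi-\delta,\pi+\delta]^n$, we immediately obtain $\theta^+\in[-\pi-\delta,\pi+\delta]^n$. For the logic variables, every component of $G_{ij}(x)$ is either an unchanged entry $q_{uv}\in\{-1,0,1\}$ or, for the $(i,j)$ component, an element of $\argmin_{h\in\{-1,0,1\}}|\theta_j-\theta_i+2h\pi|\subseteq\{-1,0,1\}$; hence $q^+\in\{-1,0,1\}^m$ and therefore $x^+\in X$.

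For the bound, first I would set $d:=\theta_j-\theta_i=\theta_j^+-\theta_i^+$ and observe that $\theta_i,\theta_j\in[-\pi-\delta,\pi+\delta]$ forces $d\in[-2\pi-2\delta,\,2\pi+2\delta]$. The key reformulation is that $|d+2h\pi|$ is exactly the distance from $d$ to the point $-2h\pi$, so minimizing over $h\in\{-1,0,1\}$ amounts to computing the distance from $d$ to the nearest element of the three-point set $\{-2\pi,0,2\pi\}$. Because $q_{ij}^+$ is by definition a minimizer of $h\mapsto|d+2h\pi|$, it suffices to show that this nearest-point distance is strictly smaller than $\pi+\delta$ for every $d$ in the above range; the value $-2q_{ij}^+\pi$ then realizes it, giving the claim for any selection $x^+\in G_{ij}^\ext(x)$ (ties in the $\argmin$ are harmless, since tied minimizers share the same value of $|d+2h\pi|$).

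The remaining step is an elementary case split on the location of $d$ relative to the midpoints $\pm\pi$ of the consecutive points of $\{-2\pi,0,2\pi\}$. By the symmetry $d\mapsto -d$ it is enough to treat $d\ge 0$: on $[0,\pi]$ the nearest point is $0$ with distance $d\le\pi$, while on $(\pi,2\pi+2\delta]$ the nearest point is $2\pi$ with distance $|d-2\pi|\le\max(\pi,2\delta)$ (the two contributions coming from $(\pi,2\pi]$ and from $(2\pi,2\pi+2\delta]$, respectively). Thus the distance is at most $\max(\pi,2\delta)$, and since $\delta\in(0,\pi)$ we have both $\pi<\pi+\delta$ and $2\delta<\pi+\delta$, whence $\max(\pi,2\delta)<\pi+\delta$ and the strict inequality follows. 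I expect the only subtlety to be bookkeeping rather than conceptual: one must retain the $\max(\pi,2\delta)$ bound so as to cover the regime $\delta>\pi/2$, and one must invoke $\delta<\pi$ — not merely $\delta>0$ — to close the estimate, which is precisely where the standing assumption $\delta\in(0,\pi)$ enters. I would also note that the hypothesis $x\in D_{ij}$ is in fact not needed for the conclusion, which holds for every $x\in X$; it merely records that the jump is enabled exactly when the pre-jump mismatch leaves $\dom\sigma$.
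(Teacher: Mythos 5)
Your proof is correct and follows essentially the same route as the paper's: the paper performs the identical case analysis on $\Delta\theta_{ij}=\theta_j-\theta_i$ (five explicit cases rather than your symmetry-reduced two), arrives at the same bound $|\theta_j^+-\theta_i^++2q_{ij}^+\pi|\leq\max(2\delta,\pi)<\pi+\delta$, and closes it with the same use of $\delta\in(0,\pi)$. Your nearest-point reformulation over $\{-2\pi,0,2\pi\}$ and the observation that $x\in D_{ij}$ is not actually needed for the bound are tidy but cosmetic refinements of the same argument.
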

 A second jump rule is introduced for when one of the oscillators  $i \in \V$ reaches $|\theta_i|=\pi+\delta$. In this case, a jump of $2\pi$ is enforced so that the phase then belongs to $(-\pi-\delta, \pi+\delta)$ while remaining the same modulo $2 \pi$. We define for this purpose
\begin{subequations}
	\label{eq:jump_rule_theta}
	\begin{align}
		\label{eq:g_Di}
			x^+ = 	\begin{bmatrix}
			\theta^+ \\
			q^+
		\end{bmatrix}= g_i(x) &:=
		\begin{bmatrix}
			g_{i,\theta}(x) \\
			g_{i,q}(x)
		\end{bmatrix},
		&x \in D_i, 
	\end{align}
	where the entries of $g_{i,\theta}: X \to [-\pi-\delta, \pi+\delta]^n$ and $g_{i,q}: X \to \{-1,0,1\}^m$ are defined as
	\begin{align}
		\label{eq:g_theta}
		(g_{i,\theta})_j &:=
		\begin{cases}
			\theta_i-\sgn(\theta_i)2\pi, &\mbox{if } j=i, \\
			\theta_j, & \mbox{otherwise},
		\end{cases} \\
		\label{eq:g_k}
		(g_{i,q})_{(u,v)} &:=
		\begin{cases}
			q_{uv} + \sgn(\theta_i), &\mbox{if } v=i, \\
			q_{uv} - \sgn(\theta_i), &\mbox{if } u=i, \\
			q_{uv}, &\mbox{otherwise},
		\end{cases}
	\end{align}
	with $j\in \V$ and $(u,v)\in \E$.
	The set $D_i,\ i \in \V$, is defined as
	\begin{align}
		\label{eq:Di_multi}
		D_i := \cl\big(\big\{ x \in X : x \notin D_{uv} &\mbox{ for any } (u,v)\in \E, \,\,\,\,\mbox{and } |\theta_i|=\pi+\delta\big\}\big).
	\end{align}
\end{subequations}
In view of \eqref{eq:Di_multi}, the jump rule \eqref{eq:g_Di}  is allowed when both $|\theta_i|=\pi+\delta$ and  $x$ is not in the interior of $D_{uv}$ for any $(u,v)\in \E$, where a jump may occur according to \eqref{eq:jump_rule_theta_ij}.

Note that each function $g_i$ is continuous on its (not connected) domain because $D_i$ does not contain points with $\theta_i = 0$ for any $i \in \V$.

Finally, switching/jumping ruled by \eqref{eq:jump_rule_theta} unwinds the phase $\theta_i$ without changing the phase mismatches between neighbours, defined as $(\theta_j - \theta_i + 2 q_{ij})$, as shown in the next lemma, whose proof is given in Appendix~\ref{proof:jump_theta_prop}.
\begin{lemma}
		\label{lem:jump_theta_prop}
		For each $i \in \V$ and $x \in D_i$,  $x^+ = g_i(x)$ implies $x^+ \in X$ and, for all $ (u,v) \in \E$,
		\begin{align}
			\label{eq:jump_theta_prop}
			\begin{cases}
				\theta_v^+ - \theta_u^+ + 2 q_{uv}^+ \pi =  \theta_v - \theta_u + 2 q_{uv} \pi, \\
				|\theta_i ^+|= \pi - \delta < \pi + \delta.
			\end{cases}
		\end{align}
  \null \hfill $\square$
\end{lemma}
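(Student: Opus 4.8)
The plan is to read off two facts directly from the closure definition of $D_i$ in \eqref{eq:Di_multi} and then deduce the three claims by elementary case analysis. Since $q$ ranges over the discrete set $\{-1,0,1\}^m$, the closure in \eqref{eq:Di_multi} acts only on the continuous coordinates $\theta$, so passing to the limit in the open conditions $|\theta_v-\theta_u+2q_{uv}\pi|<\pi+\delta$ yields, for every $x\in D_i$,
\begin{equation}
\label{eq:plan_Di_facts}
|\theta_i|=\pi+\delta \quad\text{and}\quad |\theta_v-\theta_u+2q_{uv}\pi|\le\pi+\delta\ \text{ for all } (u,v)\in\E .
\end{equation}
In particular $|\theta_i|=\pi+\delta\neq 0$, so $\sgn(\theta_i)\in\{-1,1\}$ is single-valued and $g_i$ is well defined on $D_i$. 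The bound $|\theta_i^+|=\pi-\delta$ then follows at once: if $\theta_i=\pi+\delta$ then $\theta_i^+=\theta_i-2\pi=-(\pi-\delta)$, and if $\theta_i=-(\pi+\delta)$ then $\theta_i^+=\theta_i+2\pi=\pi-\delta$; since $\delta\in(0,\pi)$ we get $|\theta_i^+|=\pi-\delta<\pi+\delta$, which proves the second line of \eqref{eq:jump_theta_prop} and shows $\theta_i^+\in[-\pi-\delta,\pi+\delta]$. As $g_{i,\theta}$ in \eqref{eq:g_theta} leaves every other phase untouched, this already gives $\theta^+\in[-\pi-\delta,\pi+\delta]^n$.

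Next I would prove the mismatch invariance (the first line of \eqref{eq:jump_theta_prop}) by direct computation over the position of an edge $(u,v)\in\E$ relative to $i$. Because $\G$ is a tree it has no self-loops, so either $i\notin\{u,v\}$ or exactly one endpoint equals $i$. If $i\notin\{u,v\}$, then $\theta_u,\theta_v$ and, by \eqref{eq:g_k}, $q_{uv}$ are all unchanged, and the identity is trivial. If $v=i$, then $\theta_v^+=\theta_i-\sgn(\theta_i)2\pi$, $\theta_u^+=\theta_u$ and $q_{uv}^+=q_{uv}+\sgn(\theta_i)$, so substituting into $\theta_v^+-\theta_u^+ +2q_{uv}^+\pi$ makes the two $\sgn(\theta_i)$ contributions cancel and returns $\theta_v-\theta_u+2q_{uv}\pi$. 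The case $u=i$ is symmetric, with $q_{uv}^+=q_{uv}-\sgn(\theta_i)$, and the sign terms cancel in the same way.

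The only genuinely delicate point, which I expect to be the main obstacle, is showing that $q^+\in\{-1,0,1\}^m$, i.e.\ that adding $\pm1$ to $q_{uv}$ on the edges incident to $i$ never produces $\pm2$. I would argue by contradiction using the second part of \eqref{eq:plan_Di_facts}. An overflow requires either $q_{uv}=1$ with increment $+1$, or $q_{uv}=-1$ with increment $-1$. In each of these four configurations (two for $v=i$, two for $u=i$) the prescribed increment fixes $\sgn(\theta_i)$ and hence the value of $\theta_i\in\{-(\pi+\delta),\pi+\delta\}$; combining this with $\theta_u,\theta_v\in[-\pi-\delta,\pi+\delta]$ forces the pre-jump mismatch $\theta_v-\theta_u+2q_{uv}\pi$ to have magnitude at least $2\pi$. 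Since $\delta<\pi$ implies $2\pi>\pi+\delta$, this contradicts \eqref{eq:plan_Di_facts}. Therefore no overflow occurs, $q_{uv}^+\in\{-1,0,1\}$ on every edge incident to $i$ and is unchanged elsewhere, whence $q^+\in\{-1,0,1\}^m$. Combined with the first paragraph this gives $x^+\in X$, completing the proof.
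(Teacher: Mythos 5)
Your proof is correct and follows essentially the same route as the paper's: the mismatch invariance is the identical cancellation computation over the three edge cases, and your ``no overflow in $q^+$'' contradiction is the paper's argument in contrapositive form, since your overflow configurations are exactly the values $q_{ij}=-\sgn(\theta_i)$ (resp.\ $q_{ji}=\sgn(\theta_i)$) that the paper excludes by showing the pre-jump mismatch would have magnitude at least $2\pi>\pi+\delta$, placing $x$ in $\mathrm{int}(D_{ij})$ and hence outside $D_i$. Your only presentational difference is to extract the closure consequence $|\theta_v-\theta_u+2q_{uv}\pi|\leq\pi+\delta$ on $D_i$ explicitly before deriving the contradiction, which the paper leaves implicit.
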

\subsection{Overall model}
\label{subsec:overall_model}
In view of Sections~\ref{subsec:flow_model_nagent}-B, the overall hybrid model is given by
\begin{subequations}
	\label{eq:hybr_multi}
	\begin{equation}
	\left\{	
	\begin{aligned}	    
			&\dot x&&\mkern-22mu= f(x,\boldsymbol{\omega}(\mathfrak{t})), && \quad x \in C,\\
			&x^+&&\mkern-22mu\in G(x), && \quad x \in D,
	\end{aligned}
    \right.
	\end{equation}
	where $f$ is defined in \eqref{eq:flownD_compact}, and using \eqref{eq:Dij_multi} and \eqref{eq:Di_multi},
	\begin{align}
		\label{eq:D_multi}
		D &:= \bigg(\bigcup\limits_{i = 1}^{n}D_i\bigg) \cup \bigg(\bigcup\limits_{(i,j) \in \E}D_{ij}\bigg), \\
		\label{eq:C_multi}
		C &= \cl(X \setminus D),
	\end{align}
	with $X$ defined in \eqref{eq:X}.
	The set-valued jump map $G$ is defined in terms of its graph, which is given by
	\begin{align}
		\label{eq:G_multi}
		\gph{G} := \bigg(\bigcup\limits_{i = 1}^{n}\gph{g_i}\bigg) \cup \bigg(\bigcup\limits_{(i,j) \in \E}\gph{G_{ij}^\ext}\bigg),
	\end{align}
\end{subequations}
with $g_i$ and $G_{ij}^\ext$ as per \eqref{eq:G_ij}, \eqref{eq:g_Di}-\eqref{eq:g_k}. Figure~\ref{fig:statespace_ij} shows three projections of the state space $X$ on the plane $(\theta_i,\theta_j)$ for some $(i,j)\in \E$, which corresponds to a union of three squares, one for each value of $q_{ij}$. 
\begin{remark}
Since we envision engineering applications, each phase $\theta_i$ with $i \in \V$  may be reconstructed from the angular measurements provided by sensors. Due to the wide variety of outputs provided by commercial sensors, a relevant task is to extrapolate a continuous measurement from a sensor that may return values whose wrapping around $2\pi$ is unknown; see, for example, (\cite{reigosa2018permanent}) and (\cite{anandan2017wide}). In this scenario, we can implement an algorithm to extract a continuous measurement of the phase satisfying \eqref{eq:flownD_compact}. In particular, following a rationale similar to that proposed in \cite [Figure 1]{mayhew2012path} for a setting with sampled measurements, we may continuously update an estimate $\theta_{i,e}$ of $\theta_{i}$. Indeed, for each sensor output $\theta_{i,so}$, we may extract the lifted measurement as the closest one to $\theta_{i,e}$ when performing $2\pi$-wraps $\theta_{i,e}^+=\theta_{i,so} + 2 \pi \argmin\nolimits\limits_{h \in \{-1,0,1\}}|\theta_{i,so} - \theta_{i,e} + 2 h \pi|$. This rule parallels the selection of (15) and (27b) of (\cite{mayhew2012path}) for the simpler case of $\mathbb{S}^1$ and scalar angular measurements.
\end{remark}
\section{Regularized hybrid dynamics}
\label{sec:reg_mod}
\begin{subequations}
\label{eq:hybr_multi_reg_full}
Model \eqref{eq:hybr_multi} is a time-varying hybrid system with a possibly  discontinuous right-hand side, due to the mild properties of $\sigma$, see Property~\ref{prop:sigma}. Hence, solutions may be understood in the generalized sense of (\cite{TeelBook12}). We consider for this purpose the regularization of \eqref{eq:hybr_multi}, so that stability properties for the regularized system carry over to the nominal and generalized solutions of \eqref{eq:hybr_multi}. In particular, following \cite[Page 79]{TeelBook12} we consider 
\begin{equation}
\label{eq:hybr_multi_reg}	
\left\{	
\begin{aligned}	    
&\dot x&&\mkern-22mu\in F(x), && \quad x \in C,\\
&x^+&&\mkern-22mu\in G(x), && \quad x \in D,
\end{aligned}
\right.
\end{equation}
where $C$, $D$, and $G$ coincide with those in \eqref{eq:hybr_multi}, and the set-valued map $F$ regularizes $f$ in \eqref{eq:flownD_compact} as
\begin{align}
\label{eq:F_reg_full}
 F(x)&:=
\begin{bmatrix}
\, \widehat{\Omega} - B \kappa \widehat{\Sigma}(x) \\
	\boldsymbol{0}_m
\end{bmatrix}, \quad  \forall x\in X,
\end{align} 
\end{subequations}
with the sets $\widehat{\Omega}:=\Omega \times \dots \times \Omega=[\omega_{\text{m}},\omega_{\text{M}}]^n$ and $\widehat{\Sigma}$ being the Krasovskii regularization of the function $\boldsymbol{\sigma}$ in \eqref{eq:flownD_compact}, see for more details \cite[Page 4]{hajek1979discontinuous}. More specifically, following \cite[Def. 4.13]{TeelBook12}, $\widehat{\Sigma}(x):= \bigcap\limits_{s>0} \overline{\text{co}} \, \boldsymbol{\sigma}((x + s \mathbb{B})\cap C)$. It is readily verified that, denoting by $\widehat{\sigma}$ the Krasovskii regularization of the scalar function $\sigma$, namely
\begin{equation}
\label{eq:sigmahat}
\widehat{\sigma}(\tilde{\theta}):= \bigcap\limits_{s>0} \overline{\text{co}} \, \sigma([\tilde{\theta} - s, \tilde{\theta} + s]\cap [-\pi - \delta, \pi + \delta]),
\end{equation}
for any $\tilde \theta \in [-\pi-\delta, \pi+\delta]$, then the set-valued map $\widehat{\Sigma}(x)$ is the stacking (with the same ordering as in $\boldsymbol{\sigma}$) of the set-valued maps $\widehat{\sigma}_{ij}$ defined as
\begin{equation}
\label{thetatilda}
\widehat{\sigma}_{ij}:=\widehat{\sigma}(\tilde \theta_{ij}), \qquad \tilde \theta_{ij}:= \theta_j - \theta_i+2q_{ij}\pi,
\end{equation}
for all $(i,j)\in \E$. 

Since the jump set, flow set, and jump map of hybrid system \eqref{eq:hybr_multi_reg_full} coincide with those of \eqref{eq:hybr_multi}, and 
for any $x\in X$ and ${\boldsymbol{\omega}} \in \widehat \Omega$, $f(x, \boldsymbol{\omega}) \in F(x)$, we study the stability properties of solutions of \eqref{eq:hybr_multi} by concentrating on the regularized dynamics \eqref{eq:hybr_multi_reg_full}.
In addition to clarifying the nature of solutions of \eqref{eq:hybr_multi}, which may, among other things, present sliding behavior (see Section~\ref{sec:fixed-time}),
the advantage of using \eqref{eq:hybr_multi_reg_full} instead of \eqref{eq:hybr_multi} is that \eqref{eq:hybr_multi_reg_full} satisfies the so-called hybrid basic conditions \cite[As. 6.5]{TeelBook12}, which ensure its well-posedness \cite[Thm. 6.30]{TeelBook12}.
\begin{lemma}
	\label{lem:hbc}
	System \eqref{eq:hybr_multi_reg_full} satisfies the hybrid basic conditions (HBC) of \cite[As. 6.5]{TeelBook12}. \null \hfill $\square$
\end{lemma}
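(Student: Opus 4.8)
The plan is to verify, in turn, the three items of the hybrid basic conditions in \cite[As. 6.5]{TeelBook12}: (i) closedness of the flow and jump sets $C$ and $D$; (ii) outer semicontinuity (OSC), local boundedness, and convex-valuedness of $F$, together with $C\subseteq\dom F$; and (iii) OSC and local boundedness of $G$, together with $D\subseteq\dom G$. Throughout, a key simplification is that the whole state space $X$ in \eqref{eq:X} is compact, so every local-boundedness requirement is automatic once a map is shown to take values in a bounded set. I would begin with (i): in \eqref{eq:D_multi} the set $D$ is a \emph{finite} union of the $D_{ij}$ and the $D_i$. Each $D_{ij}$ in \eqref{eq:Dij_multi} is the intersection of the compact $X$ with a superlevel set of the continuous map $x\mapsto|\theta_j-\theta_i+2q_{ij}\pi|$, hence closed, while each $D_i$ in \eqref{eq:Di_multi} is closed by its explicit $\cl(\cdot)$. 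A finite union of closed sets is closed, so $D$ is closed, and $C=\cl(X\setminus D)$ in \eqref{eq:C_multi} is closed by construction; both are subsets of the compact $X$.

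For (ii), the map $\widehat\Sigma$ in \eqref{eq:F_reg_full} is the Krasovskii regularization of $\boldsymbol\sigma$, which is locally bounded because $\sigma$ is piecewise continuous by Property~\ref{prop:sigma}. By the standard properties of Krasovskii regularizations (see \cite[Page 4]{hajek1979discontinuous} and \cite[Def. 4.13]{TeelBook12}), $\widehat\Sigma$ is then OSC, locally bounded, and convex-valued, with nonempty values. The map $F$ in \eqref{eq:F_reg_full} is obtained from $\widehat\Sigma$ by multiplication by the constant matrix $-\kappa B$ (padded with the constant row $\boldsymbol 0_m$) followed by the Minkowski sum with the fixed compact convex polytope $\widehat\Omega\times\{\boldsymbol 0_m\}$. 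Since linear images of OSC, locally bounded, convex-valued maps, as well as Minkowski sums with a compact convex set, preserve all three properties, $F$ is OSC, locally bounded, convex-valued, and nonempty-valued on $X$; moreover $C\subseteq X=\dom F$.

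Item (iii) is where the only genuinely delicate point arises. Since $G$, defined via its graph in \eqref{eq:G_multi}, takes values in the compact $X$, it is automatically locally bounded, and $D\subseteq\dom G$ by construction. For OSC it suffices to show that $\gph G$ is closed, and because $\gph G$ is a finite union of the graphs of the $g_i$ and of the $G_{ij}^\ext$, it suffices that each such graph is closed. Each $g_i$ is continuous on the closed set $D_i$ (as noted after \eqref{eq:Di_multi}), so $\gph g_i$ is closed. The nontrivial component is the set-valued entry $\argmin_{h\in\{-1,0,1\}}|\theta_j-\theta_i+2h\pi|$ in \eqref{eq:G_ij}. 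I expect this argmin to be the main obstacle: at parameter values where two indices $h$ attain the same minimal value, the map is genuinely multivalued, and one must check that its graph does not ``open up'' across these ties. The resolution is that minimization of a continuous cost over the \emph{finite} index set $\{-1,0,1\}$ yields an OSC map, which is precisely the property guaranteeing that the graph remains closed through such ties. Combining the three pieces, $\gph G$ is closed and hence $G$ is OSC. With items (i)--(iii) established, the regularized system \eqref{eq:hybr_multi_reg_full} satisfies the hybrid basic conditions.
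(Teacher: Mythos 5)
Your proposal is correct and follows essentially the same route as the paper: closedness of $C$ and $D$ for (A1), properties of the Krasovskii regularization for (A2), and closedness of $\gph G$ as a finite union of the closed graphs of the $g_i$ and $G_{ij}^\ext$ (with local boundedness free from compactness of $X$) for (A3). The only differences are matters of detail the paper leaves to citations --- you spell out the OSC of the finite-set $\argmin$ through ties and handle $F$ via preservation of OSC/convexity/local boundedness under linear images and Minkowski sums with $\widehat\Omega\times\{\boldsymbol 0_m\}$, where the paper directly invokes \cite[Lemma 5.16, Ex.~6.6]{TeelBook12} --- and both are sound.
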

\begin{proof}
	Sets $C$ and $D$, as defined in \eqref{eq:Dij_multi}, \eqref{eq:Di_multi}, \eqref{eq:D_multi}, \eqref{eq:C_multi} are closed, as required by \cite[As. 6.5 (A1)]{TeelBook12}.
	On the other hand, $F$ is the Krasovskii regularization of a function, which satisfies the HBC in view of its locally boundedness on $C$ and \cite[Lemma 5.16]{TeelBook12} as shown in \cite[Ex.  6.6]{TeelBook12}, thus \cite[As. 6.5 (A2)]{TeelBook12} is satisfied.
	Lastly, each $g_i$ and $G_{ij}^\ext$ has a closed graph, and so due to \eqref{eq:G_multi} the graph of $G$ is closed as well.
	As consequence, according to \cite[Lemma 5.10]{TeelBook12}, $G$ is outer semicontinuous and it is also locally bounded relative to $D$, thereby satisfying \cite[As. 6.5 (A3)]{TeelBook12}.
\end{proof}

Among other useful properties, Lemma~\ref{lem:hbc} guarantees intrinsic robustness of the stability property established later in Sections~\ref{sec:as_stability} and \ref{sec:fixed-time}, see \cite[Ch. 7]{TeelBook12}.
To conclude this section, we note that all maximal solutions to \eqref{eq:hybr_multi_reg_full} are complete\footnote{A solution to a hybrid system $\mathcal{H}$ is \emph{maximal} if it cannot be extended and it is \emph{complete} if its domain is unbounded.} and exhibit a (uniform) average  dwell-time property, thereby excluding Zeno phenomena. 
We emphasize that, through Lemmas~\ref{lem:jump_k_prop} and \ref{lem:jump_theta_prop}, the parameter $\delta$ plays a key role in establishing that no complete discrete solution exists. In particular, the fact that $\delta\neq 0$ and $\delta\neq \pi$ is key for being able to exclude Zeno solutions.
\begin{proposition}
	\label{prop:t-comp}
	All solutions to \eqref{eq:hybr_multi_reg_full} enjoy a uniform average dwell-time property. Namely, there exist $\tau_D\in\real_{>0}$ and $J_0\in\integer_{\geq 0}$ such that, for any solution $x$ to \eqref{eq:hybr_multi_reg_full} and for any pair of hybrid times such that  $\mathfrak{t}+\mathfrak{j}\geq \mathfrak{s}+\mathfrak{r}$ with $(\mathfrak{s},\mathfrak{r}),(\mathfrak{t},\mathfrak{j})\in \dom x$, $\frac{1}{\tau_D}(\mathfrak{t}-\mathfrak{s}) + J_0 \geq (\mathfrak{j}-\mathfrak{r})$. Moreover, if $x$ is maximal, then it is $\mathfrak{t}$-complete, i.e., $\sup_\mathfrak{t} \dom x  = \sup\{\mathfrak{t} \in \real_{\geq 0}:  \exists \mathfrak{j} \in \integer_{\geq 0}, \, (\mathfrak{t},\mathfrak{j}) \in \dom x\}=+\infty$. \null \hfill $\square$
\end{proposition}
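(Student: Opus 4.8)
The plan is to prove the two claims in turn: first the uniform average dwell-time (ADT) bound, and then, building on it, $\mathfrak{t}$-completeness of maximal solutions. The backbone of the ADT argument is to regard each of the $n$ nodes and $m$ edges as a separate \emph{jump channel}, and to show that once a channel jumps, a uniform positive amount of ordinary (flow) time must elapse before that same channel can jump again; since there are only $n+m$ channels, this caps the number of jumps per unit of flow time. A preliminary ingredient is a flow-velocity bound: since $\sigma$ is piecewise continuous on the compact $\dom\sigma$ (Property~\ref{prop:sigma}) it is bounded, hence so is its Krasovskii regularization $\widehat\sigma$, and together with the compactness of $\widehat\Omega$ this furnishes a constant $M>0$ with $|\dot\theta_i|\leq M$ for every $i\in\V$ along every flow of \eqref{eq:hybr_multi_reg_full}; because $q$ is constant during flows, $|\tfrac{d}{dt}\tilde\theta_{ij}|\leq 2M$ on each edge.

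Next comes the core estimate, which is where $\delta\in(0,\pi)$ is essential, and which I expect to be the main obstacle. For an edge channel $(i,j)$, Lemma~\ref{lem:jump_k_prop} gives $|\tilde\theta_{ij}^+|<\pi+\delta$ after a $G_{ij}$-jump; refining this, since $\tilde\theta_{ij}^+=\min_{h\in\{-1,0,1\}}|\theta_j-\theta_i+2h\pi|$ and $\theta_j-\theta_i$ ranges over $[-2\pi-2\delta,2\pi+2\delta]$, a direct case analysis gives $|\tilde\theta_{ij}^+|\leq\max\{\pi,2\delta\}=\pi+\delta-\min\{\delta,\pi-\delta\}$. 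Thus after firing, the channel lands at distance at least $\mu:=\min\{\delta,\pi-\delta\}>0$ below the re-triggering threshold $\pi+\delta$, so re-entering $D_{ij}$ requires $|\tilde\theta_{ij}|$ to grow by $\mu$, costing flow time at least $\mu/(2M)$; crucially, $g$-jumps leave $\tilde\theta_{ij}$ unchanged (Lemma~\ref{lem:jump_theta_prop}) and so cannot shortcut this. For a node channel $i$, Lemma~\ref{lem:jump_theta_prop} places $\theta_i^+$ at $|\theta_i^+|=\pi-\delta$, whereas $g_i$ re-fires only at $|\theta_i|=\pi+\delta$; since no other jump alters $\theta_i$, a displacement of at least $2\delta$ is required, costing flow time at least $2\delta/M$. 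Setting $\tau_*:=\min\{\mu/(2M),\,2\delta/M\}>0$, consecutive jumps of any fixed channel are separated by at least $\tau_*$ units of flow time. Verifying that this margin is \emph{uniform} and strictly positive precisely because $0<\delta<\pi$ is the delicate point.

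To assemble the bound, fix $(\mathfrak{s},\mathfrak{r}),(\mathfrak{t},\mathfrak{j})\in\dom x$ with $\mathfrak{t}+\mathfrak{j}\geq\mathfrak{s}+\mathfrak{r}$. For each channel $c$, its number of jumps $J_c$ in this window satisfies $(J_c-1)\tau_*\leq\mathfrak{t}-\mathfrak{s}$, hence $J_c\leq(\mathfrak{t}-\mathfrak{s})/\tau_*+1$; summing over the $N:=n+m$ channels yields $\mathfrak{j}-\mathfrak{r}=\sum_c J_c\leq\tfrac{N}{\tau_*}(\mathfrak{t}-\mathfrak{s})+N$, which is the claim with $\tau_D:=\tau_*/N$ and $J_0:=N$. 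It is worth recording that Lemmas~\ref{lem:jump_k_prop}--\ref{lem:jump_theta_prop} also cap the length of a purely discrete burst: a $G_{ij}$-jump moves edge $(i,j)$ strictly out of $D_{ij}$ while leaving every other mismatch and all of $\theta$ untouched, and a $g_i$-jump sets $|\theta_i|\neq\pi+\delta$ while preserving all mismatches, so at a single instant of ordinary time each channel fires at most once, bounding any burst by $N$ jumps.

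Finally, for completeness and $\mathfrak{t}$-completeness I would invoke the existence/viability result for well-posed hybrid systems \cite[Prop. 6.10]{TeelBook12}, applicable by Lemma~\ref{lem:hbc}. Finite escape during flow is impossible since the state evolves in the compact set $X$; jumps map into $C\cup D=X$ by Lemmas~\ref{lem:jump_k_prop} and \ref{lem:jump_theta_prop}; and at every $x\in C\setminus D$ the strict inequalities $|\tilde\theta_{uv}|<\pi+\delta$ and $|\theta_i|<\pi+\delta$ hold, so a full $\theta$-neighborhood at fixed $q$ lies in $C$ and $F(x)\subseteq T_C(x)$ (its $q$-component vanishes and $\theta$ is interior to the box), giving the tangent-cone condition. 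Hence every maximal solution is complete. Completeness together with the ADT bound then forces $\mathfrak{t}$-completeness: if $\sup_\mathfrak{t}\dom x=:T$ were finite, completeness would force $\sup_\mathfrak{j}\dom x=\infty$, contradicting $\mathfrak{j}\leq T/\tau_D+J_0<\infty$ obtained from the ADT inequality with $(\mathfrak{s},\mathfrak{r})=(0,0)$.
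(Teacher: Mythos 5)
Your proof is correct and follows essentially the same route as the paper's: your per-channel dwell-time estimates are precisely the paper's bounds \eqref{eq:eps1}--\eqref{eq:eps2} combined with the uniform bound on $F$ over the compact $X$ and the observation that jumps of one channel leave the triggering conditions of all other channels unchanged, and your completeness argument invokes \cite[Prop. 6.10]{TeelBook12} with the same three checks (tangent-cone viability on $C\setminus D$, boundedness of $X$, and $G(D)\subset C\cup D$) before concluding $\mathfrak{t}$-completeness from the average dwell time. Your only departure is to make the paper's qualitative dwell-time argument fully quantitative (explicit $\mu=\min\{\delta,\pi-\delta\}$, $\tau_*$, $\tau_D=\tau_*/(n+m)$, $J_0=n+m$), a sharpening of the same method rather than a different one.
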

\begin{proof}
	We first recall that, in view of Lemma~\ref{lem:jump_theta_prop}, for any $i\in\V$, $x\in D_i$  and $x^+ = g_i (x)$ 
	\begin{equation}
		\label{eq:eps1}
		|\theta_i^+|= \pi - \delta < \pi + \delta,
	\end{equation}
	while the other $\theta_j$, $j \neq i\in\V$ remain unchanged across such a jump and so does $\theta_j-\theta_i + 2q_{ij}\pi$ for all  $(i,j)\in \E$. 
	We also recall that, from Lemma~\ref{lem:jump_k_prop}, for any $(i,j) \in \E$, $x\in D_{ij}$ and $x^+ \in G_{ij}^\ext(x)$
	\begin{equation}
	\label{eq:eps2}
	|\theta_j^+-\theta_i^+ + 2q_{ij}^+\pi| \leq \max(2\delta,\pi) < \pi + \delta,
	\end{equation}
	while the $\theta_u$ and the other $\theta_h-\theta_k + 2q_{hk}\pi$ remain unchanged for any $u\in\V$ and $(h,k)\neq(i,j) \in \E$.
  From uniform global boundedness of the right hand-side $F(x)$ of the flow dynamics (a consequence of the local boundedness of $F$ and of the boundedness of $X$), all solutions satisfy a global Lipschitz property with respect to the flowing time and \eqref{eq:eps1} and \eqref{eq:eps2} imply a uniform average dwell time on the jumps from $\bigcup\limits_{i = 1}^{n}D_i$ and $\bigcup\limits_{(i,j)\in \E}D_{ij}$, respectively.  Finally, the uniform average dwell time property of solutions jumping from $D$ derives directly from Lemma~\ref{lem:jump_theta_prop} and \eqref{eq:G_ij}-\eqref{eq:G_IJ}. Indeed, \eqref{eq:G_ij}-\eqref{eq:G_IJ} imply that jumping from $D_{ij}$ does not affect the triggering condition in $D_{u}$ or  $D_{hk}$, for any $u\in\V$ and $(i,j)\neq(h,k) \in \E$. In a similar way, Lemma~\ref{lem:jump_theta_prop} implies that jumping from $D_{i}$ does not affect the triggering condition in $D_{j}$ or  $D_{ij}$ or $D_{ji}$, for any $j\neq i \in\V$ and $(i,j)\in \E$ or $(j,i)\in \E$. Thus,  a uniform average dwell time on jumps from $D$ stems from the global Lipschitz property of the solutions  with respect to the flowing time, along with \eqref{eq:eps1} and \eqref{eq:eps2}. 
	
	We now check that maximal solutions of \eqref{eq:hybr_multi_reg_full} are complete by proving that the conditions of \cite[Prop. 6.10]{TeelBook12} hold.
	First, consider $\xi \in C \setminus D$. Because $\partial C \subset D$, then $\xi \in \mbox{int}(C)$. Therefore, there exists a neighbourhood $U$ of $\xi$ such that $U\subset C \setminus D$. Thus, for any $x \in U \subset C \setminus D$, the tangent cone\footnote{The \emph{tangent cone} to a set $S \subset \real^n$ at a point $x \in \real^n$, denoted $T_S(x)$, is the set of all vectors $w\in \real^n$ for which there exist $x_i \in S$, $\tau_i > 0$ with $x_i \rightarrow x$, and $\tau_i \searrow 0$, such that $w=  \displaystyle{\lim_{i \to \infty}\frac{x_i - x}{\tau_i}}$.} to $C$ at $x$ is $T_C (x)=\real^n \times \{ 0 \}^m$.
	Hence, any $\psi \in F(x)$ in \eqref{eq:F_reg_full} satisfies $\{\psi \} \cap  T_C(x) = \{\psi \}  \neq \emptyset$, so that
	\cite[Prop. 6.10 (VC)]{TeelBook12} holds for any  $\xi \in C \setminus D$.
	On the other hand, the state space $X$ in \eqref{eq:X} is bounded, thus item (b) in \cite[Prop. 6.10]{TeelBook12} is excluded.
	To rule out item (c) in \cite[Prop. 6.10]{TeelBook12}, from Lemmas~\ref{lem:jump_k_prop} and \ref{lem:jump_theta_prop}, we have $G(D) \subset X = C \cup D$.
	Hence, we can apply \cite[Prop. 6.10]{TeelBook12} to conclude that all maximal solutions are complete, thus obtaining $\mathfrak{t}$-completeness of solutions in view of their uniform average dwell time property established above.
\end{proof}
\section{Asymptotic stability properties}
\label{sec:as_stability}
\subsection{Synchronization set and its stability property}
\label{subsec:sync_set}
To analyze the synchronization properties of system \eqref{eq:hybr_multi_reg_full}, 
consider the set
\begin{equation}
	\A := \{x \in X : \theta_i = \theta_j + 2 q_{ij}\pi, \, \forall(i,j) \in \E\}.
	\label{eq:attractor_original}
\end{equation}
Because the network is a tree, for any $x\in\A$, the phases $\theta_i$  and  $\theta_j$ coincide modulo $2\pi$ not only  for any $(i,j) \in \E $ but also for any $i \in \V$ and $j \in \V \setminus \{i\}$. In other words, when $x \in \A$, all the oscillators are synchronized even if they do not share a direct link. We therefore call $\A$ the \emph{synchronization} set. Our main result below establishes a practical asymptotic stability result for $\A$, as a function of the coupling gain $\kappa$ appearing in the flow map \eqref{eq:flow_phase_nagent_unidrected}. The ``practical'' tuning of $\kappa$ depends on the following two parameters: 
\begin{equation}
\label{eq:pars}
\underline\lambda:= \lambda_{\min}(B^\top B), \, \overline{\omega}:=(n-1)|\omega_{\text{M}}-\omega_{\text{m}}|\geq \max\limits_{\substack{\widehat{\boldsymbol{\omega}} \in \widehat{\Omega}}}|B^\top \widehat{\boldsymbol{\omega}}|_1.  
\end{equation}
Parameter $\underline\lambda$ ensures a 
detectability property of the distance $|x|_{\A}$ 
in \eqref{eq:attractor_original} from the norm  $|\widehat{\boldsymbol{\sigma}}|$, for any $\widehat{\boldsymbol{\sigma}} \in \widehat{\Sigma}(x)$. In particular, we have from the results in (\cite{GodsilGraphTheory}). 
 \begin{lemma}
 	\label{lem:B_prop}
 	Since $\G$ is a tree, $\underline \lambda := \lambda_{\min}(B^\top B)>0$. \null \hfill $\square$
 \end{lemma}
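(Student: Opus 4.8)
The plan is to exploit that $B^\top B$ is automatically symmetric and positive semidefinite, since $x^\top B^\top B x = |Bx|^2 \ge 0$ for every $x\in\real^m$, so all its eigenvalues are real and nonnegative; the claim $\underline\lambda>0$ is therefore equivalent to $B^\top B$ being \emph{positive definite}, i.e.\ to the injectivity of $B:\real^m\to\real^n$. Concretely, $\underline\lambda=\lambda_{\min}(B^\top B)=\min_{|x|=1}|Bx|^2$, which is strictly positive if and only if $Bx\neq 0$ for all $x\neq 0$, equivalently $\rank B = m = n-1$. So I would first reduce the statement to showing that the $m=n-1$ columns of $B$ are linearly independent.

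Then I would establish this full-column-rank property from the tree structure, following the standard algebraic graph theory of (\cite{GodsilGraphTheory}). The cleanest route is via the \emph{left} null space: for an edge $(i,j)\in\E$ the corresponding column is $[B]_\ell = e_j - e_i$, so the $\ell$-th entry of $B^\top y$ equals $y_j - y_i$, and $B^\top y = 0$ forces $y_i = y_j$ along every edge. Since a tree is connected, this propagates to $y_i = y_j$ for all $i,j\in\V$, whence $\nulls(B^\top) = \mathrm{span}\{\boldsymbol{1}_n\}$ has dimension one and $\rank B = n-1$. Because a tree has exactly $m = n-1$ edges, the rank--nullity theorem gives $\dim\nulls(B) = m - \rank B = 0$, so $B$ is injective and $B^\top B$ is positive definite, yielding $\underline\lambda>0$. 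An equally valid, more self-contained alternative is a direct induction on $n$ by leaf removal: a finite tree with $n\ge 2$ nodes has a degree-one vertex, whose row of $B$ has a single nonzero entry; hence in any relation $Bx=0$ the coefficient of the unique incident edge must vanish, and deleting that leaf and edge leaves a tree on $n-1$ nodes to which the inductive hypothesis applies.

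The only delicate point---and the main (if modest) obstacle---is that both connectivity and acyclicity of $\G$ are genuinely needed and must be invoked explicitly: connectivity pins $\nulls(B^\top)$ down to the one-dimensional span of $\boldsymbol{1}_n$, while acyclicity guarantees $m=n-1$ so that rank--nullity closes the gap to $\dim\nulls(B)=0$. For a connected graph possessing a cycle, $\nulls(B)$ is nontrivial (it is the cycle space), and $B^\top B$ would be only positive semidefinite with $\underline\lambda=0$; thus the essential content of the lemma is precisely that a tree contains no cycle, everything else being routine linear algebra.
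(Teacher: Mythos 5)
Your proposal is correct and takes essentially the same route as the paper: both reduce the claim to $\rank(B)=n-1$, deduce $\dim(\nulls(B))=0$ by rank--nullity since $B$ has exactly $m=n-1$ columns, and conclude positive definiteness of $B^\top B$ from $y^\top B^\top B y=|By|^2>0$ for $y\neq \boldsymbol{0}_{n-1}$. The only difference is that the paper invokes \cite[Thm.~8.2.1]{GodsilGraphTheory} as a black box for the rank statement, whereas you prove it directly (via the left null space $\nulls(B^\top)=\mathrm{span}\{\boldsymbol{1}_n\}$ from connectivity, or by leaf-removal induction), a self-contained strengthening that changes nothing in the argument's structure.
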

\begin{proof}
Consider a tree graph $\G$ with incidence matrix $B \in \real ^{n \times n-1}$. By definition $\G$ has only one bipartite connected component. From \cite[Thm. 8.2.1]{GodsilGraphTheory}, $\rank(B)=n-1$ and thus $\dim(\nulls(B))=0$ by way of the fundamental theorem of linear algebra. Therefore, $By \neq 0$ for any $y\in\real^{n-1}\setminus\{\boldsymbol{0}_{n-1}\}$ which implies $y^\top B^\top B y = |B y|^2>0$ for any $y\in\real^{n-1}\setminus\{\boldsymbol{0}_{n-1}\}$. Consequently all the eigenvalues of $B^\top B$ are strictly positive, thus completing the proof.
\end{proof}
\begin{remark}
	\label{rem:lmbdamin}
	The smallest eigenvalue $\underline \lambda$ of $B^\top B$ and its positivity established in Lemma~\ref{lem:B_prop} play a fundamental role on the speed of convergence of the closed-loop solutions to the synchronization set. As $\G$ is a tree, positivity of $\underline \lambda$ is ensured by Lemma~\ref{lem:B_prop}. In more general cases with $\G$ not being a tree, the leaderless context considered in this paper, where the synchronized motion emerges from the network, poses significant obstructions to achieving global results. A simple insightful example of a cyclic graph is discussed in Section~\ref{subsec:counter}, which provides a clear illustration of the motivation behind requiring that $\G$ is a tree. We emphasize that a similar obstruction is experienced in prior work (\cite{HTsync}) where, in a different context, a similar assumption on the network is required.
\end{remark}
We are now ready to state the main result of this paper, corresponding to a practical $\KL$ bound on the distance of $x$ from $\A$ that is uniform in $\kappa$. We state the bound in our main theorem below, whose proof is given in Section~\ref{subsec:proofProp2}, and then illustrate its relevance on a number of corollaries given next.
\begin{theorem}
	\label{thm:practical_stability}
	Given set $\A$ in \eqref{eq:attractor_original}, there exists a class $\KL$ function $\beta_\circ$ and a class $\K$ gain $\gamma_\circ$, both of them independent of $\kappa$, such that, for any $\kappa>0$, all solutions $x$ of \eqref{eq:hybr_multi_reg_full} satisfy 
	\begin{equation}
	|x(\mathfrak{t},\mathfrak{j})|_{\A}\leq \beta_\circ(|x(0,0)|_{\A},\kappa\mathfrak{t}) + \gamma_\circ((\kappa\underline \lambda)^{-1}c\overline\omega),
	\label{eq:KLpract}
	\end{equation}
	 for all $(\mathfrak{t},\mathfrak{j})\in \dom x$ and with $c:=\max\limits_{\substack{s \in \dom \sigma}}\widehat \sigma(s)$. \null \hfill $\square$
\end{theorem}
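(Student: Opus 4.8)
The plan is to certify an ISS-type estimate of $|x|_{\A}$ with respect to the frequency mismatch $\overline\omega$ by means of a single non-smooth Lyapunov function, and then convert the resulting hybrid decrease into the $\KL$-plus-$\K$ bound \eqref{eq:KLpract}. The natural candidate is the edge-potential energy
\[
V(x):=\sum_{(i,j)\in\E}W(\tilde\theta_{ij}),\qquad W(s):=\int_0^s\sigma(\tau)\,d\tau,\qquad \tilde\theta_{ij}:=\theta_j-\theta_i+2q_{ij}\pi,
\]
which vanishes exactly on $\A$ since $x\in\A$ iff all mismatches $\tilde\theta_{ij}$ are zero. Property~\ref{prop:sigma} makes $W$ even and positive definite, and item~\ref{prop:sigma_sect}) gives $W(s)\geq\int_0^{|s|}\alpha(\tau)\,d\tau$ while $W(s)\leq c\,|s|$. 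Combining these with Lemma~\ref{lem:B_prop}, which, writing $\tilde\theta=B^\top\theta+2\pi q$, yields the detectability $|\tilde\theta|^2\geq\underline\lambda\,|x|_{\A}^2$ together with $|\tilde\theta|\leq\sqrt{\lambda_{\max}(B^\top B)}\,|x|_{\A}$ on the compact set $X$, I would first establish class-$\K$ sandwich bounds
\[
\underline\alpha(|x|_{\A})\ \leq\ V(x)\ \leq\ \overline\alpha(|x|_{\A}),\qquad x\in X.
\]

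The second step is to check that $V$ cannot grow at jumps. Lemmas~\ref{lem:jump_k_prop} and~\ref{lem:jump_theta_prop} guarantee that every admissible jump of \eqref{eq:hybr_multi_reg_full} leaves each mismatch $\tilde\theta_{ij}$ unchanged, hence $V(x^+)=V(x)$ for all $x\in D$ and $x^+\in G(x)$; $V$ is therefore invariant across jumps and all the work is confined to the flow.

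The core step is the flow analysis, which I would carry out through the set-valued Lie derivative tools of (\cite{BacCer03}) because $\sigma$, and thus $\nabla V$, may be discontinuous. The crucial ingredient is to verify that $V$ is non-pathological and that the Clarke generalized gradient of the potential coincides with the Krasovskii regularization of $\sigma$, i.e.\ $\partial W(s)=\widehat\sigma(s)$, so that differentiating $V$ along $F$ (with $q$ constant in flow, $\tilde\theta=B^\top\theta+2\pi q$) involves only selections $\widehat{\boldsymbol\sigma}\in\widehat\Sigma(x)$ and $\boldsymbol\omega\in\widehat\Omega$. Every element of the set-valued Lie derivative then takes the form $\widehat{\boldsymbol\sigma}^\top\big(B^\top\boldsymbol\omega-\kappa B^\top B\,\widehat{\boldsymbol\sigma}\big)$. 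Using $|\widehat\sigma_{ij}|\leq c$ to get $\widehat{\boldsymbol\sigma}^\top B^\top\boldsymbol\omega\leq c\,|B^\top\boldsymbol\omega|_1\leq c\,\overline\omega$ from \eqref{eq:pars}, bounding $-\kappa\,\widehat{\boldsymbol\sigma}^\top B^\top B\,\widehat{\boldsymbol\sigma}\leq-\kappa\underline\lambda|\widehat{\boldsymbol\sigma}|^2$ via Lemma~\ref{lem:B_prop}, and noting that the sector bound item~\ref{prop:sigma_sect}) transfers to the regularization so that $|\widehat{\boldsymbol\sigma}|^2\geq\sum_{(i,j)\in\E}\alpha(|\tilde\theta_{ij}|)^2$, the largest element of the Lie derivative satisfies
\[
c\,\overline\omega-\kappa\underline\lambda\sum_{(i,j)\in\E}\alpha(|\tilde\theta_{ij}|)^2\ \leq\ c\,\overline\omega-\kappa\underline\lambda\,\chi(V(x)),
\]
for a suitable $\chi\in\K$ assembled from $\alpha$, $\overline\alpha$ and the graph constants. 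I expect the genuine difficulty to reside precisely here: proving $\partial W=\widehat\sigma$, verifying non-pathologicity, and justifying that the worst case over the set-valued $\widehat\Sigma(x)$ still delivers a positive-definite decrease — exactly the set-valued Lie derivative machinery that the paper advertises as new.

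Finally, since $V$ is invariant at jumps and maximal solutions are $\mathfrak{t}$-complete by Proposition~\ref{prop:t-comp}, the hybrid decrease reduces to the scalar comparison system $\dot v\leq c\,\overline\omega-\kappa\underline\lambda\,\chi(v)$ in the ordinary flow time $\mathfrak{t}$. Rescaling time by $\kappa\underline\lambda$ and treating $c\,\overline\omega/(\kappa\underline\lambda)$ as the input yields a standard ISS estimate
\[
v(\mathfrak{t})\ \leq\ \tilde\beta\big(v(0),\kappa\underline\lambda\,\mathfrak{t}\big)+\tilde\gamma\big((\kappa\underline\lambda)^{-1}c\,\overline\omega\big),
\]
with $\tilde\beta\in\KL$ and $\tilde\gamma\in\K$ independent of $\kappa$. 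Composing with the sandwich bounds, splitting the sum by the weak triangle inequality $\underline\alpha^{-1}(a+b)\leq\underline\alpha^{-1}(2a)+\underline\alpha^{-1}(2b)$, and absorbing the fixed constant $\underline\lambda$ into the transient term gives \eqref{eq:KLpract} with $\beta_\circ(r,s):=\underline\alpha^{-1}\big(2\tilde\beta(\overline\alpha(r),\underline\lambda\,s)\big)$ and $\gamma_\circ(u):=\underline\alpha^{-1}\big(2\tilde\gamma(u)\big)$, both manifestly independent of $\kappa$, which closes the argument.
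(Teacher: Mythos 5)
Your architecture is essentially the paper's own: the same edge potential $V$, the matched-selection upper bound on the set-valued Lie derivative (pairing, for each $f\in F(x)$, the Clarke-gradient selection with the very same Krasovskii selection $\boldsymbol{\sigma}_{\!f}$ appearing in $f$, which is exactly Lemma~\ref{lem:LieDSingleton} with $\varphi(x,f)=(B\boldsymbol{\sigma}_{\!f},2\pi\boldsymbol{\sigma}_{\!f})$), non-pathologicity of $V$ through piecewise-$C^1$ scalar functions (Proposition~\ref{prop:nonpathological}) combined with \cite[Lemma 2.23]{DellaRossaPhD21}, the detectability bound $|\widehat{\boldsymbol{\sigma}}|^2\geq \eta(|x|_\A)$, and the concluding time rescaling $\tau=\kappa\underline\lambda\,\mathfrak{t}$ followed by the ISS comparison of \cite[Lemma 2.14]{SontagIOS}. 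Up to one step, this is the proof in Sections~\ref{subsec:UGAS} and \ref{subsec:proofProp2}.

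That one step — the jump analysis — fails as you wrote it, on two counts. First, your $W(s)=\int_0^s\sigma(\tau)\,d\tau$ is not defined where you need it: $\dom\sigma=[-\pi-\delta,\pi+\delta]$, whereas on $X$ the mismatch $\tilde\theta_{ij}$ ranges well beyond this interval; in particular $|\tilde\theta_{ij}|\geq\pi+\delta$ exactly on the jump set $D_{ij}$ in \eqref{eq:Dij_multi}, so your $V$ is undefined at the very points from which the $G_{ij}^\ext$ jumps occur. The paper's definition \eqref{eq:lyapFunc_ij_practical} inserts the saturation $\sigma(\sat_{\pi+\delta}(\cdot))$ precisely to make $V$ well defined on all of $X$. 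Second, your claim that every admissible jump leaves each mismatch $\tilde\theta_{ij}$ unchanged, hence $V(x^+)=V(x)$ on all of $D$, misreads Lemma~\ref{lem:jump_k_prop}: only the unwinding jumps $g_i$ of Lemma~\ref{lem:jump_theta_prop} preserve the mismatches; the jumps from $D_{ij}$ change $q_{ij}$ exactly so as to move $\tilde\theta_{ij}$ from $|\tilde\theta_{ij}|\geq\pi+\delta$ to $|\tilde\theta_{ij}^+|<\pi+\delta$, so the mismatch on edge $(i,j)$ does change. The non-increase you need is still true, but it requires the paper's argument rather than invariance: by oddness of $\sigma$ one has
\begin{equation*}
V_{ij}(x)=\int_0^{|\tilde\theta_{ij}|}\sigma(\sat_{\pi+\delta}(s))\,ds,
\end{equation*}
with an integrand that is positive for $s>0$ by item~\ref{prop:sigma_sect}) of Property~\ref{prop:sigma}, so strictly shrinking $|\tilde\theta_{ij}|$ strictly decreases $V_{ij}$ while all other edge terms are untouched. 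With the saturated definition of $V$ and this monotonicity argument substituted for your invariance claim, the rest of your proof goes through exactly as in the paper. A minor additional remark: your quantitative sandwich step $|\tilde\theta|^2\geq\underline\lambda\,|x|_\A^2$ needs care, since $|x|_\A$ is the distance to $\A$ as a subset of $X$ (with $q$ discrete), not to the affine set $\{B^\top\theta+2\pi q=0\}$; the paper sidesteps this by invoking the compactness-based existence of class-$\K_\infty$ bounds from \cite[Page 54]{TeelBook12}, which is all that the argument requires.
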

The bound \eqref{eq:KLpract} in Theorem~\ref{thm:practical_stability} is the sum of two terms:  $\beta_\circ$ captures the phases tendency to synchronize, while function $\gamma_\circ$ depends on the mismatch among the (possibly) non-identical, time-varying natural frequencies of the oscillators, which hampers asymptotic phase synchronization in general. Therefore, Theorem~\ref{thm:practical_stability}  provides an insightful bound \eqref{eq:KLpract} illustrating the trend of the continuous-time evolution of the hybrid solutions to \eqref{eq:hybr_multi_reg_full}. Notice that $\beta_\circ$ and $\gamma_\circ$ can be constructed by following similar steps as the ones in \cite[Lemma 2.14]{SontagIOS}, noting that the resulting bound is often subject to some conservatism. On the other hand, because $\beta_\circ$ and $\gamma_\circ$ are independent of $\kappa$ and $\underline \lambda$, \eqref{eq:KLpract} still provides valuable quantitative information. Indeed, in view of \eqref{eq:KLpract},  increasing $\kappa$ speeds up the transient and reduces the asymptotic phase disagreement caused by the non-identical time-varying natural frequencies. Equation \eqref{eq:KLpract} also highlights the impact of the algebraic connectivity $\underline{\lambda}$ of  $\G$  \cite[pages 23-24]{mesbahi2010graph} on the phase synchronization, by giving information on the scalability of our algorithm. Recall that $\underline{\lambda}$ is influenced by several parameters of the undirected graph, such as the maximum degree and the number of nodes (\cite{rad2011lower}). This continuous-time focus in \eqref{eq:KLpract} in Theorem \ref{thm:practical_stability} is motivated by Proposition~\ref{prop:sigma}. It is also of interest to establish a bound similar to \eqref{eq:KLpract} while measuring the elapsed time in terms of $\mathfrak{t}+\mathfrak{j}$ and not only in terms of $\mathfrak{t}$, as usually done when defining bounds for solutions to hybrid systems, which allows us to ensure stronger stability properties, in particular, uniformity and robustness  \cite[Chp.7]{TeelBook12}. Hence, combining Theorem~\ref{thm:practical_stability} with Proposition~\ref{prop:t-comp}, we obtain the following second main result.
\begin{theorem}
	\label{thm:KLbound_t+j}
	For each value $\kappa>0$, there exists a class $\KL$ function $\beta$ such that all solutions to  \eqref{eq:hybr_multi_reg_full} satisfy 
	\begin{equation}
		|x(\mathfrak{t},\mathfrak{j})|_{\A}\leq \beta(|x(0,0)|_{\A},\mathfrak{t}+\mathfrak{j}) + \gamma_\circ((\kappa\underline \lambda)^{-1}c \overline\omega),
		\label{eq:KLpract_2}
	\end{equation}
for all $(\mathfrak{t},\mathfrak{j})\in \dom x$, with $\gamma_\circ$ as in Theorem~\ref{thm:practical_stability}. \null \hfill $\square$
\end{theorem}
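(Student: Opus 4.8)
The plan is to derive \eqref{eq:KLpract_2} directly from the continuous-time estimate \eqref{eq:KLpract} of Theorem~\ref{thm:practical_stability}, trading the ordinary time $\mathfrak{t}$ for the hybrid time $\mathfrak{t}+\mathfrak{j}$ by invoking the uniform average dwell-time of Proposition~\ref{prop:t-comp} to bound the jump count $\mathfrak{j}$ in terms of the elapsed flowing time $\mathfrak{t}$. Since the additive term $\gamma_\circ((\kappa\underline\lambda)^{-1}c\overline\omega)$ in \eqref{eq:KLpract} does not depend on $(\mathfrak{t},\mathfrak{j})$, it is carried over unchanged, and it suffices to dominate the transient term $\beta_\circ(|x(0,0)|_{\A},\kappa\mathfrak{t})$ by a class $\KL$ function of $(|x(0,0)|_{\A},\,\mathfrak{t}+\mathfrak{j})$.

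First I would instantiate the dwell-time estimate of Proposition~\ref{prop:t-comp} at the origin of the hybrid time domain, i.e.\ with $(\mathfrak{s},\mathfrak{r})=(0,0)\in\dom x$, which yields $\mathfrak{j}\leq \tfrac{1}{\tau_D}\mathfrak{t}+J_0$ for every $(\mathfrak{t},\mathfrak{j})\in\dom x$ and every solution $x$, with the \emph{same} $\tau_D\in\real_{>0}$, $J_0\in\integer_{\geq 0}$. A short rearrangement then gives the key lower bound $\mathfrak{t}\geq \rho(\mathfrak{t}+\mathfrak{j})-\rho J_0$, where $\rho:=\tfrac{\tau_D}{1+\tau_D}\in(0,1)$, so that the flowing time is bounded below by an increasing affine function of the hybrid time $\mathfrak{t}+\mathfrak{j}$.

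Next, for $\kappa>0$ fixed, I would propose the candidate $\beta(r,v):=\beta_\circ\big(r,\,\kappa\rho\max\{0,\,v-J_0\}\big)$ and verify that it is of class $\KL$: for each fixed $v$ the map $r\mapsto\beta(r,v)$ inherits the class $\K$ property from $\beta_\circ$, while for each fixed $r$ the map $v\mapsto\beta(r,v)$ is the composition of the nondecreasing map $v\mapsto\kappa\rho\max\{0,v-J_0\}$ with the nonincreasing map $\beta_\circ(r,\cdot)$, hence nonincreasing and vanishing as $v\to\infty$. Since $\rho$ and $J_0$ may depend on $\kappa$ through the flow speed of $F$, this $\beta$ is permitted to depend on $\kappa$, consistently with the statement. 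Combining $\kappa\mathfrak{t}\geq \kappa\rho(\mathfrak{t}+\mathfrak{j})-\kappa\rho J_0$ with $\kappa\mathfrak{t}\geq 0$ gives $\kappa\mathfrak{t}\geq \kappa\rho\max\{0,\mathfrak{t}+\mathfrak{j}-J_0\}$, and monotonicity of $\beta_\circ$ in its second argument then yields $\beta_\circ(|x(0,0)|_{\A},\kappa\mathfrak{t})\leq \beta(|x(0,0)|_{\A},\mathfrak{t}+\mathfrak{j})$; substituting into \eqref{eq:KLpract} produces \eqref{eq:KLpract_2}.

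The main obstacle is not analytical but lies in the bookkeeping of the $\KL$ construction: one must ensure the dominating function remains genuinely class $\KL$ on the whole hybrid-time axis, in particular on the initial regime $\mathfrak{t}+\mathfrak{j}\leq J_0$ where the affine lower bound on $\mathfrak{t}$ degenerates. The truncation $\max\{0,\cdot\}$ resolves this by producing a harmless plateau at $\beta_\circ(r,0)$ that is still nonincreasing, so no strict-monotonicity requirement is violated. A secondary point worth stating explicitly is that the dwell-time estimate is applied to all solutions with a common $\tau_D$ and $J_0$, which is precisely the uniformity guaranteed by Proposition~\ref{prop:t-comp}; this is what makes the resulting $\beta$ uniform over all solutions for the given $\kappa$.
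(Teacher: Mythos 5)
Your proposal is correct and takes essentially the same route as the paper's proof: both invoke the uniform average dwell time of Proposition~\ref{prop:t-comp} to bound $\mathfrak{t}$ from below by an affine function of $\mathfrak{t}+\mathfrak{j}$, compose $\beta_\circ$ with a $\max\{0,\cdot\}$-truncated version of that affine map to obtain a class $\KL$ function of $(|x(0,0)|_{\A},\mathfrak{t}+\mathfrak{j})$, and carry the $\kappa$-independent offset $\gamma_\circ((\kappa\underline\lambda)^{-1}c\overline\omega)$ over unchanged. Your slope $\rho=\tau_D/(1+\tau_D)$ is a marginally sharper but immaterial variant of the paper's $\tfrac{1}{2}\min(1,\tau_D)$, obtained by rearranging the same dwell-time inequality rather than splitting $\mathfrak{t}=\tfrac{1}{2}\mathfrak{t}+\tfrac{1}{2}\mathfrak{t}$.
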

\begin{proof}	
Let $\kappa>0$ and $x$ be a solution to \eqref{eq:hybr_multi_reg_full}.  In view of Proposition~\ref{prop:t-comp}, $\frac{1}{\tau_D}\mathfrak{t} + J_0 \geq \mathfrak{j}$ for any $(\mathfrak{t},\mathfrak{j})\in\dom x$, which is equivalent to $\frac{1}{2}\mathfrak{t} \geq \frac{\tau_D}{2}(\mathfrak{j}-J_0 )$. Hence, we derive from \eqref{eq:KLpract}, for any $(\mathfrak{t},\mathfrak{j})\in\dom x$, 
	\begin{align}
		    \nonumber
			&\beta_\circ(|x(0,0)|_{\A},\kappa\mathfrak{t})= \beta_\circ\left(|x(0,0)|_{\A},\kappa ( \frac{1}{2}\mathfrak{t} + \frac{1}{2}\mathfrak{t})\right)\\
			\nonumber
			&\leq\beta_\circ\left(|x(0,0)|_{\A},\kappa \max \left\{0,\frac{1}{2}\mathfrak{t} + \frac{\tau_D}{2}\mathfrak{j} - \frac{\tau_D}{2}J_0\right\}\right)\\
			\nonumber
			&\leq\beta_\circ \left(|x(0,0)|_{\A},\frac{\kappa}{2}\max\left\{0,\min(1,\tau_D)(\mathfrak{t} + \mathfrak{j}) - \tau_D J_0\right\}\right)\\
			&=:\beta(|x(0,0)|_{\A},\mathfrak{t} + \mathfrak{j}).
			\label{eq:bound_beta}
	\end{align}	
Function $\beta$ is of class $\KL$. Hence, \eqref{eq:KLpract} and \eqref{eq:bound_beta} yield \eqref{eq:KLpract_2}, thus completing the proof.
\end{proof}

Theorem~\ref{thm:KLbound_t+j} implies that the oscillator phases uniformly converge to any desired neighborhood of $\A$ by taking $\kappa$ sufficiently large, thus the practical nature of the result. We also immediately conclude from Theorem~\ref{thm:KLbound_t+j} and Lemma~\ref{lem:hbc} that the stability property in \eqref{eq:KLpract_2}  is robust in the sense of item (a) of \cite[Def. 7.18]{TeelBook12}, according to \cite[Thm. 7.21]{TeelBook12}.

We may draw an important additional conclusion from Theorem~\ref{thm:KLbound_t+j} 
corresponding to a global practical $\KL$ bound stemming from the fact that the 
function $\gamma_\circ$ in \eqref{eq:KLpract_2} is independent of $\kappa$.
\begin{corollary}
  \label{cor:practical_stability}  
Set $\A$ is uniformly globally practically $\KL$ asymptotically stable for system \eqref{eq:hybr_multi_reg_full}, i.e., for each $\varepsilon > 0$, there exists $\kappa^\star >0$ such that, for all $\kappa \geq \kappa^\star$, 
there exists $\beta\in\KL$ such that any solution $x$ verifies $|x(\mathfrak{t},\mathfrak{j})|_\A\leq \beta(|x(0,0)|_\A,\mathfrak{t}+\mathfrak{j}) + \varepsilon, \, \text{for all }(\mathfrak{t},\mathfrak{j})\in\dom x$. \null \hfill $\square$
\end{corollary}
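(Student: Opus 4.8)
The plan is to obtain the claimed estimate by reading it directly off Theorem~\ref{thm:KLbound_t+j} and exploiting that the residual offset $\gamma_\circ((\kappa\underline\lambda)^{-1}c\overline\omega)$ appearing in \eqref{eq:KLpract_2} can be driven below any prescribed $\varepsilon$ by enlarging $\kappa$. The essential observation is that $\gamma_\circ\in\K$ is a fixed gain, \emph{independent} of $\kappa$, whereas $\underline\lambda>0$ (Lemma~\ref{lem:B_prop}), $c:=\max_{s\in\dom\sigma}\widehat\sigma(s)$ and $\overline\omega$ are fixed constants. Hence the only $\kappa$-dependence of the offset enters through the argument $(\kappa\underline\lambda)^{-1}c\overline\omega$, which is monotonically decreasing to $0$ as $\kappa\to\infty$.

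First I would fix $\varepsilon>0$. If $c\overline\omega=0$ (for instance, identical natural frequencies), the offset is identically zero and the conclusion holds for any choice of $\kappa^\star$. Otherwise, using continuity of $\gamma_\circ$ at the origin together with $\gamma_\circ(0)=0$, I would select $\eta>0$ such that $\gamma_\circ(s)\leq\varepsilon$ for all $s\in[0,\eta]$, and then set $\kappa^\star:=\frac{c\overline\omega}{\underline\lambda\,\eta}>0$. For every $\kappa\geq\kappa^\star$ this ensures $(\kappa\underline\lambda)^{-1}c\overline\omega\leq\eta$, whence $\gamma_\circ((\kappa\underline\lambda)^{-1}c\overline\omega)\leq\varepsilon$.

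To conclude, for any such $\kappa\geq\kappa^\star$ I would invoke Theorem~\ref{thm:KLbound_t+j} to obtain a $\beta\in\KL$ (which may depend on $\kappa$) for which every solution $x$ satisfies $|x(\mathfrak{t},\mathfrak{j})|_\A\leq\beta(|x(0,0)|_\A,\mathfrak{t}+\mathfrak{j})+\gamma_\circ((\kappa\underline\lambda)^{-1}c\overline\omega)\leq\beta(|x(0,0)|_\A,\mathfrak{t}+\mathfrak{j})+\varepsilon$ for all $(\mathfrak{t},\mathfrak{j})\in\dom x$, which is precisely the asserted practical $\KL$ bound.

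As for where the difficulty lies: essentially all the effort has already been spent in Theorem~\ref{thm:KLbound_t+j} (and, upstream, in Theorem~\ref{thm:practical_stability} and the completeness/average-dwell-time result of Proposition~\ref{prop:t-comp}), so no genuine obstacle arises at this stage. The only subtlety worth flagging is that the definition of uniform global practical $\KL$ asymptotic stability permits $\beta$ to depend on $\kappa$; I therefore do not need to exhibit a single $\kappa$-independent $\KL$ function, but only to make the constant $\varepsilon$-offset uniformly controllable through the gain, which the $\kappa$-independence of $\gamma_\circ$ delivers immediately.
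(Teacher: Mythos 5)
Your proposal is correct and follows exactly the route the paper intends: the corollary is stated as an immediate consequence of Theorem~\ref{thm:KLbound_t+j}, with the paper's one-line justification being precisely the $\kappa$-independence of $\gamma_\circ$ that you exploit, and your choice $\kappa^\star = c\overline\omega/(\underline\lambda\,\eta)$ (via continuity of $\gamma_\circ\in\K$ at zero) just makes that reasoning explicit. Your closing remark that $\beta$ may depend on $\kappa$ is also exactly right, since Theorem~\ref{thm:KLbound_t+j} only provides a $\KL$ function for each fixed $\kappa>0$.
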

Lastly, in the case of uniform frequencies $\boldsymbol{\omega}(\theta,\mathfrak{t})=\boldsymbol{1}_n \omega(\mathfrak{t})$, for all $\mathfrak{t} \geq 0$, with $\omega(\mathfrak{t})\in\Omega$, we have $B^\top \boldsymbol{\omega}(\mathfrak{t}) = 0$, for all $\mathfrak{t} \geq 0$.
Then, we can exploit the fact that the term $|\overline \omega|$ at the right-hand side of \eqref{eq:KLpract_2} 
stems from upper bounding $|B^\top \widehat{\boldsymbol{\omega}}|_1$ as in \eqref{eq:pars}, 
which allows obtaining the following asymptotic property of $\A$.
\begin{corollary}
	\label{cor:asympt_stability}
	If $\boldsymbol{\omega}(\theta,\mathfrak{t})=\boldsymbol{1}_n \omega(\mathfrak{t})$, for all $\mathfrak{t} \geq 0,$ with $\omega(\mathfrak{t}) \in \Omega$, then set $\A$ is uniformly globally $\KL$ asymptotically stable for system \eqref{eq:hybr_multi_reg_full}, i.e., for each $\kappa>0$, there exists $\beta\in\KL$ such that any solution $x$ verifies 
\begin{align}
      |x(\mathfrak{t},\mathfrak{j})|_\A\leq \beta(|x(0,0)|_\A,\mathfrak{t}+\mathfrak{j}), \quad \forall(\mathfrak{t},\mathfrak{j})\in\dom x. 
  \label{eq:KLbound}
  \end{align}  \null \hfill $\square$
\end{corollary}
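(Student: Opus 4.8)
The plan is to retrace the proof of Theorem~\ref{thm:practical_stability} and exploit that, under the uniform-frequency hypothesis, the single term responsible for the residual $\gamma_\circ((\kappa\underline\lambda)^{-1}c\overline\omega)$ in \eqref{eq:KLpract_2} vanishes identically. The enabling observation is purely structural: each column of the incidence matrix reads $[B]_\ell = e_j - e_i$, so $\boldsymbol{1}_n^\top [B]_\ell = 1 - 1 = 0$ for every edge and hence $B^\top\boldsymbol{1}_n = \boldsymbol{0}_m$. Consequently, whenever $\boldsymbol{\omega}(\theta,\mathfrak{t}) = \boldsymbol{1}_n\omega(\mathfrak{t})$ with $\omega(\mathfrak{t})\in\Omega$, one has $B^\top\boldsymbol{\omega}(\mathfrak{t}) = B^\top\boldsymbol{1}_n\,\omega(\mathfrak{t}) = \boldsymbol{0}_m$ for all $\mathfrak{t}\geq 0$; taking the relaxation of the frequency in \eqref{eq:F_reg_full} consistent with this hypothesis, namely the diagonal set $\{s\boldsymbol{1}_n : s\in\Omega\}$, the same identity holds for every admissible selection, so that $\max_{\widehat{\boldsymbol{\omega}}}|B^\top\widehat{\boldsymbol{\omega}}|_1 = 0$ and the effective value of $\overline\omega$ in \eqref{eq:pars} is zero.

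First I would re-enter the estimate built in Section~\ref{subsec:proofProp2}, where the frequency affects the decrease of the Lyapunov function only through a cross term proportional to $B^\top\widehat{\boldsymbol{\omega}}$, over-bounded in the general case by $|B^\top\widehat{\boldsymbol{\omega}}|_1\leq\overline\omega$; it is this over-bound that, once propagated, yields the additive residual $\gamma_\circ((\kappa\underline\lambda)^{-1}c\overline\omega)$. With $B^\top\widehat{\boldsymbol{\omega}} = \boldsymbol{0}_m$ along all solutions, the cross term is absent and the (set-valued) derivative of the Lyapunov function is strictly negative away from $\A$, the strictness being supplied by $\underline\lambda > 0$ (Lemma~\ref{lem:B_prop}) and by item~\ref{prop:sigma_sect}) of Property~\ref{prop:sigma}, which together make $|x|_\A$ detectable from $|\widehat{\boldsymbol{\sigma}}|$. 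Running the identical construction of Theorem~\ref{thm:practical_stability} then delivers the residual-free bound $|x(\mathfrak{t},\mathfrak{j})|_\A\leq\beta_\circ(|x(0,0)|_\A,\kappa\mathfrak{t})$, i.e., \eqref{eq:KLpract} with the $\gamma_\circ$-term deleted.

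It then remains to convert the dependence on $\kappa\mathfrak{t}$ into one on $\mathfrak{t}+\mathfrak{j}$, which I would do verbatim as in the proof of Theorem~\ref{thm:KLbound_t+j}: using the uniform average dwell-time property of Proposition~\ref{prop:t-comp}, the flowing time $\mathfrak{t}$ dominates a fixed fraction of $\mathfrak{t}+\mathfrak{j}$ up to a constant offset, and absorbing this into the class-$\KL$ function produces some $\beta\in\KL$ satisfying $|x(\mathfrak{t},\mathfrak{j})|_\A\leq\beta(|x(0,0)|_\A,\mathfrak{t}+\mathfrak{j})$ for all $(\mathfrak{t},\mathfrak{j})\in\dom x$, which is exactly \eqref{eq:KLbound}.

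The main obstacle is conceptual rather than computational: one cannot merely set $\overline\omega = 0$ in the final inequality \eqref{eq:KLpract_2}, since there $\overline\omega = (n-1)|\omega_{\text{M}}-\omega_{\text{m}}|$ is a fixed positive constant. The argument must instead be threaded back into the proof of Theorem~\ref{thm:practical_stability} at the exact place where the frequency term is bounded, replacing the conservative $\overline\omega$ by the exact value $0$ valid along uniform-frequency solutions. The only subtlety requiring care is to make sure the frequency relaxation used in the regularized dynamics \eqref{eq:F_reg_full} is tightened to the diagonal set consistent with the uniform-frequency hypothesis, rather than the full box $\widehat\Omega$, so that $B^\top\widehat{\boldsymbol{\omega}} = \boldsymbol{0}_m$ holds for every selection and no spurious residual survives the set-valued Lie derivative computation.
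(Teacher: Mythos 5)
Your proposal is correct and follows essentially the same route as the paper: Corollary~\ref{cor:asympt_stability} is obtained there precisely by observing that the residual $\gamma_\circ((\kappa\underline \lambda)^{-1}c\overline\omega)$ in \eqref{eq:KLpract_2} originates solely from over-bounding $|B^\top \widehat{\boldsymbol{\omega}}|_1$ as in \eqref{eq:pars}, a quantity that vanishes identically under the uniform-frequency hypothesis since $B^\top \boldsymbol{1}_n = \boldsymbol{0}_m$, so that rerunning the Lyapunov decrease of Section~\ref{subsec:proofProp2} without the cross term and then converting flowing time to hybrid time via Proposition~\ref{prop:t-comp} exactly as in Theorem~\ref{thm:KLbound_t+j} gives \eqref{eq:KLbound}. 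Your closing remark that the frequency relaxation in \eqref{eq:F_reg_full} must be taken as the diagonal set $\{s\boldsymbol{1}_n : s \in \Omega\}$ rather than the full box $\widehat\Omega$ is exactly the reading the paper intends and correctly identifies the one place where the argument could otherwise leak a spurious residual.
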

\subsection{Cyclic graphs and their potential issues}
\label{subsec:counter}
Before proceeding with the technical derivations needed to prove Theorem~\ref{thm:practical_stability}, we devote some attention to the issues pointed out in Remark~\ref{rem:lmbdamin} about the need for the graph $\G$ to be a tree, similar to (\cite{HTsync}).
  
 Consider system \eqref{eq:hybr_multi} with $n=3$ and with $\G_u$ an all-to-all undirected  graph, thus not a tree. Let $\G$ be the orientation of $\G_u$ with the incidence matrix
  $B=
  \smallmat{
  -1 & 0 & 1\\1 & -1 & 0\\0 & 1 & -1
  }$.
We take $\delta=\frac{3}{4}\pi$, any $\kappa>0$, any $\sigma$ satisfying Property~\ref{prop:sigma}, and we select for convenience  $\boldsymbol{\omega}(\theta,\mathfrak{t})=\boldsymbol{0}_3$ for any time $\mathfrak{t}\geq 0$ and $x\in X$. Let $x=(\theta,q)$ be a solution to the corresponding system \eqref{eq:hybr_multi} initialized at $\left(-\frac{2}{3} \pi,0,\frac{2}{3} \pi,0,0,1\right)$. We have that $x(0,0)\in \text{int}(C) \setminus \A$ and $B^\top\theta(0,0)+2\pi q(0,0)=\frac{2}{3}\pi \boldsymbol 1_3$, thus implying  $\boldsymbol{\sigma}(x(0,0))=\sigma(\frac{2}{3}\pi) \boldsymbol 1_3$.  Hence, because  $B \boldsymbol 1_3 = 0$,  from \eqref{eq:flownD_compact} it holds that
$\dot{\theta}(0,0) 
=\boldsymbol{0}_3$, 
and consequently $\dom x\subset[0,\infty)\times\{0\}$, and  $x(\mathfrak{t},0)=x(0,0)$ and $x(\mathfrak{t},0)\in \text{int}(C)\backslash \A$ for all $(\mathfrak{t},0)\in\dom x$. As a result, solution $x$ does not converge to the synchronization set. 

More generally, when the graph is not a tree, the kernel of matrix $B$ contains additional elements besides the zero vector. Consequently, we can have $B\boldsymbol{\sigma}(x)=\boldsymbol{0}_n$ even when $x\notin \A$, and $\underline\lambda=0$ in \eqref{eq:pars}. As a result, $\A$ is not globally attractive. 
\subsection{A Lyapunov-like function and its properties}
\label{subsec:UGAS}
To prove Theorem~\ref{thm:practical_stability}, we rely on the Lyapunov function $V$, defined as
\begin{align}
	\label{eq:lyapFunc_practical_V}
	V(x) &:= \sum\nolimits\limits_{(i,j) \in \E} V_{ij}(x), \qquad \forall x\in X,\\
	\label{eq:lyapFunc_ij_practical}
	V_{ij}(x) &:= \int_{0}^{\theta_j - \theta_i + 2 q_{ij} \pi} \sigma(\sat_{\pi + \delta}(s)) ds,
\end{align}
with $\sat_{\pi + \delta}(s)$ given by
\begin{align*}
	\sat_{\pi + \delta}(s) := \max \big\{ \min \{ s, \pi + \delta \},-\pi - \delta  \big\}, \quad \forall s \in \real.
\end{align*}
Function $V$ enjoys useful relations with the distance of $x$ from the synchronization set $\A$, as formalized next. 

\begin{lemma} \label{lem:sandwich_etal}
Given function $V$ in \eqref{eq:lyapFunc_practical_V}-\eqref{eq:lyapFunc_ij_practical}, 
there exist $\alpha_1,\alpha_2  \in \K_\infty$ independent of $\overline \omega$ in \eqref{eq:pars} and of $\kappa$,  such that
\begin{align}
  &\alpha_1(|x|_{\A}) \leq V(x) \leq \alpha_2(|x|_{\A}), \quad\forall x\in X.
  \label{eq:sandwich_V}
\end{align}\null \hfill $\square$
\end{lemma}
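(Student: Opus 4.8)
The plan is to reduce \eqref{eq:sandwich_V} to three facts about $V$: it is nonnegative, continuous on the compact set $X$, and vanishes exactly on $\A$; once these are in place both $\K_\infty$ bounds follow. Writing $g(s):=\sigma(\sat_{\pi+\delta}(s))$ for the integrand in \eqref{eq:lyapFunc_ij_practical}, I would first note that $\sat_{\pi+\delta}$ is odd and sign-preserving, so by item~\ref{prop:sigma_symm}) of Property~\ref{prop:sigma} the map $g$ is odd, and by item~\ref{prop:sigma_sect}) it obeys $\sgn(s)g(s)\geq\alpha(\min\{|s|,\pi+\delta\})>0$ for all $s\neq 0$. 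Using oddness of $g$ to fold the negative branch, each term satisfies $V_{ij}(x)=\int_0^{\tilde\theta_{ij}}g(s)\,ds\geq\int_0^{|\tilde\theta_{ij}|}\alpha(\min\{r,\pi+\delta\})\,dr\geq 0$, with equality iff $\tilde\theta_{ij}=0$. Summing over $\E$ and recalling that $x\in\A$ iff $\tilde\theta_{ij}=0$ for every $(i,j)\in\E$ (see \eqref{eq:attractor_original} and \eqref{thetatilda}) gives $V(x)\geq 0$ with $V(x)=0\Leftrightarrow x\in\A$. Continuity of $V$ on $X$ is immediate, since $g$ is bounded and piecewise continuous and each $\tilde\theta_{ij}$ depends affinely, hence continuously, on $\theta$.

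For the upper bound I would argue directly. Let $M:=\sup_{|s|\leq\pi+\delta}|\sigma(s)|<\infty$, finite because $\sigma$ is piecewise continuous on $\dom\sigma$ with finite one-sided limits; then $|V_{ij}(x)|\leq M|\tilde\theta_{ij}|$, so $V(x)\leq M|\tilde\theta|_1\leq M\sqrt{m}\,|\tilde\theta|$, where $\tilde\theta:=B^\top\theta+2\pi q$ stacks the $\tilde\theta_{ij}$. Choosing a closest point $y=(\theta',q')\in\A$ to $x$, so that $\tilde\theta(y)=0$, and using $\tilde\theta(x)-\tilde\theta(y)=B^\top(\theta-\theta')+2\pi(q-q')$, one gets $|\tilde\theta|\leq c_1|x|_\A$ with $c_1:=\sqrt 2\max\{\|B^\top\|,2\pi\}$. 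Composing yields $V(x)\leq M\sqrt m\,c_1|x|_\A=:\alpha_2(|x|_\A)$, a linear $\K_\infty$ function.

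The delicate part is the lower bound. Setting $\Psi(t):=\int_0^t\alpha(\min\{r,\pi+\delta\})\,dr\in\K_\infty$, paragraph one already yields the explicit estimate $V(x)\geq\max_{(i,j)\in\E}\Psi(|\tilde\theta_{ij}|)\geq\Psi(|\tilde\theta|/\sqrt m)$, so it only remains to minorize $|\tilde\theta|$ by a $\K_\infty$ function of $|x|_\A$. The natural structural route inverts $\theta\mapsto B^\top\theta$: since $\G$ is a tree, Lemma~\ref{lem:B_prop} gives $\underline\lambda=\lambda_{\min}(B^\top B)>0$, and the correction $v:=-B(B^\top B)^{-1}\tilde\theta$, of norm at most $\underline\lambda^{-1/2}|\tilde\theta|$, satisfies $B^\top(\theta+v)+2\pi q=0$. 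The obstruction, and the only real difficulty of the lemma, is that $\theta+v$ may leave the box $[-\pi-\delta,\pi+\delta]^n$, so $(\theta+v,q)$ need not lie in $\A\subset X$; making the argument rigorous requires exhibiting an admissible synchronized point, using the gauge freedom $\theta+v+c\boldsymbol 1_n$ (legitimate because $B^\top\boldsymbol 1_n=0$) together with the fact that for small $|\tilde\theta|$ the entries of $\theta+v$ span at most two $2\pi$-sheets and therefore fit, after a shift $c$, in the box of width $2\pi+2\delta>2\pi$. To avoid these case distinctions I would instead conclude by compactness: $\underline\alpha(r):=\min\{V(x):x\in X,\ |x|_\A\geq r\}$ is finite and strictly positive for each $r>0$, since $\{x\in X:|x|_\A\geq r\}$ is compact and disjoint from $\A=V^{-1}(0)$; it is nondecreasing, and any $\K_\infty$ minorant $\alpha_1\leq\underline\alpha$ then gives $V(x)\geq\underline\alpha(|x|_\A)\geq\alpha_1(|x|_\A)$.

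Finally, $\alpha_1$ and $\alpha_2$ are built solely from $V$, $|\cdot|_\A$, and the fixed data $\sigma$, $B$, $\delta$, none of which involves $\kappa$ or $\overline\omega$, which gives the claimed independence. I expect the positive-definiteness/box argument underlying the lower bound to be the only genuine obstacle, the remaining steps being routine estimates and the standard comparison-function construction.
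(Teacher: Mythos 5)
Your proposal is correct and takes essentially the same route as the paper: the paper's proof merely checks that $V$ vanishes exactly on $\A$ (by item~b) of Property~1) and is continuous and (vacuously) radially unbounded on the compact set $X$, then invokes the standard comparison-function result cited from \cite[Page 54]{TeelBook12}. Your argument makes the identical positive-definiteness observation and then explicitly reconstructs what that citation delivers --- a linear $\alpha_2$ via $|\tilde\theta|\leq c_1|x|_{\A}$ and a compactness-infimum $\alpha_1$ --- correctly recognizing along the way that the structural pseudoinverse route for the lower bound founders on the box constraint $\theta\in[-\pi-\delta,\pi+\delta]^n$ and is best replaced by the minimum-over-compact-sets construction.
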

\begin{proof}
	For each $x \in \A$, $V(x)=0$  in view of \eqref{eq:attractor_original}, \eqref{eq:lyapFunc_practical_V}, \eqref{eq:lyapFunc_ij_practical}, and for each $x\in (C\cup D) \setminus \A$, $V(x)>0$ in view of item~\ref{prop:sigma_sect}) of Property~\ref{prop:sigma}. In addition, $V$ is (vacuously) radially unbounded as $X$ is compact. Hence,  \eqref{eq:sandwich_V} holds from \cite[Page 54]{TeelBook12}.
\end{proof}

To prove Theorem~\ref{thm:practical_stability}, it is also fundamental to formalize the relation between the distance of $x$ from the set $\A$ and $\widehat \Sigma(x)$ in \eqref{eq:hybr_multi_reg_full}, as done in the next lemma.
\begin{lemma} \label{lem:lowerbound_sigma}
There exists a class $\K_\infty$ function $\eta$ such that, for each $x\in X$, $\eta(|x|_{\A}) \leq |\widehat {\boldsymbol{\sigma}}|^2, \quad \forall  \widehat {\boldsymbol{\sigma}} \in \widehat \Sigma(x)$. \null \hfill $\square$
\end{lemma}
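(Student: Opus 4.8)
The plan is to move from the set-valued object $|\widehat{\boldsymbol\sigma}|^2$ down to the scalar mismatch vector $\tilde\theta:=B^\top\theta+2\pi q$, whose components are exactly the quantities $\tilde\theta_{ij}$ of \eqref{thetatilda} (indeed $(B^\top\theta)_\ell=\theta_j-\theta_i$ since $[B]_\ell=e_j-e_i$), and then from $|\tilde\theta|$ to the distance $|x|_\A$ by a sandwich estimate on the compact set $X$. Observe first that, by \eqref{eq:attractor_original}, $x\in\A$ precisely when $\tilde\theta=\boldsymbol0_m$, so that the continuous map $\mu(x):=|B^\top\theta+2\pi q|$ vanishes exactly on $\A$.

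First I would establish a per-edge lower bound: for each $(i,j)\in\E$ and each $\widehat\sigma_{ij}\in\widehat\sigma(\tilde\theta_{ij})$,
\begin{equation}
\label{eq:peredge}
|\widehat\sigma_{ij}| \ \geq\ \alpha(|\tilde\theta_{ij}|),
\end{equation}
with $\alpha\in\K$ from item~\ref{prop:sigma_sect}) of Property~\ref{prop:sigma}. The case $\tilde\theta_{ij}=0$ is trivial as $\alpha(0)=0$, and $\tilde\theta_{ij}<0$ follows from the oddness in item~\ref{prop:sigma_symm}); for $\tilde\theta_{ij}>0$, Property~\ref{prop:sigma} gives $\sigma(s)\geq\alpha(s)\geq\alpha(\tilde\theta_{ij}-\varepsilon)>0$ for all $s\in[\tilde\theta_{ij}-\varepsilon,\tilde\theta_{ij}+\varepsilon]\cap[-\pi-\delta,\pi+\delta]$ whenever $\varepsilon>0$ is small enough that $\tilde\theta_{ij}-\varepsilon>0$. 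Hence $\overline{\text{co}}\,\sigma\big([\tilde\theta_{ij}-\varepsilon,\tilde\theta_{ij}+\varepsilon]\cap[-\pi-\delta,\pi+\delta]\big)\subseteq[\alpha(\tilde\theta_{ij}-\varepsilon),\infty)$, and intersecting over $\varepsilon\downarrow0$ in \eqref{eq:sigmahat}, together with continuity of $\alpha$, yields $\widehat\sigma(\tilde\theta_{ij})\subseteq[\alpha(\tilde\theta_{ij}),\infty)$, which is \eqref{eq:peredge}. Since \eqref{thetatilda} presents $\widehat\Sigma(x)$ as the stacking of the $\widehat\sigma_{ij}$, summing the squares of \eqref{eq:peredge} gives, for every $\widehat{\boldsymbol\sigma}\in\widehat\Sigma(x)$,
\begin{equation}
\label{eq:sumsq}
|\widehat{\boldsymbol\sigma}|^2 \ =\ \sum\nolimits_{(i,j)\in\E}|\widehat\sigma_{ij}|^2 \ \geq\ \sum\nolimits_{(i,j)\in\E}\alpha(|\tilde\theta_{ij}|)^2 .
\end{equation}

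Next I would collapse the right-hand side of \eqref{eq:sumsq} to a function of $\mu(x)$ alone. Bounding the sum below by its largest term, using that $\alpha$ is increasing, and $\max_{(i,j)\in\E}|\tilde\theta_{ij}|\geq|\tilde\theta|/\sqrt m=\mu(x)/\sqrt m$,
\[
\sum\nolimits_{(i,j)\in\E}\alpha(|\tilde\theta_{ij}|)^2 \ \geq\ \alpha\Big(\max\nolimits_{(i,j)\in\E}|\tilde\theta_{ij}|\Big)^2 \ \geq\ \alpha\big(\mu(x)/\sqrt m\big)^2 \ =:\ \tilde\rho(\mu(x)),
\]
with $\tilde\rho\in\K$. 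Since $\mu$ is continuous on the compact set $X$ and vanishes exactly on $\A$, the same reasoning as in the proof of Lemma~\ref{lem:sandwich_etal}, based on \cite[Page 54]{TeelBook12}, furnishes $\chi\in\K_\infty$ with $\mu(x)\geq\chi(|x|_\A)$ for all $x\in X$. Combining these, I would set $\eta:=\tilde\rho\circ\chi$, obtaining $\eta(|x|_\A)=\tilde\rho(\chi(|x|_\A))\leq\tilde\rho(\mu(x))\leq|\widehat{\boldsymbol\sigma}|^2$, which is the claim.

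The step I expect to be most delicate is the per-edge inequality \eqref{eq:peredge}, because it must hold for \emph{every} element of the set-valued Krasovskii regularization $\widehat\sigma(\tilde\theta_{ij})$, including at discontinuity points of $\sigma$ (e.g.\ the sign selection); the argument above handles this by showing that the entire convex hull of the nearby values of $\sigma$ lies beyond $\alpha(|\tilde\theta_{ij}|)$ on one side. A secondary bookkeeping point concerns the passage from $\K$ to $\K_\infty$: because $\alpha$ is only of class $\K$ in Property~\ref{prop:sigma}, the composition $\eta=\tilde\rho\circ\chi$ is a priori only of class $\K$, and one must extend it beyond the (compact) range of $|x|_\A$ on $X$ to a genuine $\K_\infty$ function, which is possible since $\tilde\rho$ and $\chi$ are positive definite and increasing. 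Finally, one should note that \eqref{eq:sigmahat} is computed relative to $C$, on which every argument $\tilde\theta_{ij}$ lies in $\dom\sigma=[-\pi-\delta,\pi+\delta]$, so that $\widehat\sigma(\tilde\theta_{ij})$ is well defined and \eqref{eq:peredge} applies, the asserted inequality being vacuous wherever $\widehat\Sigma(x)$ is empty.
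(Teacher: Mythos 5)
Your proof is correct and takes essentially the same route as the paper's: a per-edge lower bound $|\widehat\sigma_{ij}|\geq\alpha(|\tilde\theta_{ij}|)$ valid for every element of the Krasovskii regularization \eqref{eq:sigmahat} (the paper states this directly from Property~\ref{prop:sigma}; your convex-hull argument with $\varepsilon\downarrow 0$ just makes it explicit), followed by stacking over the edges and the comparison-function construction of \cite[Page 54]{TeelBook12} on the compact set $X$ to produce $\eta\in\K_\infty$. Your detour through $|B^\top\theta+2\pi q|/\sqrt{m}$ is a cosmetic variation of the paper's direct application of that construction to $\max_{(i,j)\in\E}\alpha(|\tilde\theta_{ij}|)$, and your bookkeeping remarks (extending $\tilde\rho\circ\chi$ to class $\K_\infty$ using compactness of $X$, vacuity where $\widehat\Sigma(x)$ is empty) are consistent with the paper's implicit treatment.
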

\begin{proof}
From items~\ref{prop:sigma_symm}) and \ref{prop:sigma_sect}) of Property~\ref{prop:sigma}, $|\sigma (s)| \geq \alpha (|s|)$ for any $s \in  \dom \sigma $. Thus, in view of \eqref{eq:sigmahat}, $|\varsigma|\geq \alpha(|s|)$, for any $s \in  \dom \sigma$ and $\varsigma \in \widehat{\sigma}(s)$. We recall that, for any $x \in X$, $\widehat{\Sigma}(x)$ is the stacking of all the set-valued maps $\widehat{\sigma}_{ij}$  defined in \eqref{thetatilda}, $(i,j) \in \E$. Hence, by definition of $\A$, for any $x\in X \setminus \A$, there exists at least one element $\tilde{\theta}_{ij} \neq 0$, with $(i,j) \in \E$, thus  $\widehat{\boldsymbol{\sigma}} \in \widehat{\Sigma}(x)$ implies that  $|\widehat{\boldsymbol{\sigma}}| \geq |\widehat{\sigma}_{ij}(x)| \geq \alpha(|\tilde{\theta}_{ij}|)$. Similarly, for any $x\in\A$, $|\widehat{\boldsymbol{\sigma}}| \geq |\widehat{\sigma}_{ij}(x)| \geq \alpha(|\tilde{\theta}_{ij}|)= 0$. Therefore, $\max_{(i,j)\in \E}\alpha(|\tilde{\theta}_{ij}|)$ is a suitable lower bound for $|\widehat{\boldsymbol{\sigma}}|$, for any $x\in X$. Since $\tilde{\theta}_{ij}$ is a function of the states and $\max_{(i,j)\in \E} \alpha(\cdot)$ is positive definite and radially unbounded, as $X$ is compact, then \cite[Page 54]{TeelBook12} implies that there exists $\eta \in \K_\infty$ such that $\eta(|x|_{\A}) \leq |\widehat {\boldsymbol{\sigma}}|^2$ holds for each $x\in X$ and for all $\widehat {\boldsymbol{\sigma}} \in \widehat \Sigma(x)$, thus concluding the proof.
\end{proof}

Function $V$ is locally Lipschitz due to the properties of $\sigma$ and characterizing its variation when evaluated along the solutions of the hybrid inclusion \eqref{eq:hybr_multi_reg_full} requires using tools from non-smooth analysis. To avoid breaking the flow of the exposition, we postpone to Section~\ref{subsec:proofProp2} those technical derivations and summarize the corresponding conclusions in the next proposition, a key result for proving Theorem~\ref{thm:practical_stability}.
\begin{proposition}
	\label{prop:PropV_practical}
Consider system \eqref{eq:hybr_multi_reg_full} and function $V$ in \eqref{eq:lyapFunc_practical_V}-\eqref{eq:lyapFunc_ij_practical}. There exist $\alpha_3 \in \K_\infty$ independent of $\overline \omega$ in \eqref{eq:pars} and of $\kappa$, such that for any $\kappa>0$, any solution $x$ of \eqref{eq:hybr_multi_reg_full} satisfies (denoting $\dom x=\bigcup\nolimits\limits_{\mathfrak{j}=0}^{J} [\mathfrak{t}_\mathfrak{j}, \mathfrak{t}_{\mathfrak{j}+1}] \times \{\mathfrak{j}\}$, possibly with $J=+\infty$)\\
(i) for all $\mathfrak{j} \in \{0, \ldots, J\}$ and almost all $\mathfrak{t} \in [\mathfrak{t}_\mathfrak{j}, \mathfrak{t}_{\mathfrak{j}+1}]$,
  \begin{subequations} 
\begin{align}
 &\frac{d}{d\mathfrak{t}} V(x(\mathfrak{t},\mathfrak{j})) \leq - \kappa\underline \lambda\alpha_3(V(x(\mathfrak{t},\mathfrak{j}))) + c\overline\omega,
 \label{eq:flowCond_traj}
\end{align}
with $c$ defined in Theorem~\ref{thm:practical_stability};\\
(ii) for all $\mathfrak{j} \in \{0, \ldots, J-1\}$,
\begin{align}
&V(x(\mathfrak{t},\mathfrak{j}+1)) \leq V(x(\mathfrak{t},\mathfrak{j})). 
\label{eq:jumpCond_traj}
\end{align}
\end{subequations}\null \hfill $\square$
\end{proposition}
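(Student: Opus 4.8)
The plan is to establish the two claims separately, since \eqref{eq:flowCond_traj} concerns the behaviour of $V$ during flows and \eqref{eq:jumpCond_traj} its variation across jumps, exploiting throughout that $V$ depends on the state only through the mismatch coordinates $\tilde\theta_{ij}=\theta_j-\theta_i+2q_{ij}\pi$, $(i,j)\in\E$. Writing $W(s):=\int_0^s\sigma(\sat_{\pi+\delta}(\tau))\,d\tau$, we have $V_{ij}(x)=W(\tilde\theta_{ij})$ from \eqref{eq:lyapFunc_ij_practical}, and I would first record two elementary facts about $W$: it is locally Lipschitz (hence so is $V$), and since $\sigma$ is odd with $\sgn(s)\sigma(s)\ge\alpha(|s|)>0$ for $s\neq0$ by Property~\ref{prop:sigma}, $W$ is even and strictly increasing in $|s|$. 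I would also note that on $C$ the saturation is inactive, so that $W'(\tilde\theta_{ij})=\sigma(\tilde\theta_{ij})$ coincides with the corresponding entry of $\boldsymbol\sigma(x)$ while flowing.

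For the flow bound, the central object is the set-valued Lie derivative of the locally Lipschitz $V$ along $F$,
\[
\widetilde{\mathcal{L}}_FV(x):=\{a\in\real:\exists\,v\in F(x)\text{ s.t. }\zeta^\top v=a,\ \forall\,\zeta\in\partial V(x)\},
\]
with $\partial V$ the Clarke generalized gradient; once $V$ is shown to be \emph{non-pathological} (\cite{BacCer03}), the map $\mathfrak{t}\mapsto V(x(\mathfrak{t},\mathfrak{j}))$ is differentiable for almost all $\mathfrak{t}$ with derivative in $\widetilde{\mathcal{L}}_FV(x(\mathfrak{t},\mathfrak{j}))$. The key computation is the identification of $\partial V$: the affine map $\Phi(x):=B^\top\theta+2\pi q=\tilde\theta$ has Jacobian $\begin{bmatrix}B^\top & 2\pi I_m\end{bmatrix}$, which is surjective, and in the $\tilde\theta$-coordinates $V$ is the \emph{separable} sum $\sum_{(i,j)}W(\tilde\theta_{ij})$, whose Clarke gradient factorizes into the product of the scalar gradients $\partial W(\tilde\theta_{ij})=\widehat\sigma(\tilde\theta_{ij})$ in \eqref{eq:sigmahat}. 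Hence the Clarke chain rule gives the \emph{exact} equality $\partial V(x)=\{(Bw,2\pi w):w\in\widehat\Sigma(x)\}$. I would then take any $a\in\widetilde{\mathcal{L}}_FV(x)$, write its witness as $v=(\widehat{\boldsymbol\omega}-\kappa B\widehat{\boldsymbol\sigma},\boldsymbol{0}_m)$ with $\widehat{\boldsymbol\omega}\in\widehat\Omega$ and $\widehat{\boldsymbol\sigma}\in\widehat\Sigma(x)$, and evaluate $a=\zeta^\top v$ at the convenient element $\zeta=(B\widehat{\boldsymbol\sigma},2\pi\widehat{\boldsymbol\sigma})\in\partial V(x)$ (legitimate precisely because $a$ is common to all $\zeta$). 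Since the $q$-part of $v$ is zero, this yields the clean expression $a=\widehat{\boldsymbol\sigma}^\top B^\top\widehat{\boldsymbol\omega}-\kappa\,\widehat{\boldsymbol\sigma}^\top B^\top B\widehat{\boldsymbol\sigma}$.

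It then remains to bound this quantity uniformly over $a\in\widetilde{\mathcal{L}}_FV(x)$. By H\"older's inequality together with $|\widehat{\boldsymbol\sigma}|_\infty\le c$ and $|B^\top\widehat{\boldsymbol\omega}|_1\le\overline\omega$ from \eqref{eq:pars}, the first term is at most $c\overline\omega$; by $\lambda_{\min}(B^\top B)=\underline\lambda$ the second is at most $-\kappa\underline\lambda|\widehat{\boldsymbol\sigma}|^2$. Invoking Lemma~\ref{lem:lowerbound_sigma} ($|\widehat{\boldsymbol\sigma}|^2\ge\eta(|x|_\A)$) and Lemma~\ref{lem:sandwich_etal} ($|x|_\A\ge\alpha_2^{-1}(V(x))$), and setting $\alpha_3:=\eta\circ\alpha_2^{-1}\in\K_\infty$ (manifestly independent of $\kappa$ and $\overline\omega$), gives $a\le-\kappa\underline\lambda\alpha_3(V(x))+c\overline\omega$, which is exactly \eqref{eq:flowCond_traj}.

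For the jump bound I would split according to which jump is active. When $x^+=g_i(x)$ (a $2\pi$ phase-unwinding from $D_i$), Lemma~\ref{lem:jump_theta_prop} guarantees every mismatch $\tilde\theta_{uv}$ is preserved, so $V(x^+)=V(x)$. When $x^+\in G_{ij}^\ext(x)$ (from $D_{ij}$), only $q_{ij}$ changes and it is reset to $\argmin_{h\in\{-1,0,1\}}|\theta_j-\theta_i+2h\pi|$, whence $|\tilde\theta_{ij}^+|\le|\tilde\theta_{ij}|$ while all other mismatches are unchanged; since $W$ is even and increasing in $|s|$, $V_{ij}(x^+)=W(\tilde\theta_{ij}^+)\le W(\tilde\theta_{ij})=V_{ij}(x)$ and hence $V(x^+)\le V(x)$, establishing \eqref{eq:jumpCond_traj}. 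The main obstacle is the flow part, specifically the rigorous passage from the pointwise chain-rule heuristic to the almost-everywhere statement in the presence of a discontinuous $\sigma$: one must certify that $V$ is non-pathological and that its Clarke gradient genuinely equals $B\widehat\Sigma(x)$ in the $\theta$-directions. The separability of $V$ in the independent mismatch coordinates and the surjectivity of the Jacobian of $\Phi$ are exactly what turn the generic chain-rule inclusion into an equality, thereby licensing the convenient selection $w=\widehat{\boldsymbol\sigma}$; this is where the non-smooth machinery postponed to Section~\ref{subsec:proofProp2} is needed.
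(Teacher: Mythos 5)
Your proposal is correct and takes essentially the same route as the paper: its Lemma~\ref{lem:geometricVdots} performs exactly your flow computation, pairing the witness $f=(\widehat{\boldsymbol\omega}-\kappa B\widehat{\boldsymbol\sigma},\boldsymbol{0}_m)$ with the matched gradient element $(B\widehat{\boldsymbol\sigma},2\pi\widehat{\boldsymbol\sigma})\in\partial V(x)$ via Lemma~\ref{lem:LieDSingleton}, then applying the H\"older and $\underline\lambda$ bounds together with Lemmas~\ref{lem:sandwich_etal} and \ref{lem:lowerbound_sigma} to obtain $\alpha_3=\eta\circ\alpha_2^{-1}$, proving the jump condition by the same case split over $D_i$ and $D_{ij}$, and passing to trajectories through the non-pathological property of $V$ (Proposition~\ref{prop:nonpathological}) and the a.e.\ inclusion $\frac{d}{d\mathfrak{t}}V(x(\mathfrak{t},\mathfrak{j}))\in\dot{\overline{V}}_F(x(\mathfrak{t},\mathfrak{j}))$. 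The only cosmetic difference is that you claim the exact identification $\partial V(x)=\{(Bw,2\pi w):w\in\widehat\Sigma(x)\}$, whereas the paper asserts (and only needs) membership of the matched selection in $\partial V(x)$.
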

We are now ready to present the proof of Theorem~\ref{thm:practical_stability}.

\begin{proofname}{Proof of Theorem~\ref{thm:practical_stability}:}
\label{subsec:proof_UGPAS}
Let $\kappa>0$ and $x$ be  a solution \eqref{eq:hybr_multi_reg_full} and denote, with a slight abuse of notation, $\dom x=\bigcup\nolimits\limits_{\mathfrak{j}=0}^{J} [\mathfrak{t}_\mathfrak{j}, \mathfrak{t}_{\mathfrak{j}+1}] \times \{\mathfrak{j}\}$ with $J\in \integer_{\geq 0} \cup \{+\infty\}$. We scale the continuous time as $\tau:= (\kappa\underline \lambda)\mathfrak{t} \in \real_{\geq 0}$ and we denote $\tau_\mathfrak{j}:= (\kappa\underline \lambda)\mathfrak{t}_\mathfrak{j}$ and  $V'(x(\cdot,\cdot))$ the time-derivative of $V$ with respect to $\tau$. From item~(i) of Proposition~\ref{prop:PropV_practical}, for all  $\mathfrak{j} \in \{0, \ldots, J\}$ and almost all $\tau \in [\tau _\mathfrak{j}, \tau _{\mathfrak{j}+1}]$,  
	\begin{align} 
		V'(x(\mathfrak{\tau},\mathfrak{j}))&=(\kappa\underline \lambda)^{-1} \dot V(x(\mathfrak{t},\mathfrak{j}))  \leq - \alpha_3(V(x(\mathfrak{t},\mathfrak{j}))) +(\kappa\underline \lambda)^{-1}c\overline\omega.
		\label{eq:flowCond_practical_tau}
	\end{align}
Combining \eqref{eq:flowCond_practical_tau} with the non-increase condition in \eqref{eq:jumpCond_traj}, we follow the steps of the proof of \cite[Lemma 2.14]{SontagIOS} to obtain an input-to-state stability bound on $V(\tilde x(\cdot,\cdot))$ where $\tilde x(\cdot,\cdot):=x((\kappa\underline \lambda)^{-1}(\cdot),(\cdot))=x(\cdot,\cdot)$, which can then be converted to a bound on $|x(\cdot,\cdot)|_\A$ using \eqref{eq:sandwich_V}, thus leading to \eqref{eq:KLpract}, where $\beta_\circ$ and $\gamma_\circ$ only depend on $\alpha_1$, $\alpha_2$ and $\alpha_3$ and are therefore independent of $\overline \omega$ and $\kappa$. Note that the dependence on $\underline \lambda$ is left implicit in $\beta_\circ$ and $\gamma_\circ$. 
\end{proofname}

\section{Prescribed finite-time stability properties}
\label{sec:fixed-time}
A useful outcome of the mild regularity conditions that we require from $\sigma$ in Property~\ref{prop:sigma} is that defining $\sigma$ to be discontinuous at the origin,
as in the sign function represented in Figure~\ref{fig:sigma}, leads to desirable sliding-like behavior of the solutions in the attractor $\A$. 
This sliding property induces interesting advantages of the behavior of solutions, as compared to the general asymptotic and practical properties characterized in Section~\ref{sec:as_stability}.

A first advantage is that, even with non-uniform natural frequencies, we prove uniform global $\KL$ asymptotic stability of $\A$ for a large enough coupling gain $\kappa$, due to the well-known ability of sliding-mode mechanisms to dominate unknown additive bounded disturbances acting on the dynamics. A second advantage is that the Lyapunov decrease characterized in Proposition~\ref{prop:PropV_practical} can be associated with a guaranteed constant negative upper bound outside $\A$, which implies finite-time convergence. Finally, since this constant upper bound can be made arbitrarily negative by taking $\kappa$ sufficiently large, we actually prove prescribed finite-time convergence 
(see (\cite{Krstic17})) when using these special discontinuous functions $\sigma$, whose peculiar features are characterized in the next lemma.

\begin{lemma} \label{lem:disc_sigma_lower_b}
Given a function $\sigma$ satisfying Property~\ref{prop:sigma}, if $\sigma$ is discontinuous at the origin, then there exists $\mu >0$ such that, for any $x \in X \setminus \A$, $|\widehat {\boldsymbol{\sigma}}| \geq \mu$, for all $\widehat {\boldsymbol{\sigma}} \in \widehat \Sigma(x)$. \null \hfill $\square$
\end{lemma}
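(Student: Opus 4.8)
The plan is to exploit the discontinuity of $\sigma$ at the origin, which forces the Krasovskii regularization $\widehat\sigma$ to contain an entire interval straddling zero at the point $s=0$. Specifically, suppose $\sigma$ is discontinuous at $0$; since $\sigma$ is odd by item~\ref{prop:sigma_symm}) of Property~\ref{prop:sigma} and satisfies the sector bound in item~\ref{prop:sigma_sect}), the one-sided limits $\sigma(0^+)=:\ell$ and $\sigma(0^-)=-\ell$ must satisfy $\ell>0$ (they cannot vanish because $\sgn(s)\sigma(s)\geq\alpha(|s|)$ pushes $\sigma$ away from zero as $s\to 0^\pm$, while oddness forces the jump to be symmetric). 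By the definition of the Krasovskii regularization \eqref{eq:sigmahat}, the set $\widehat\sigma(0)$ is the closed convex hull of the limiting values of $\sigma$ near $0$, hence $\widehat\sigma(0)\supseteq[-\ell,\ell]$, and in particular $0\in\widehat\sigma(0)$. The key point, however, is the opposite: for $s\neq 0$ the regularized value stays bounded away from zero in magnitude.

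First I would establish a scalar claim: there exists $\mu_0>0$ such that for every $s\in\dom\sigma$ and every $\varsigma\in\widehat\sigma(s)$ with $s\neq 0$, we have $|\varsigma|\geq\mu_0$; and moreover for every nonzero $s$, every element of $\widehat\sigma(s)$ has magnitude at least $\mu_0$ even if $s$ is arbitrarily close to $0$. This is where the discontinuity does the work: for $s$ in a punctured neighbourhood of the origin, $\sigma$ takes values near $\pm\ell$, so $\widehat\sigma(s)$ stays near $\ell$ in magnitude and cannot collapse toward zero as $s\to 0^+$ (and symmetrically as $s\to 0^-$). Concretely, I would take $\mu_0:=\min\{\ell,\ \inf_{|s|\geq\rho}\alpha(|s|)\}$ for a small $\rho>0$ chosen so that $\sigma$ is, say, within $\ell/2$ of $\pm\ell$ on $(0,\rho]$ and $(-\rho,0)$ respectively; on $(0,\rho]$ the regularized values stay in a neighbourhood of $\ell$ bounded below by $\ell/2$, while on $\{|s|\geq\rho\}$ the sector bound from Lemma~\ref{lem:lowerbound_sigma}'s argument gives $|\varsigma|\geq\alpha(\rho)$. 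The crucial structural fact is that because the jump in $\sigma$ is \emph{across} zero with both one-sided limits of magnitude $\ell>0$, the convex hull $\widehat\sigma(s)$ for $s\neq0$ does not reach down to $0$; only exactly at $s=0$ does $\widehat\sigma$ include $0$, but $s=0$ corresponds precisely to $\tilde\theta_{ij}=0$, i.e.\ the synchronized condition.

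Then I would lift this to the vector statement. Recall from \eqref{thetatilda} that $\widehat\Sigma(x)$ is the stacking of the maps $\widehat\sigma_{ij}=\widehat\sigma(\tilde\theta_{ij})$ over $(i,j)\in\E$, and that by the definition of $\A$ in \eqref{eq:attractor_original}, $x\notin\A$ means $\tilde\theta_{ij}\neq 0$ for at least one edge $(i,j)\in\E$. Fix $x\in X\setminus\A$ and pick such an edge with $\tilde\theta_{ij}\neq 0$. For any $\widehat{\boldsymbol\sigma}\in\widehat\Sigma(x)$, its $(i,j)$-component $\varsigma_{ij}$ lies in $\widehat\sigma(\tilde\theta_{ij})$ with $\tilde\theta_{ij}\neq0$, so by the scalar claim $|\varsigma_{ij}|\geq\mu_0$. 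Since the Euclidean norm dominates any single component, $|\widehat{\boldsymbol\sigma}|\geq|\varsigma_{ij}|\geq\mu_0$. Setting $\mu:=\mu_0$ yields the desired uniform lower bound for all $x\in X\setminus\A$ and all $\widehat{\boldsymbol\sigma}\in\widehat\Sigma(x)$.

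The main obstacle I anticipate is the careful handling of the scalar regularization near the discontinuity: I must verify rigorously that $\widehat\sigma(s)$ for $s\neq 0$ genuinely stays bounded away from $0$, rather than the convex hull inadvertently including values near zero. This hinges on the fact that the discontinuity is \emph{only} at the origin (piecewise continuity from Property~\ref{prop:sigma} guarantees isolated discontinuities, and we assume the one at $0$) and that the regularization at $s\neq0$ only convexifies over a small neighbourhood $[s-\varepsilon,s+\varepsilon]$ as $\varepsilon\to0$, which excludes the origin once $|s|>\varepsilon$; so $\widehat\sigma(s)$ reduces to the convex hull of one-sided limits of $\sigma$ at $s$, all of which have magnitude at least $\mu_0$ by the sector bound and the behaviour near the jump. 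Making this neighbourhood argument precise — in particular confirming that no element of $\widehat\sigma(s)$ dips below $\mu_0$ for small nonzero $s$ where $\sigma$ may itself be continuous but close to $\pm\ell$ — is the delicate part of the proof.
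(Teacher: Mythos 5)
Your proposal is correct and follows essentially the same route as the paper's proof: nonzero symmetric one-sided limits $\pm\sigma_\circ$ at the origin (from oddness, $\sigma(0)=0$, and the assumed discontinuity), a uniform scalar bound $|\sigma(s)|\geq\mu$ on $\dom \sigma \setminus \{0\}$ obtained by splitting near/far from the origin via $\min(\sigma_\circ/2,\alpha(\varepsilon_\circ))$, transfer of this bound to the Krasovskii regularization $\widehat\sigma(s)$ for $s\neq 0$, and then the stacking argument picking an edge with $\tilde\theta_{ij}\neq 0$ for $x\notin\A$. Two cosmetic notes: your displayed $\mu_0:=\min\{\ell,\inf_{|s|\geq\rho}\alpha(|s|)\}$ should read $\min\{\ell/2,\alpha(\rho)\}$ to match your ``within $\ell/2$'' choice (your prose already says this), and $\ell>0$ follows from the discontinuity hypothesis itself rather than from the sector bound, which only yields $\ell\geq 0$ since $\alpha(|s|)\to 0$ as $s\to 0$.
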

 
\begin{proof}
Since $\sigma$ is discontinuous at 0 and it is piecewise continuous, there exists $\varepsilon>0$ such that $\sigma$ is continuous in $[-\varepsilon, 0)$ and $(0, \varepsilon]$. By item~\ref{prop:sigma_symm}) of Property~\ref{prop:sigma}, $\lim\limits_{s \to 0^+} \sigma(s)=-\lim\limits_{s \to 0^-} \sigma(s)=:\sigma_{\circ}\neq 0$ as $\sigma$ is discontinuous at 0 and $\sigma(0)=0$. Then there exists $\varepsilon_\circ \in (0, \varepsilon]$ such that $\sigma (s) \geq \frac{\sigma_{\circ}}{2}$ for all $s\in (0, \varepsilon_\circ]$. From item~\ref{prop:sigma_sect}) of Property~\ref{prop:sigma}, for any $s \in [\varepsilon_\circ, \pi + \delta]$, $\sigma (s) \geq \alpha (\varepsilon_\circ)>0$. Hence, due to item~\ref{prop:sigma_symm}) of Property~\ref{prop:sigma}, $|\sigma (s)| \geq \mu:=\min(\frac{\sigma_\circ}{2},\alpha (\varepsilon_\circ))$ for all $s \in  \dom \sigma \setminus \{0\}$. Moreover, in view of \eqref{eq:sigmahat}, for any $s \in  \dom \sigma \setminus \{0\}$ and any $\varsigma \in \widehat{\sigma}(s)$, $|\varsigma|\geq \mu$. Since, for any $x \in X$, $\widehat{\Sigma}(x)$ is the stacking of all the set-valued maps $\widehat{\sigma}_{ij}=\widehat{\sigma}$, $(i,j) \in \E$, and by definition of $\A$, for any $x\in X \setminus \A$, there exists at least one nonzero element $\tilde{\theta}_{ij} \neq 0$ for some $(i,j) \in \E$. Then  $\widehat{\boldsymbol{\sigma}} \in \widehat{\Sigma}(x)$ implies $|\widehat{\boldsymbol{\sigma}}| \geq |\widehat{\sigma}_{ij}(x)| \geq \mu$, thus concluding the proof. 
\end{proof}

Paralleling the structure of Proposition~\ref{prop:PropV_practical}, the next proposition, whose proof is postponed to Section~\ref{subsec:proofProp3}, is a key result for proving Theorem~\ref{thm:fixed-time}.

\begin{proposition}
\label{prop:PropV_finite_time}
	Consider system \eqref{eq:hybr_multi_reg_full} and function $V$ in \eqref{eq:lyapFunc_practical_V}-\eqref{eq:lyapFunc_ij_practical}. If $\sigma$ is discontinuous at the origin, then there exist $\mu \in \real_{>0}$ independent of $\overline \omega$ in \eqref{eq:pars} and $\kappa^\star>0 $ such that for each $\kappa \geq \kappa^\star$ any solution $x$ of \eqref{eq:hybr_multi_reg_full} satisfies (denoting $\dom x=\bigcup\nolimits\limits_{\mathfrak{j}=0}^{J} [\mathfrak{t}_\mathfrak{j}, \mathfrak{t}_{\mathfrak{j}+1}] \times \{\mathfrak{j}\}$, possibly with 
	$J = +\infty$)\\
	(i) for all $\mathfrak{j} \in \{0, \ldots, J\}$ and almost all $\mathfrak{t} \in [\mathfrak{t}_\mathfrak{j}, \mathfrak{t}_{\mathfrak{j}+1}]$ such that $x(\mathfrak{t},\mathfrak{j}) \notin \A$,
	\begin{subequations} 
		\begin{align}
			&\frac{d}{d\mathfrak{t}} V(x(\mathfrak{t},\mathfrak{j})) \leq -\frac{1}{2}\kappa \underline\lambda \mu^2;
			\label{eq:flowCond_traj_finite_time}
		\end{align}
		(ii) for all $\mathfrak{j} \in \{0, \ldots, J-1\}$,
		\begin{align}
			&V(x(\mathfrak{t},\mathfrak{j}+1)) \leq V(x(\mathfrak{t},\mathfrak{j})). 
			\label{eq:jumpCond_traj_finite_time}
		\end{align}
	\end{subequations} \null \hfill $\square$
\end{proposition}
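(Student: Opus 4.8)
The plan is to mirror the proof of Proposition~\ref{prop:PropV_practical}, since the flow map $F$ in \eqref{eq:F_reg_full}, the jump map $G$, and the Lyapunov-like function $V$ in \eqref{eq:lyapFunc_practical_V}-\eqref{eq:lyapFunc_ij_practical} are all unchanged; only the final scalar estimate along flows is modified. For item~(ii) I would reuse verbatim the jump analysis of Proposition~\ref{prop:PropV_practical}: across a jump from $D_i$ all mismatches $\tilde\theta_{uv}$ are preserved by Lemma~\ref{lem:jump_theta_prop}, while across a jump from $D_{ij}$ only $q_{ij}$ is reset to the $\argmin$ of $|\tilde\theta_{ij}|$ by Lemma~\ref{lem:jump_k_prop}; since each $V_{ij}$ is an even function of $\tilde\theta_{ij}$ that is nondecreasing in $|\tilde\theta_{ij}|$ (because $\sigma$ is odd and strictly positive on $(0,\pi+\delta]$ by Property~\ref{prop:sigma}), every jump leaves $V$ nonincreasing, giving \eqref{eq:jumpCond_traj_finite_time}. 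This part is insensitive to the discontinuity of $\sigma$ at the origin.

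The substance is item~(i). Along flows the state obeys $\dot\theta=\widehat{\boldsymbol\omega}-B\kappa\widehat{\boldsymbol\sigma}$ and $\dot q=0$ for some $\widehat{\boldsymbol\omega}\in\widehat\Omega$ and $\widehat{\boldsymbol\sigma}\in\widehat\Sigma(x)$, by \eqref{eq:F_reg_full}. As $V$ is locally Lipschitz and non-pathological, the non-smooth chain rule developed in Section~\ref{sec:proofs} gives, for almost every flow time, that $\tfrac{d}{d\mathfrak t}V(x)=\zeta^\top\dot x$ for \emph{every} $\zeta$ in the Clarke generalized gradient $\partial_C V(x)$. Exploiting the separable structure $V=\sum_{(i,j)\in\E}W_{ij}(\tilde\theta_{ij})$, with $\tilde\theta_{ij}$ affine in $x$ and each $W_{ij}$ regular at its (convex) kink at the origin, the $\theta$-block of $\partial_C V(x)$ is exactly $\{B\boldsymbol s:\ s_{ij}\in\widehat\sigma(\tilde\theta_{ij})\}$, so I may choose $\zeta$ with $\theta$-block $B\widehat{\boldsymbol\sigma}$, i.e. the very selection appearing in $\dot\theta$. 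This produces the same intermediate flow estimate as in Proposition~\ref{prop:PropV_practical},
\begin{align*}
\tfrac{d}{d\mathfrak t}V(x)=\widehat{\boldsymbol\sigma}^\top B^\top\widehat{\boldsymbol\omega}-\kappa\,|B\widehat{\boldsymbol\sigma}|^2\leq c\,\overline\omega-\kappa\,\underline\lambda\,|\widehat{\boldsymbol\sigma}|^2,
\end{align*}
where the first term is bounded by H\"older's inequality together with $c=\max_{s\in\dom\sigma}\widehat\sigma(s)$ and \eqref{eq:pars}, and the second uses $|B\widehat{\boldsymbol\sigma}|^2\geq\underline\lambda|\widehat{\boldsymbol\sigma}|^2$ with $\underline\lambda>0$ from Lemma~\ref{lem:B_prop}.

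The single new ingredient is the closing step: in place of Lemma~\ref{lem:lowerbound_sigma} I would invoke Lemma~\ref{lem:disc_sigma_lower_b}, available precisely because $\sigma$ is discontinuous at the origin, to secure the \emph{uniform} bound $|\widehat{\boldsymbol\sigma}|^2\geq\mu^2$ for all $\widehat{\boldsymbol\sigma}\in\widehat\Sigma(x)$ whenever $x\notin\A$. Substituting yields $\tfrac{d}{d\mathfrak t}V(x)\leq c\overline\omega-\kappa\underline\lambda\mu^2$ on $X\setminus\A$, and setting $\kappa^\star:=2c\overline\omega/(\underline\lambda\mu^2)$ forces $c\overline\omega\leq\tfrac12\kappa\underline\lambda\mu^2$ for every $\kappa\geq\kappa^\star$, delivering \eqref{eq:flowCond_traj_finite_time}; note that $\mu$ is independent of $\overline\omega$, as claimed. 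I expect the main obstacle to be the non-smooth justification that the selection $\widehat{\boldsymbol\sigma}$ driving $\dot\theta$ can be used inside $\partial_C V(x)$ — namely the non-pathological chain rule and the identification of $\partial_C V$ with the full product of the $\widehat\sigma(\tilde\theta_{ij})$ — but this is exactly the machinery already assembled for Proposition~\ref{prop:PropV_practical} and may be imported wholesale. The genuinely finite-time feature is the \emph{constant} (rather than $V$-dependent) negative bound, which rests entirely on the uniform lower bound $\mu$ furnished by the discontinuity.
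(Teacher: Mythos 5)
Your proposal is correct and follows essentially the same route as the paper: the jump condition is inherited verbatim from Proposition~\ref{prop:PropV_practical}, the flow bound is obtained by selecting the element $(B\widehat{\boldsymbol\sigma},2\pi\widehat{\boldsymbol\sigma})$ of Clarke's gradient matching the selection in $F(x)$ (the paper's Lemma~\ref{lem:LieDSingleton} with $\varphi(x,f)=(B\boldsymbol{\sigma}_{\!f},2\pi\boldsymbol{\sigma}_{\!f})$, combined with the non-pathological chain rule), and Lemma~\ref{lem:disc_sigma_lower_b} replaces Lemma~\ref{lem:lowerbound_sigma} to yield the uniform bound $|\widehat{\boldsymbol\sigma}|\geq\mu$ off $\A$. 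You even arrive at the paper's exact threshold $\kappa^\star=2c\overline{\omega}/(\underline\lambda\mu^2)$, so the argument matches the paper's Lemma~\ref{lem:geometricVdots_finitetime} and its use in the proof of Proposition~\ref{prop:PropV_finite_time} in all essentials.
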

Exploiting Lemma~\ref{lem:disc_sigma_lower_b} and Proposition~\ref{prop:PropV_finite_time}, we can follow similar steps to those in the proof of Theorem~\ref{thm:practical_stability} to show the following main result on uniform global $\KL$ asymptotic stability and
prescribed finite-time stability of $\A$ for \eqref{eq:hybr_multi_reg_full}.
\begin{theorem}
\label{thm:fixed-time}
If $\sigma$ is discontinuous at the origin, then set $\A$ in \eqref{eq:attractor_original} is prescribed finite-flowing-time stable for \eqref{eq:hybr_multi_reg_full}, i.e., for each $T>0$ there exists $\kappa^\star>0$ such that for each $\kappa \geq \kappa^\star$:\\
(i) there exists $\beta \in \KL$ such that all solutions $x$ satisfy \eqref{eq:KLbound};\\
(ii) all solutions $x$ satisfy, $x(\mathfrak{t},\mathfrak{j}) \in \A$ for all $(\mathfrak{t},\mathfrak{j}) \in \dom x$ with $\mathfrak{t}\geq T$. \null \hfill $\square$
\end{theorem}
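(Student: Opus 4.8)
The plan is to use Proposition~\ref{prop:PropV_finite_time} as the main engine, mirroring how Proposition~\ref{prop:PropV_practical} drives the proofs of Theorems~\ref{thm:practical_stability} and \ref{thm:KLbound_t+j}, but now exploiting that the flow decrease \eqref{eq:flowCond_traj_finite_time} is by a \emph{constant} negative amount rather than by a class-$\K_\infty$ term. Fix $T>0$ and let $\kappa \geq \kappa^\star$, with $\kappa^\star$ chosen below. Along any solution $x$, conditions \eqref{eq:flowCond_traj_finite_time} and \eqref{eq:jumpCond_traj_finite_time} show that $\mathfrak{t} \mapsto V(x(\mathfrak{t},\mathfrak{j}))$ is non-increasing across both flows and jumps, and decreases at rate at least $\tfrac12 \kappa\underline\lambda\mu^2$ on every flow interval where $x\notin\A$. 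Since $V\geq 0$, the total flowing time elapsed before a solution first reaches $V=0$ is at most $2V(x(0,0))/(\kappa\underline\lambda\mu^2)$. Because $X$ is compact and $V$ is continuous, $\overline V := \max_{x\in X}V(x) < \infty$, so this entry time is uniformly bounded by $2\overline V/(\kappa\underline\lambda\mu^2)$; choosing $\kappa^\star \geq \max\{\kappa^\star_\circ,\, 2\overline V/(T\underline\lambda\mu^2)\}$, where $\kappa^\star_\circ$ is the threshold supplied by Proposition~\ref{prop:PropV_finite_time}, forces the entry time to be $\leq T$.

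Next I would establish forward invariance of $\A$, which is what promotes ``$V$ reaches $0$ within flowing time $T$'' into item~(ii), namely $x(\mathfrak{t},\mathfrak{j})\in\A$ for all $\mathfrak{t}\geq T$. Jumps preserve $\A$ trivially: if $x\in\A$ then $V(x)=0$, and \eqref{eq:jumpCond_traj_finite_time} together with $V\geq 0$ force $V(x^+)=0$, i.e.\ $x^+\in\A$ by Lemma~\ref{lem:sandwich_etal}. For flows, suppose $V(x(\mathfrak{t}_\ast,\mathfrak{j}))=0$ but $V(x(\mathfrak{t}_1,\mathfrak{j}))>0$ for some $\mathfrak{t}_1>\mathfrak{t}_\ast$ in the same flow interval; letting $\mathfrak{t}_2$ be the last instant at which $V$ vanishes on $[\mathfrak{t}_\ast,\mathfrak{t}_1]$, the solution stays outside $\A$ on $(\mathfrak{t}_2,\mathfrak{t}_1]$, so integrating \eqref{eq:flowCond_traj_finite_time} gives $V(x(\mathfrak{t}_1,\mathfrak{j}))\leq -\tfrac12\kappa\underline\lambda\mu^2(\mathfrak{t}_1-\mathfrak{t}_2)<0$, contradicting $V\geq 0$. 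Hence $\A$ is forward invariant. Invoking $\mathfrak{t}$-completeness from Proposition~\ref{prop:t-comp}, the flowing time of maximal solutions is unbounded, so hybrid times with $\mathfrak{t}\geq T$ do exist, and item~(ii) follows.

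For item~(i) I would convert the constant flow decrease into a class-$\KL$ bound. Integrating \eqref{eq:flowCond_traj_finite_time} and using \eqref{eq:jumpCond_traj_finite_time} across jumps yields $V(x(\mathfrak{t},\mathfrak{j}))\leq \max\{0,\,V(x(0,0))-\tfrac12\kappa\underline\lambda\mu^2\,\mathfrak{t}\}$; combining with the sandwich bound \eqref{eq:sandwich_V} gives $|x(\mathfrak{t},\mathfrak{j})|_\A \leq \alpha_1^{-1}\!\big(\max\{0,\alpha_2(|x(0,0)|_\A)-\tfrac12\kappa\underline\lambda\mu^2\mathfrak{t}\}\big)$, whose right-hand side is a class-$\KL$ function of $(|x(0,0)|_\A,\mathfrak{t})$ that in fact reaches $0$ in finite flowing time. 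I would then trade the flowing time $\mathfrak{t}$ for the hybrid time $\mathfrak{t}+\mathfrak{j}$ using the uniform average dwell-time of Proposition~\ref{prop:t-comp}, verbatim as in the proof of Theorem~\ref{thm:KLbound_t+j}, yielding a $\beta\in\KL$ satisfying \eqref{eq:KLbound}.

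The main obstacle I anticipate is the forward-invariance step: the decrease \eqref{eq:flowCond_traj_finite_time} is only asserted where $x\notin\A$, so one cannot simply claim ``$\dot V\leq 0$ on $\A$''. The contradiction argument above, which carefully isolates the last instant at which $V$ vanishes within a flow interval, is precisely what rules out the regularized solution sliding back out of $\A$. A secondary point of care is that Proposition~\ref{prop:PropV_finite_time} already packages the set-valued Lie derivative justification of \eqref{eq:flowCond_traj_finite_time} uniformly in the disturbance $\boldsymbol\omega$ (via the lower bound $\mu$ of Lemma~\ref{lem:disc_sigma_lower_b}), so no further non-smooth analysis is needed here beyond quoting it.
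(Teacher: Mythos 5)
Your proposal is correct and takes essentially the same route as the paper's proof: forward invariance of $\A$ (jumps via the non-increase of $V$ and positive definiteness, flows via a contradiction argument), integration of the constant decrease \eqref{eq:flowCond_traj_finite_time} with \eqref{eq:jumpCond_traj_finite_time} to get $V(x(\mathfrak{t},\mathfrak{j}))\leq\max\{0,\,V(x(0,0))-\tfrac{1}{2}\kappa\underline\lambda\mu^2\mathfrak{t}\}$, the sandwich bound \eqref{eq:sandwich_V} plus the dwell-time conversion of Theorem~\ref{thm:KLbound_t+j} for item (i), and the uniform entry-time bound $2\overline{V}/(\kappa\underline\lambda\mu^2)$ with $\kappa^\star$ chosen so that this is at most $T$ for item (ii). If anything, your flow-invariance step (isolating the last instant at which $V$ vanishes and integrating \eqref{eq:flowCond_traj_finite_time} only over the subsequent subinterval where the solution is outside $\A$) is executed more carefully than the paper's corresponding passage, which reaches the same contradiction with a looser description of the exit instant.
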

\begin{proof}
We start showing that $G(D\cap \A)\subset\A$. Indeed, we notice that $D_{ij}\cap \A =\emptyset$ for any $(i,j)\in\E$, and thus $G(D_{ij}\cap \A)\subset\A$ trivially holds. Moreover, from Lemma~\ref{lem:jump_theta_prop},  it holds that $G(D_{i}\cap \A)\subset\A$ for any $i\in\V$. Hence, from \eqref{eq:D_multi}, we conclude that $G(D\cap\A)\subset\A$. To establish that $\A$ is (strongly) forward invariant for \eqref{eq:hybr_multi_reg_full}, it is left to prove that solutions cannot leave $\A$ while flowing. We proceed by contradiction and for this purpose suppose there exists a solution $x_{\text{bad}}$ to \eqref{eq:hybr_multi_reg_full} such that $x_{\text{bad}}(0,0)\in\A$ and  $x_{\text{bad}}(\mathfrak{t}^*,0)\notin\A$ for some $\mathfrak{t}^*>0$ with $(\mathfrak{t}^*,0) \in \dom x_{\text{bad}}$. From continuity of flowing solutions between any two successive jumps  and closedness of $\A$, there exists $x_{\text{bad}} (\mathfrak{t},0) \in \A$ for all $\mathfrak{t} \in [0,{\mathfrak{t}}^*)$ and $x_{\text{bad}} ({\mathfrak{t}}^*,0) \notin \A$. Hence, from \eqref{eq:lyapFunc_practical_V} and \eqref{eq:lyapFunc_ij_practical} and positive definiteness of $V$,  we have  $0=V(x_{\text{bad}} (0,0))<V(x_{\text{bad}} ({\mathfrak{t}}^*,0))$. However, since the solution is flowing, integrating  \eqref{eq:flowCond_traj_finite_time} over the continuous time interval $[0, {\mathfrak{t}}^*]$ we obtain $V(x_{\text{bad}} ({\mathfrak{t}}^*,0))<V(x_{\text{bad}} (0,0))$, which establishes a contradiction. Consequently, a solution cannot leave $\A$ while flowing. We have proven that the set $\A$ is (strongly) forward invariant, implying that if $x(\mathfrak{t},\mathfrak{j}) \in X \setminus \A$ then $x(\mathfrak{t'},\mathfrak{j'}) \in X \setminus \A$, for any $\mathfrak{t'}+\mathfrak{j'} \leq \mathfrak{t}+\mathfrak{j}$, with $(\mathfrak{t'},\mathfrak{j'}),(\mathfrak{t},\mathfrak{j})\in \dom x$. Let $\kappa \geq \kappa ^\star$ with $\kappa ^\star$ defined in  Proposition~\ref{prop:PropV_finite_time} and $x$ be a solution to \eqref{eq:hybr_multi_reg_full}. Combining \eqref{eq:flowCond_traj_finite_time} with the non-increase condition \eqref{eq:jumpCond_traj_finite_time} and the forward invariance of $\A$, we obtain  by integration for any $(\mathfrak{t},\mathfrak{j})\in \dom x$
 \begin{align}
	\label{eq:V_fin_time}
	\begin{split}
		V(x(\mathfrak{t},\mathfrak{j}))\leq -\frac{1}{2}\kappa \underline\lambda \mu^2 \mathfrak{t} + V(x(0,0)),
	\end{split}
\end{align}
whenever $x(\mathfrak{t},\mathfrak{j}) \in X \setminus \A$, and thus 
 \begin{align}
	\label{eq:V_fin_time_bound}
	\begin{split}
		V(x(\mathfrak{t},\mathfrak{j}))\leq \max(-\frac{1}{2}\kappa \underline\lambda \mu^2 \mathfrak{t} + V(x(0,0)),0),
	\end{split}
\end{align}
for any $x(\mathfrak{t},\mathfrak{j}) \in X$.
Equation \eqref{eq:V_fin_time_bound} can then be converted to a bound on $|x(\mathfrak{t},\mathfrak{j})|_\A$ using \eqref{eq:sandwich_V}. Hence, we follow the same steps used in the proofs of Theorem~\ref{thm:practical_stability} and \ref{thm:KLbound_t+j} to obtain \eqref{eq:KLbound}, thus concluding the proof of item~(i) in Theorem~\ref{thm:fixed-time}. 
In view of \eqref{eq:V_fin_time} and from the positive definiteness of $V$ with respect to $\A$, we conclude that  solutions to \eqref{eq:hybr_multi_reg_full} reach the synchronization set $\A$ flowing at most for $T:=\frac{ 1}{\kappa} \frac{2 \overline{v}}{\underline \lambda \mu^2}$, where $\overline{v}:= \max\limits_{x \in X} V(x)$. Hence, in view of the forward invariance of $\A$,  item~(ii) in Theorem~\ref{thm:fixed-time} holds thus completing the proof.  
\end{proof}

Notice that phases synchronize at most at continuous time  $T=\frac{ 1}{\kappa} \frac{2 \overline{v}}{\underline \lambda \mu^2}$, in view of \eqref{eq:V_fin_time_bound}. Therefore, we may decrease $T$ at will by selecting a larger $\mu$ as in Lemma~\ref{lem:disc_sigma_lower_b} and/or by increasing the coupling gain $\kappa$.

\begin{figure}[ht]
	\begin{center}
		\includegraphics[trim={2cm 1.5cm 1.5cm 1.5cm}, clip,width=0.8\columnwidth]{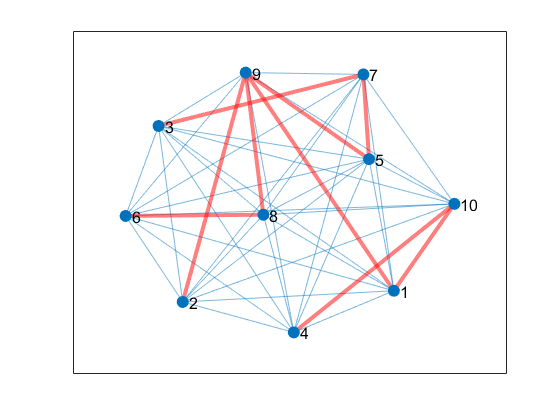}
	\end{center}
	\vspace*{-0.5cm}
	\caption{The networks of oscillators described in Section~\ref{sec:sims}: the blue graph depicts physical couplings captured by $\omega_i$ in \eqref{eq:wsims}, and the red graph is the communication tree graph that we design for the hybrid coupling rules presented in Section~\ref{sec:model_disc}.}
	\label{fig:network}
\end{figure}

\begin{figure*}[t]
\begin{minipage}[b]{0.32\textwidth}
\centering
        \includegraphics[trim={0.1cm 2.0cm 1.1cm 1.9cm},width=1\textwidth,clip]{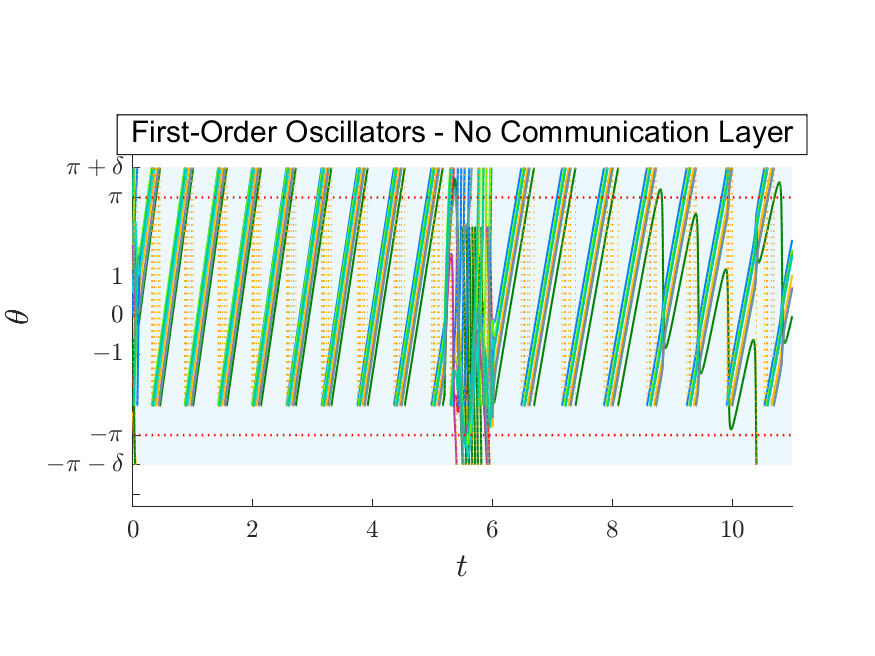} \\  
         \includegraphics[trim={2.5cm 0cm 5.5cm 0cm},width=0.363\textwidth,clip]{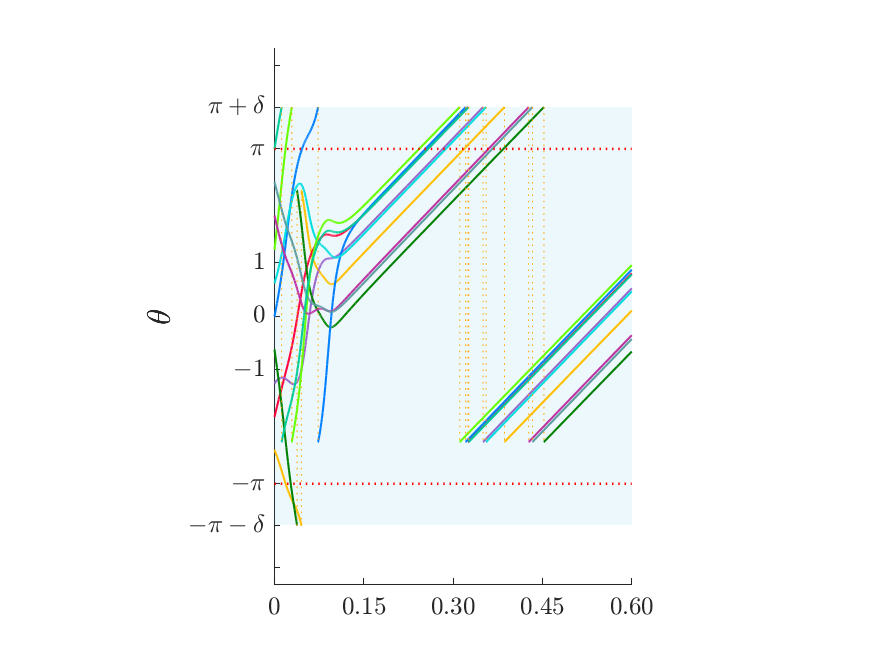}   
        \includegraphics[trim={4.5cm 0.167cm 4.8cm 0.1cm},width=0.301365\textwidth,clip]{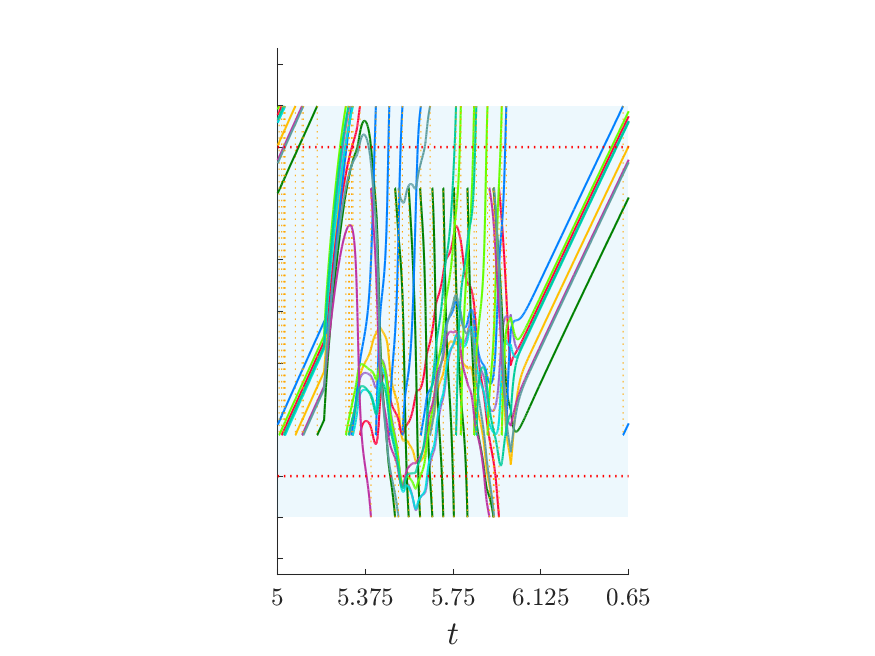} 
        \includegraphics[trim={4.5cm 0cm 5cm 0cm},width=0.28638\textwidth,clip]{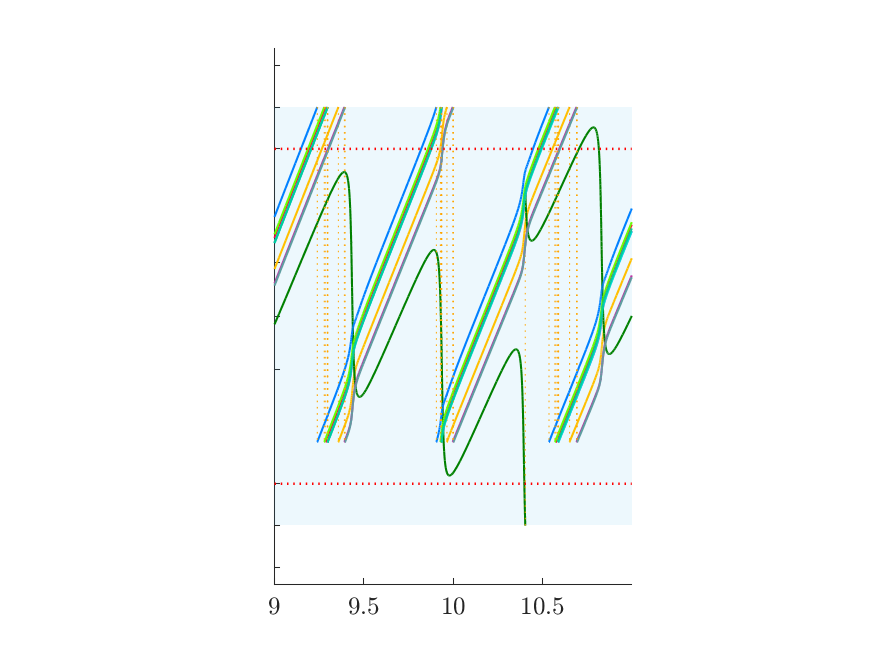}  \\  
        \includegraphics[trim={1.6cm 0.1cm 2.6cm 0.1cm},width=0.85\textwidth,clip]{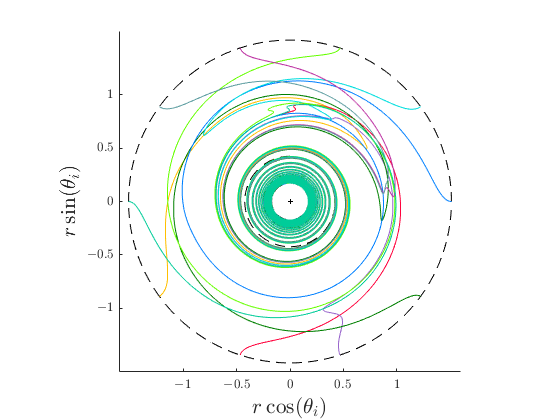} 
\end{minipage}
\begin{minipage}[b]{0.32\textwidth}
\centering
\includegraphics[trim={0.1cm  2.0cm 1.1cm 1.9cm},width=1\textwidth,clip]{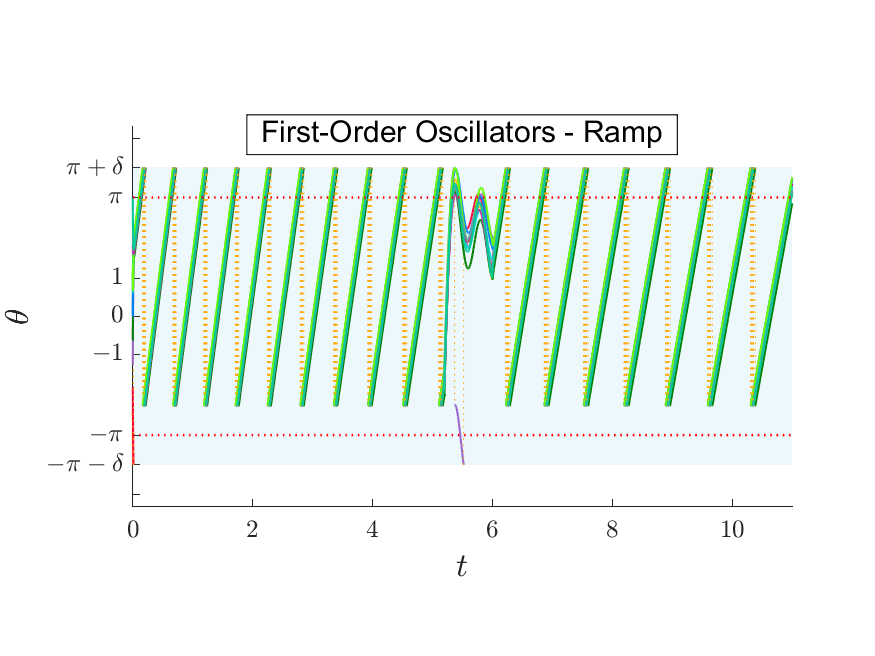} \\
                 \includegraphics[trim={2.5cm 0cm 5.5cm 0cm},width=0.363\textwidth,clip]{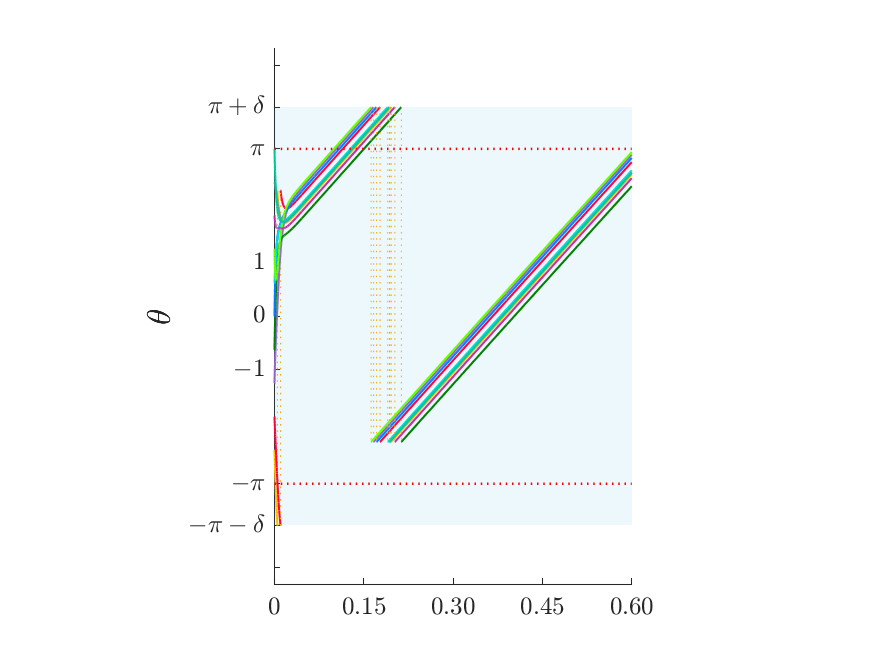}
                \includegraphics[trim={4.5cm 0.167cm 4.8cm 0.1cm},width=0.301365\textwidth,clip]{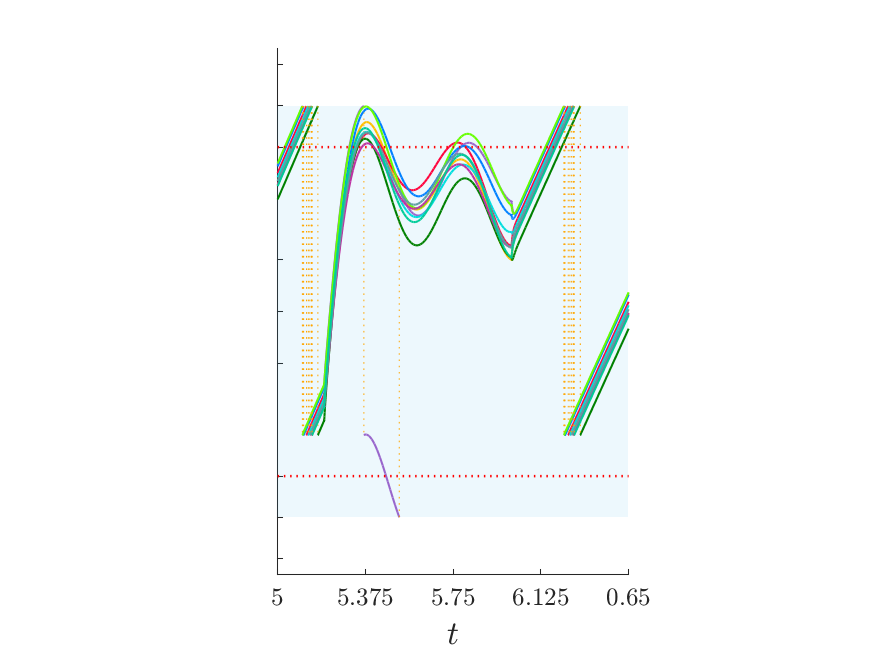} 
                \includegraphics[trim={4.5cm 0cm 5cm 0cm},width=0.28638\textwidth,clip]{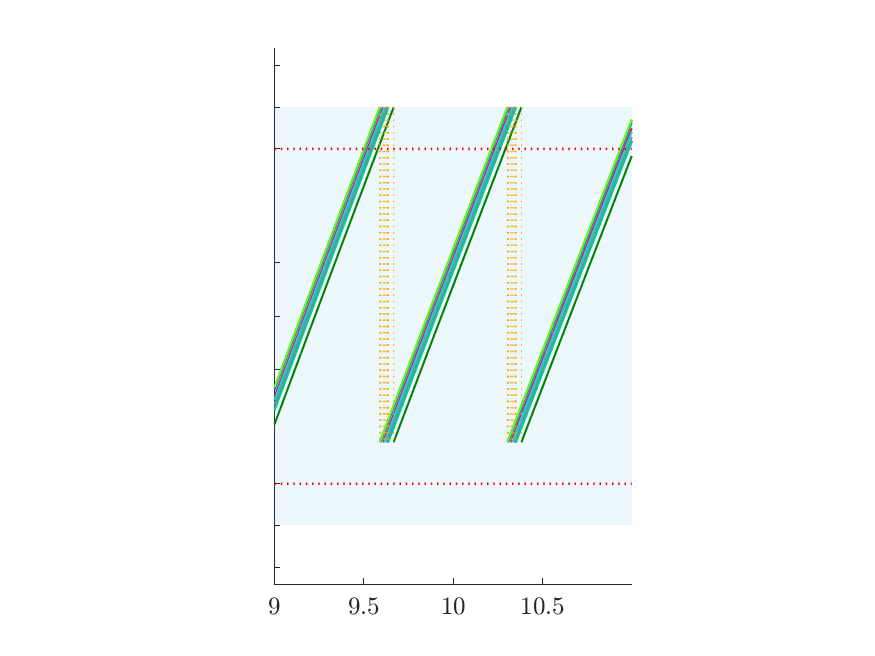}   \\
  	    \includegraphics[trim={1.6cm 0.1cm 2.6cm 0.1cm},width=0.85\textwidth,clip]{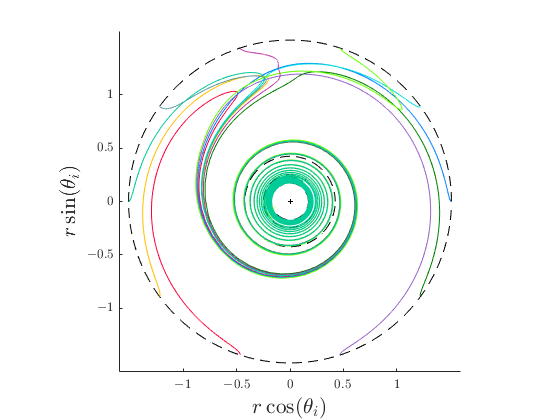}
\end{minipage}
\begin{minipage}[b]{0.32\textwidth}
\centering
\includegraphics[trim={0.1cm  2.0cm 1.1cm 1.9cm},width=1\textwidth,clip]{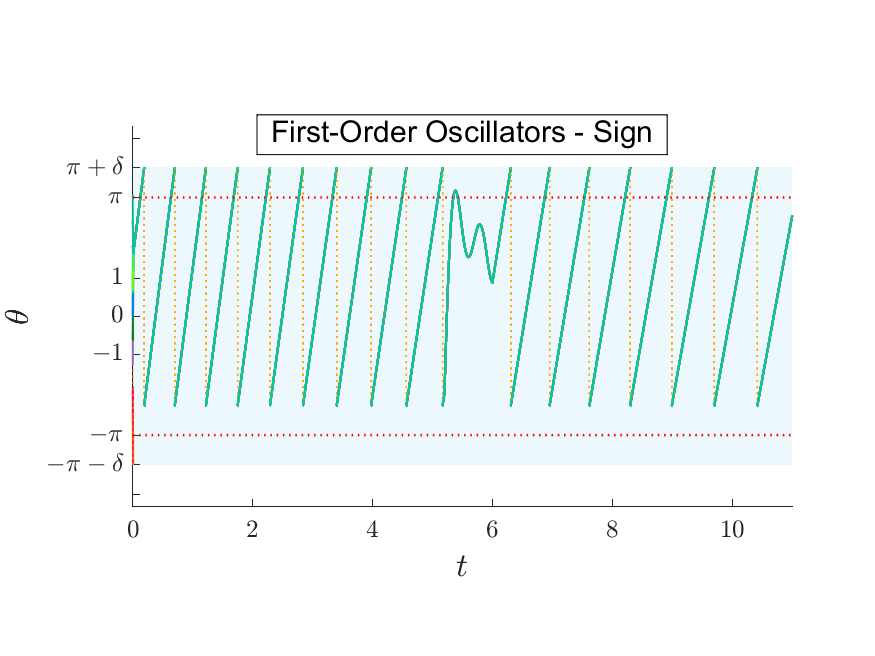}\\
                 \includegraphics[trim={2.5cm 0cm 5.5cm 0cm},width=0.363\textwidth,clip]{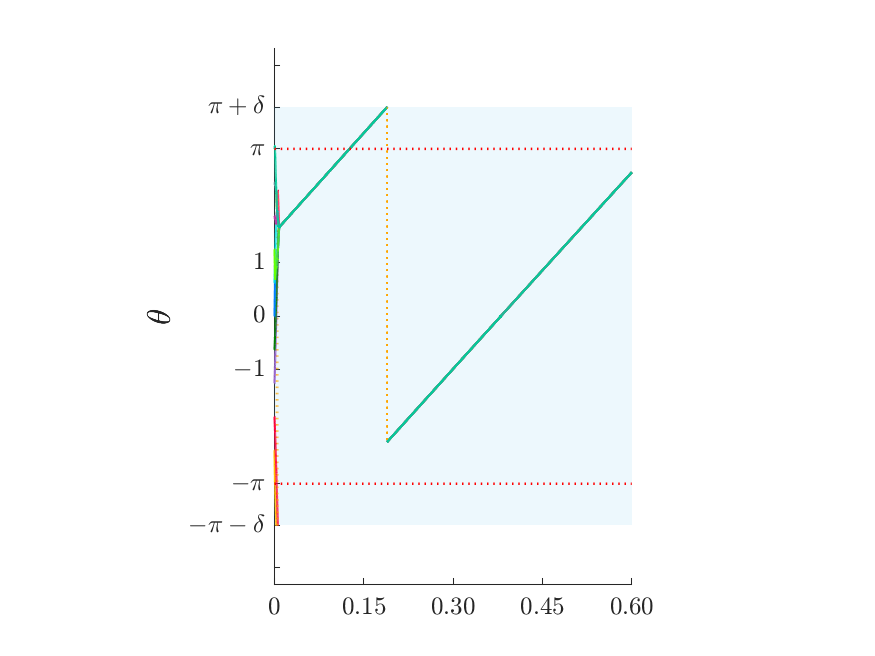}   
                \includegraphics[trim={4.5cm 0.167cm 4.8cm 0.1cm},width=0.301365\textwidth,clip]{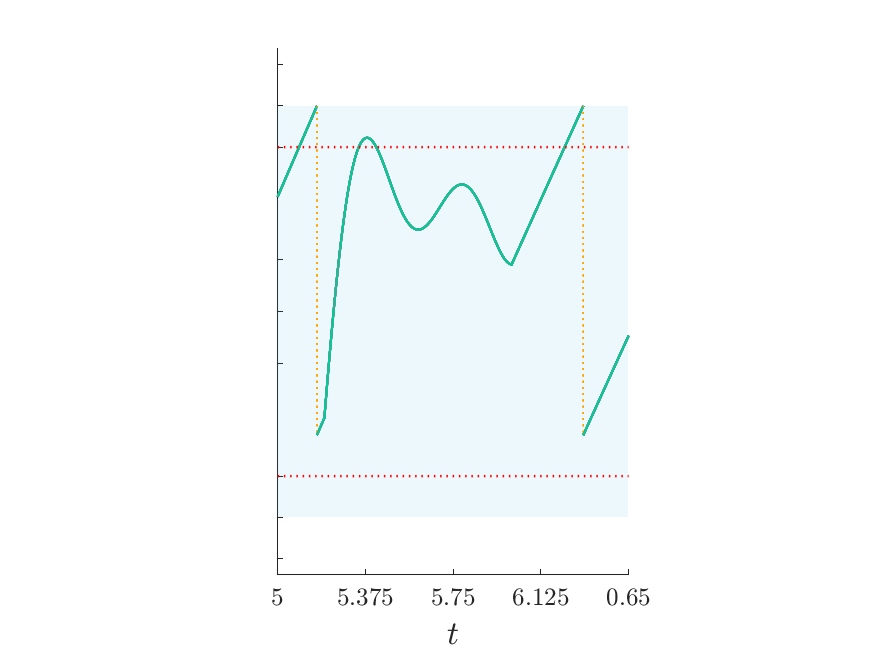} 
                \includegraphics[trim={4.5cm 0cm 5cm 0cm},width=0.28638\textwidth,clip]{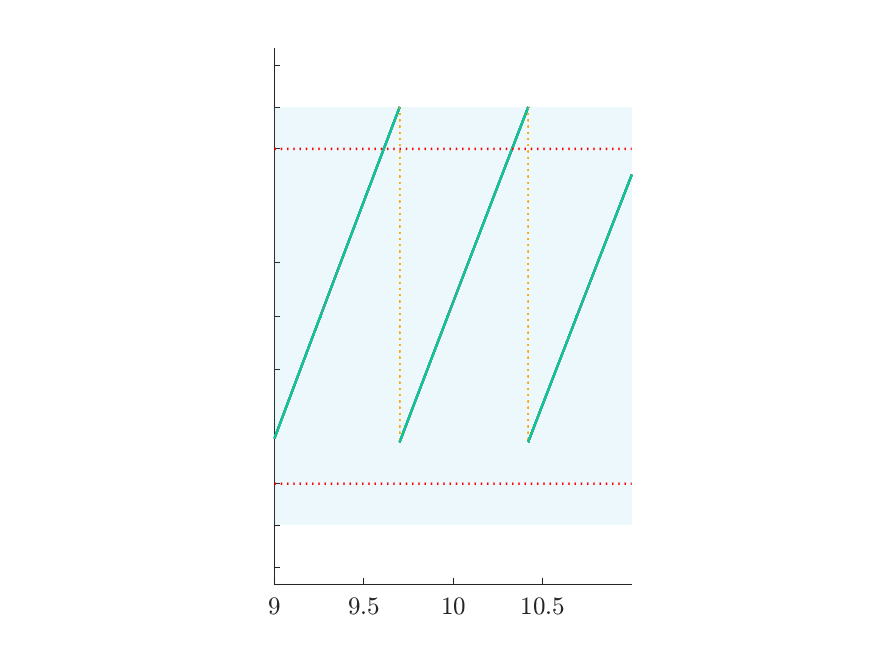} \\
  	    \includegraphics[trim={1.6cm 0.1cm 2.6cm 0.1cm},width=0.85\textwidth,clip]{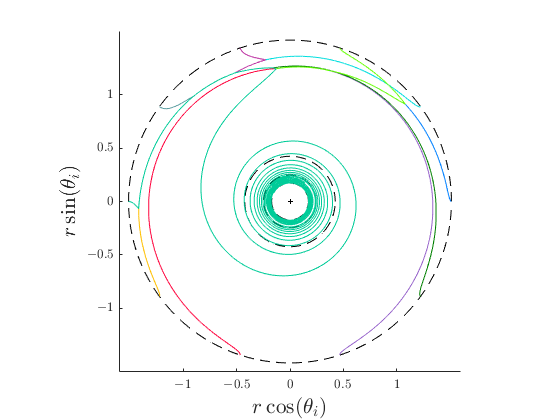}
       \end{minipage}
  \caption{(Top) Phase evolution for $\kappa= \frac{576 \pi}{10}$, $\delta=\frac{\pi}{4}$ and different selections of $\sigma$  and communication configurations. (Middle) Phase evolution in the time intervals $[0, 0.41]$, $[5,5.2]$ and $[9, 10.75]$. (Bottom) Evolution of the pair $(r\cos(\theta_{i}),r\sin(\theta_{i}))$, with  $r(\mathfrak{t})=(1.55\sqrt{\mathfrak{t}}+0.66)^{-1}$, showing radially the continuous-time evolution for the phases generated by our hybrid modification \eqref{eq:hybr_multi}. The black dashed lines are isotime ($0$ (outer), $3.66$,  $7.33$ and $11$ (inner) time units).}
  \label{fig:phases}
\end{figure*}
\begin{figure}[ht!]
	\begin{center}
		\, \includegraphics[trim={0.67cm 2.8cm 0 3.0cm},clip,width=0.79\columnwidth,clip]{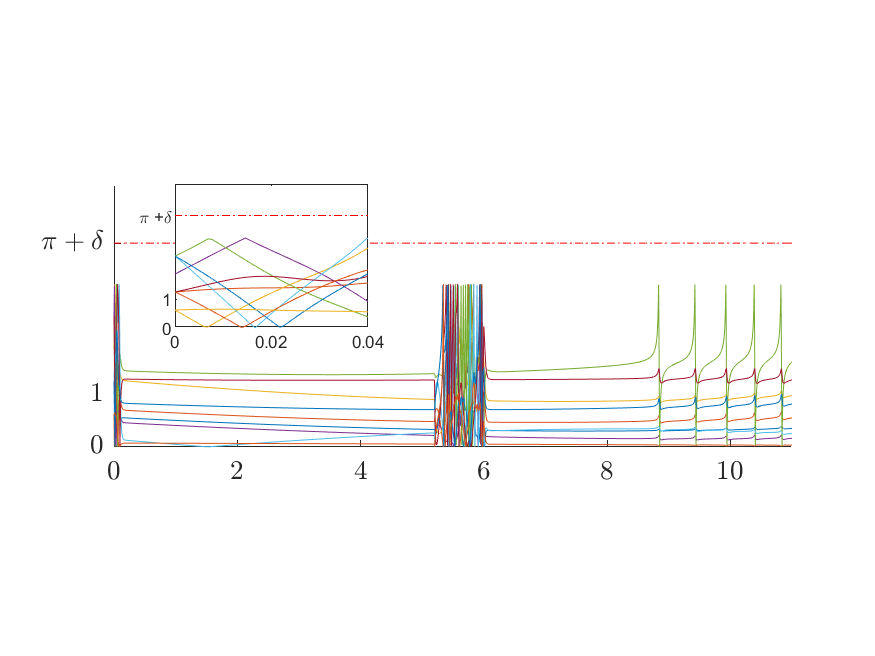}\vspace*{-0.1cm}\\
		\hspace*{-0.28cm} \includegraphics[trim={0 1.8cm 0 2.35cm},clip,width=0.825\columnwidth,clip]{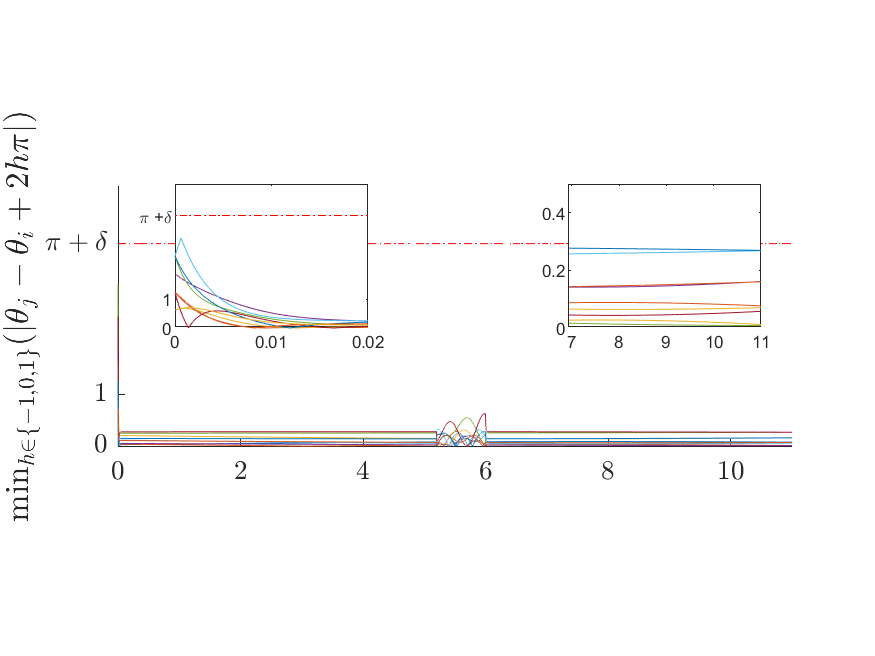}\vspace*{-0.1cm}\\
		\, \includegraphics[trim={0.67cm 2.3cm 0 3.0cm},clip,width=0.79\columnwidth,clip]{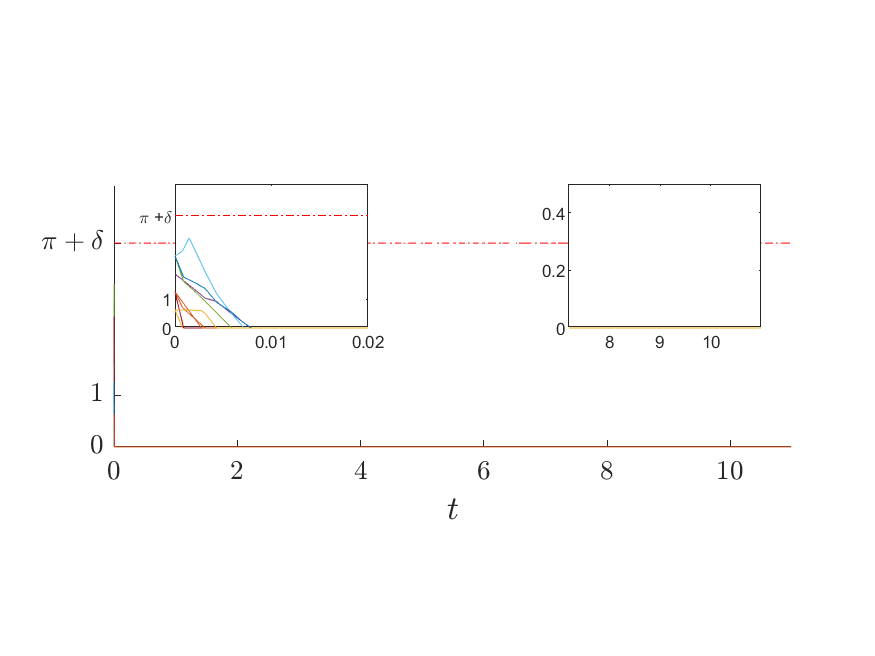}
	\end{center}
	\caption{Evolution of the phase errors for $\kappa=\frac{576 \pi}{10}$ and different selections of $\sigma$ and communication configurations (from top to bottom: no communication layer, ramp and sign functions).}
	\label{fig:errors}
\end{figure}

\section{Numerical illustration}
\label{sec:sims}

In this section, we apply our control scheme to globally, uniformly, synchronize the phases of $n=10$ strongly damped generators physically coupled over an all-to-all network
connection (represented by the blue edges in Figure~\ref{fig:network}) over a set $\mathcal{V}$ of nodes whose dynamics are approximated by nonuniform first-order Kuramoto oscillators given by \cite[eq. (2.8)]{dorfler2012synchronization}. 
This fully connected dynamics can be accurately described by the terms $\omega_i$'s in \eqref{eq:flow_phase_nagent_unidrected} with the following selection:
\begin{align}
\omega_i(\theta,\mathfrak t):= &\frac{1}{\zeta_{i}}\Bigg(\!\!\!\widetilde \omega_i \Big (1 + \frac{3}{10} \sin({\chi_i}\mathfrak{t}+ {\phi_i})\Big) + d_{i}(\mathfrak{t})-\widetilde\kappa_{ij} \!\!\!\!\! \sum_{j\in\mathcal{V} \setminus \{i\}} \!\!\! \sin(\theta_j - \theta_i + \phi_{ij})\Bigg),  \;
\forall i \in\mathcal{V},
\label{eq:wsims} 
\end{align}
for generic constant parameters ${\chi_i} \in \uni ([-1,1])$, $\widetilde \omega_{i} \in \uni([-5,5])$, ${\phi_i}\in \linebreak \uni([0,\atan(0.25)])$, and $\zeta_{i}\in \uni([20, 30]\frac{1}{120 \pi})$. The physical all-to-all coupling among the oscillators 
(blue edges in Figure~\ref{fig:network}) is modeled by the sine functions 
$\sin(\theta_j - \theta_i + \phi_{ij})$ of the angular mismatch between the oscillators offsetted by the constant angle $\phi_{ij}\in \uni([0,\atan(0.25)])$. Furthermore, each physical coupling is scaled by the gain $\widetilde \kappa_{ij}=\widetilde \kappa_{ji} \in \uni([0.7, 1.2])$. This allows fully embedding in our time-varying generalized natural frequencies $\omega_i$ of \eqref{eq:flow_phase_nagent_unidrected} the physical couplings of the oscillators. Each high-frequency disturbance $d_{i}:\real_{\geq 0} \rightarrow [0,5]$ is defined as $d_{i}(s)=0$ if $s\in [0,5.2] \cup [6.0,11]$ and $d_i(s)=5\sin(50 {\chi_i}s+ {\phi_i})$ if $s\in (5.2,6)$. Thus $\omega_i (\theta,\mathfrak{t})$ not only captures the time-varying natural frequency of the $i$-th oscillator but also the physical coupling actions and disturbances influencing its dynamics. The parameters have been selected as in (\cite{dorfler2012synchronization}) to model realistic, strongly damped, generators. On the other hand,  the synchronizing coupling actions are exchanged through our communication graph $\G_u = (\V, \E_u)$, whose edges are depicted in red in Figure~\ref{fig:network}. These ``cyber'' coupling actions are represented by the functions $\sigma$'s in \eqref{eq:flow_phase_nagent_unidrected}, whose design is performed according to our solution of Section \ref{sec:model_disc}.
Summarizing, the combination of the (blue) physical layer and the (red) ``cyber'' communication layer of Figure~\ref{fig:network} generates a cyber-physical system whose dynamics is represented by \eqref{eq:flownD_compact}, \eqref{eq:hybr_multi}, with $\boldsymbol{\omega}$ capturing the physical layer and $\boldsymbol{\sigma}$ capturing the hybrid feedback control action. We initialize the oscillators with $q(0,0)=\boldsymbol{0}_{9}$ and the initial phases are chosen in such a way that the oscillators are equally spaced on the unit circle. Finally, we select $\delta=\frac{\pi}{4}$. 

 The evolution of the phases, $\theta_i's$, and the angular errors between any two neighbours in $\mathcal{G}$, namely $\min_{h\in\{-1,0,1\}}(|\theta_j-\theta_i+2h \pi|)$,  are reported\footnote{The simulations have been carried out using the Matlab toolbox HyEQ (\cite{SanfeliceToolbox}).} in the top two rows of Figure~\ref{fig:phases} and in Figure~\ref{fig:errors}, for different selections of $\sigma$, and $\kappa=\frac{576 \pi}{10}$,  which ensure finite-time synchronization due to the bound reported in Lemma \ref{lem:geometricVdots_finitetime} in Section \ref{subsec:proofProp3}. When no communication layer is considered (left plots), the oscillators do not synchronize. 
 When the communication layer is implemented and $\sigma$ is given as the ramp function,  practical synchronization is achieved as established in Theorem~\ref{thm:practical_stability} and shown in Figures~\ref{fig:phases} and \ref{fig:errors}.
(Non-uniform) practical synchronization can also be achieved in the absence of a communication layer by selecting larger values for the $\widetilde\kappa_{ij}$'s. 
On the other hand, the sign function, which is discontinuous at $0$, also leads to a finite-time synchronization property in agreement with Theorem~\ref{thm:fixed-time}, see Figures~\ref{fig:phases} and \ref{fig:errors}. Furthermore, the set of plots in the bottom row of Figure~\ref{fig:phases} shows that each phase $\theta_i$ maps continuously the angular values identifying oscillator $i$ on the unit circle, in agreement with Section~\ref{subsec:flow_model_nagent} and Lemma~\ref{lem:jump_theta_prop}.

\begin{figure*}[t]
\begin{minipage}[b]{0.32\textwidth}
\centering
        \includegraphics[trim={0.0cm 1.0cm 0.8cm 1.4cm},width=1\textwidth,clip]{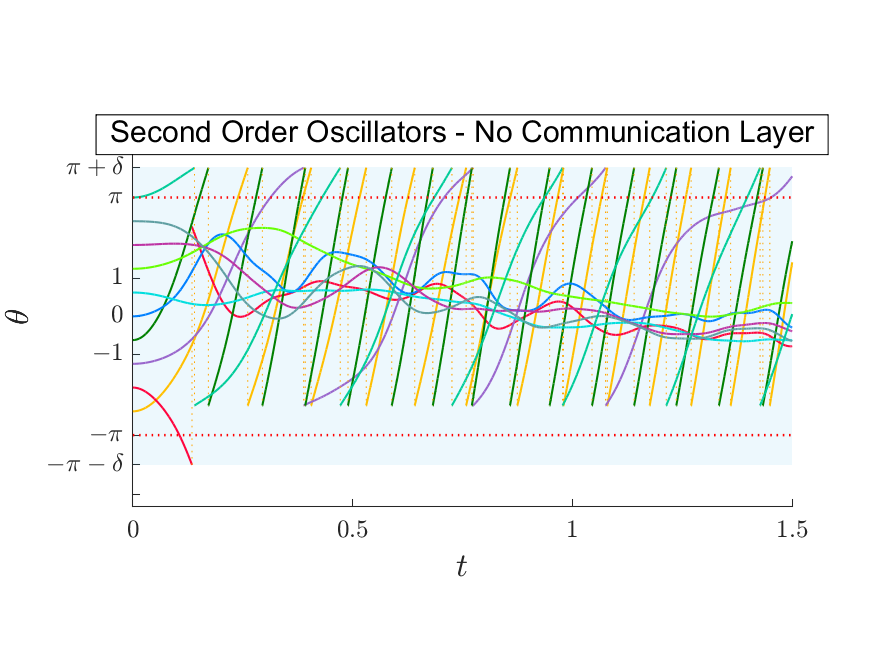}\\
        \includegraphics[trim={1.6cm 0.1cm 2.6cm 0.1cm},width=0.85\textwidth,clip]{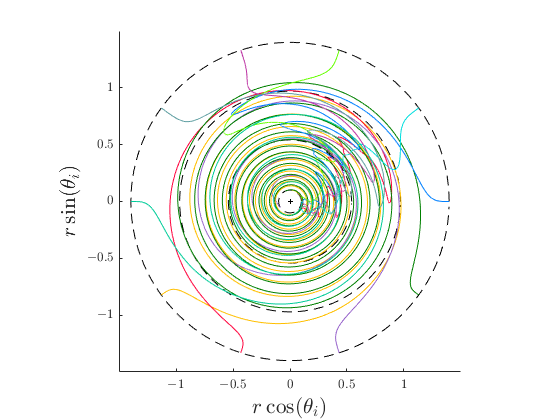} 
 \end{minipage}
 \begin{minipage}[b]{0.32\textwidth} 
 \centering
        \includegraphics[trim={0.0cm  1.0cm 0.8cm 1.4cm},width=1\textwidth,clip]{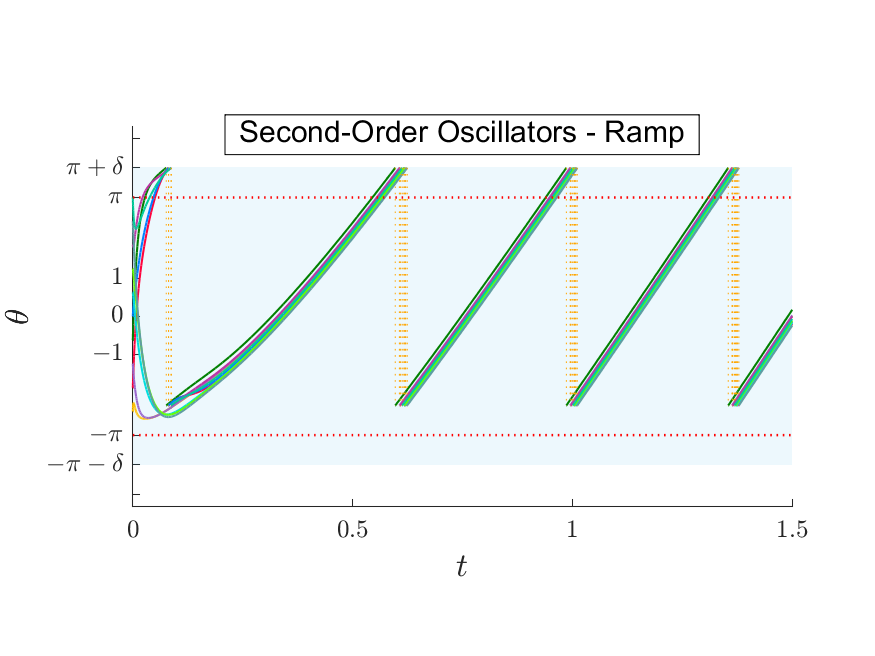} \\
  	    \includegraphics[trim={1.6cm 0.1cm 2.6cm 0.1cm},width=0.85\textwidth,clip]{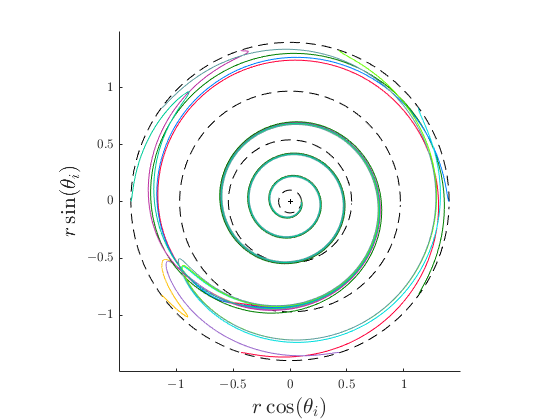}  
 \end{minipage}
 \begin{minipage}[b]{0.32\textwidth}
 \centering
        \includegraphics[trim={0.0cm  1.0cm 0.8cm 1.4cm},width=1\textwidth,clip]{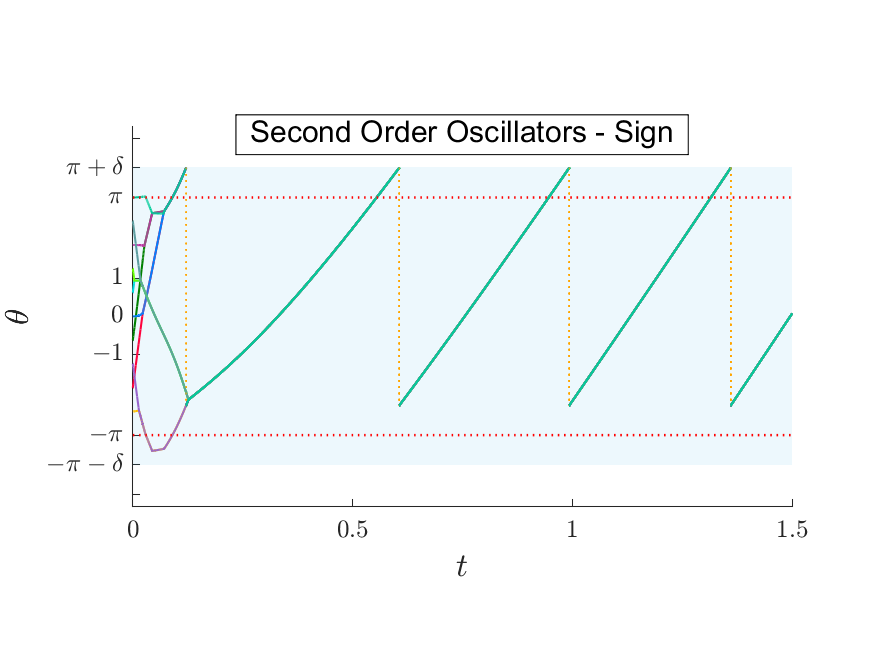}\\
  	    \includegraphics[trim={1.6cm 0.1cm 2.6cm 0.1cm},width=0.85\textwidth,clip]{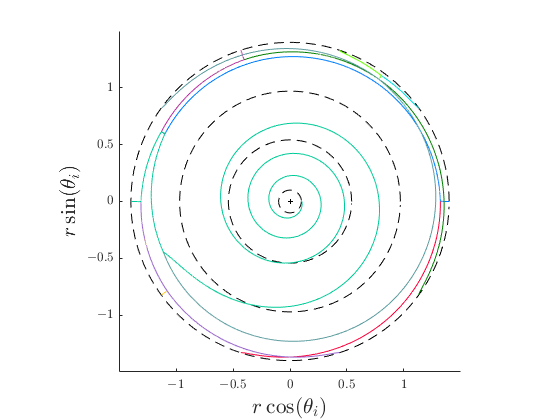}
 \end{minipage}
  \caption{(Top) Phase evolution associated of second-order oscillators for $\kappa=\frac{576 \pi}{10}$, $\delta=\frac{\pi}{4}$ and different selections of $\sigma$  and communication configurations. (Bottom) Evolution of the pair $(r\cos(\theta_{i}),r\sin(\theta_{i}))$, with  $r(\mathfrak{t})=-0.255\sqrt{\mathfrak{t}}+1.4$, showing radially the continuous-time evolution for the phases generated by our hybrid modification of second-order oscillators. The black dashed lines are isotime ($0$ (outer), $0.5$,  $1$ and $1.5$ (inner) time units).}
  \label{fig:phases2d}
\end{figure*}

The same exact hybrid controller dynamics is finally exploited in a more sophisticated context of non-strongly damped generators (rather than the strongly damped case, as considered above and in Figures~\ref{fig:phases} and~\ref{fig:errors}). Following (\cite{dorfler2012synchronization}), such behavior is modeled by 
a fully connected graph comprising the
second-order (rather than first-order) heterogeneous oscillators in \cite[eq. (2.3)]{dorfler2012synchronization}.
Once again, this physical interconnection is well represented by the blue edges in Figure~\ref{fig:network} and dynamics \eqref{eq:flow_phase_nagent_unidrected} with the following selection, generalizing \eqref{eq:wsims} to the dynamical context,
\begin{align}
\nonumber
\dot\omega_i(\theta,\mathfrak t)= &-\frac{\zeta_{i}}{m_{i}} \omega_i(\theta,\mathfrak t) + \frac{1}{m_{i}}\Bigg(\widetilde \omega_i \Big (1 + \frac{3}{10} \sin({\chi_i}\mathfrak{t}+ {\phi_i})\Big) \\ &+ {d_i}(\mathfrak{t}) -\widetilde\kappa_{ij} \sum_{j\in\mathcal{V} \setminus \{i\}} \sin(\theta_j - \theta_i + \phi_{ij})\Bigg), \; \forall i \in \mathcal{V}.  \label{eq:wsimsdot} 
\end{align}
This dynamically generalized selection uses the same parameters as in the previous set of simulations, with the addition of the constant mass parameters $m_i\in \uni([2, 12]\frac{1}{120 \pi})$ for each $i\in \V$, which is defined according to \cite[eq. 2.3]{dorfler2012synchronization}.
We apply our hybrid feedback control algorithm to this generalized scenario by augmenting once again the physical layer with a ``cyber'' communication layer represented by the red edges in Figure~\ref{fig:network}, inducing the 
stabilizing action of inputs $\boldsymbol{\sigma}$ in the hybrid interconnection \eqref{eq:flownD_compact}, \eqref{eq:hybr_multi}.
We initialize $q$ and $\theta$ as in the previous simulations and  $\dot \theta_i(0,0) \in \uni([-0.1, 0.1])$ for each $i\in \V$.  Finally, we still select $\delta=\frac{\pi}{4}$ in \eqref{eq:X}. The evolution of the phases is reported in Figure~\ref{fig:phases2d}, for different selections of $\sigma$, and $\kappa=\frac{576 \pi}{10}$. Similarly to what happens for the first-order oscillators of Figures~\ref{fig:phases} and~\ref{fig:errors}, when no communication layer is considered, the second-order oscillators do not synchronize. When the generators are equipped with the communication network and $\sigma$ is instead defined as the ramp function, practical synchronization is achieved, as predicted by Theorem~\ref{thm:practical_stability} and as shown in Figure~\ref{fig:phases2d}. On the other hand, considering the (discontinuous at $0$) sign function to generate the synchronizing hybrid coupling actions again leads to a finite-time synchronization property, thus confirming Theorem~\ref{thm:fixed-time}.

\section{Proof of Propositions~\ref{prop:PropV_practical} and \ref{prop:PropV_finite_time}}
\label{sec:proofs}
\subsection{Results on scalar non-pathological functions}
\label{subsec:nonpath}
The proofs of Propositions~\ref{prop:PropV_practical} and \ref{prop:PropV_finite_time}, which are instrumental for proving our main results of Sections~\ref{sec:as_stability} and \ref{sec:fixed-time}, require exploiting results from non-smooth analysis, because $V$ in \eqref{eq:lyapFunc_practical_V} is not differentiable everywhere. A further complication emerges from the fact that, since $\sigma$ may be discontinuous, the flow map in dynamics \eqref{eq:hybr_multi_reg_full} is outer semicontinuous, but not inner semicontinuous. The lack of inner semicontinuity prevents us from exploiting the ``almost everywhere'' conditions of (\cite{DellaRossaMCSS21}) and references therein. Instead, one could resort to conditions involving Clarke's generalized directional derivative and Clarke's generalized gradient, which can be defined as (see \cite[page 11]{Clarke90})
\begin{align}
  \partial V(x):=\text{co}\{\lim\limits_{i \to  \infty} \nabla  V(x_i): x_i \to x, \, x_i \notin \mathcal{Z}, \,  x_i \notin \Omega_u\},
\label{eq:clarke}
\end{align}
 where $\Omega_u$ is the set (of Lebesgue measure zero) where $V$ is not differentiable, and $\mathcal{Z}$ is any other set of Lebesgue measure zero. However, due to the peculiar dynamics considered here (resembling, for example, the undesirable conservativeness highlighted in \cite[Ex. 2.2]{DellaRossaPhD21}), Lyapunov decrease conditions based on Clarke's generalized gradient would be too conservative and impossible to prove. 
Due to the above motivation, in this section we prove Propositions~\ref{prop:PropV_practical} and \ref{prop:PropV_finite_time}  by exploiting the results of (\cite{Valadier89,BacCer03}), whose proof is also reported in \cite[Lemma 2.23]{DellaRossaPhD21}, establishing a link between the time derivative   $\frac{d}{d \mathfrak{t}}V(\phi(\mathfrak{t}))$ of a Lyapunov-like function $V$ evaluated along a generic solution $\phi$ of a continuous-time system, and the so-called set-valued Lie derivative (\cite{BacCer99})
\begin{align}
  \dot {\overline V}_F(x) := \{ a\in \real | \; \exists f \in F(x):  \langle v,f \rangle = a,\; \forall v\in \partial V(x)\},
  \label{eq:LieD}
\end{align}
with $\partial V$ defined in \eqref{eq:clarke}.
In the following, we characterize some features of the set-valued Lie derivative, useful for the next technical derivations.
\begin{lemma}
	\label{lem:LieDSingleton}
	Consider a set $S \subset \real^n$, $F:\dom F \subset \real^n \rightrightarrows \real^n$ with $S  \subset \dom(F)$ and a locally Lipschitz $V:\dom V \subseteq \real^n \to \real$ such that $S \subset \dom(V)$. Given any  function $\varphi:\real^n \times \real^n \to \real^n$ satisfying $\varphi(x,f)\in\partial V(x)$, $\forall x \in S$, $\forall f\in F(x)$, it holds that  
	\begin{equation}
		\label{eq:supLieD}
		\sup\dot {\overline V}_F(x) \leq\sup\limits_{\substack{f \in F(x)}} \langle \varphi(x,f), f \rangle.
	\end{equation} \null \hfill $\square$
\end{lemma}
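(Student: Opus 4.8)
The plan is to unwind the definition \eqref{eq:LieD} of the set-valued Lie derivative and to exploit the fact that the defining equality $\langle v,f\rangle = a$ is required to hold for \emph{every} $v \in \partial V(x)$, which is precisely what lets us single out the particular subgradient $\varphi(x,f)$. First I would fix $x \in S$ and take an arbitrary $a \in \dot{\overline V}_F(x)$. If $\dot{\overline V}_F(x) = \emptyset$ the claimed inequality holds trivially under the convention $\sup\emptyset = -\infty$, so I assume the set is nonempty.

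By the definition in \eqref{eq:LieD}, to each such $a$ there corresponds some $f \in F(x)$ with $\langle v, f\rangle = a$ for all $v \in \partial V(x)$. The key step is then to invoke the hypothesis on $\varphi$: since $f \in F(x)$ and $x \in S$, we have $\varphi(x,f) \in \partial V(x)$, so $\varphi(x,f)$ is an admissible choice of $v$ in the defining condition. Substituting $v = \varphi(x,f)$ into the equality $\langle v, f\rangle = a$ therefore yields $a = \langle \varphi(x,f), f\rangle$, collapsing the a priori set-valued object into a scalar expression attached to the associated $f$.

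Because $f \in F(x)$, this identity immediately gives $a = \langle \varphi(x,f), f\rangle \leq \sup_{f' \in F(x)} \langle \varphi(x,f'), f'\rangle$. As $a \in \dot{\overline V}_F(x)$ was arbitrary, taking the supremum over all such $a$ establishes \eqref{eq:supLieD}.

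I do not anticipate a genuine obstacle: the single point requiring care is recognizing that the universal quantifier over $v$ in \eqref{eq:LieD} makes $\varphi(x,f)$ a legitimate test vector for whichever $f$ realizes a given $a$. The only bookkeeping is handling the degenerate cases in which $F(x)$ or $\dot{\overline V}_F(x)$ is empty, which are covered by the $\sup\emptyset = -\infty$ convention.
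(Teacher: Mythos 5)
Your proposal is correct and follows essentially the same argument as the paper: both exploit that the defining equality in \eqref{eq:LieD} holds for \emph{all} $v \in \partial V(x)$, substitute $v = \varphi(x,f)$ to get $a = \langle \varphi(x,f), f\rangle$, and conclude $\dot{\overline V}_F(x) \subseteq \bigcup_{f \in F(x)} \{\langle \varphi(x,f), f\rangle\}$ before taking suprema. Your explicit treatment of the degenerate empty-set cases is a minor added carefulness the paper omits, not a different route.
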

\begin{proof}
 Consider any $x\in S$. In view of \eqref{eq:LieD}, and by the fact that $\varphi(x,f) \in \partial V(x)$ for all $f \in F(x)$, we have that $a=\langle v,f(x) \rangle \in \dot {\overline V}_F(x)$ implies $a=\langle \varphi(x,f), f \rangle$. Hence, we derive that $\dot {\overline V}_F(x) \subseteq \bigcup\limits_{f\in F(x)} \{\langle \varphi(x,f), f \rangle\}$ and thus $ \sup \dot {\overline V}_F(x) \leq \sup\limits_{f\in F(x)}\langle \varphi(x,f), f \rangle$ as to be proven.
\end{proof}

Whenever function $V$ is non-pathological (according to the definition given next), \cite[Lemma 2.23]{DellaRossaPhD21} ensures that $\frac{d}{d \mathfrak{t}}V(\phi(\mathfrak{t})) \in  \dot {\overline V}_F(\phi(\mathfrak{t}))$ for almost all $\mathfrak{t}$ in the domain of $\phi$. We provide below the definition of non-pathological functions and we prove that function $V$ in \eqref{eq:lyapFunc_practical_V} enjoys that property. Our result below can be seen as a corollary of the fact that piecewise $C^1$ functions (in the sense of (\cite{Scholtes})) are non-pathological. This result has been recently published in \cite[Lemma 4]{DellaRossa20}. The scalar case is a consequence of Proposition~5 in (\cite{Valadier89}). An alternative proof is reported here.
\begin{definition} (\cite{Valadier89})
\label{def:non-path}
A locally Lipschitz function $W:\dom W\subseteq \real \to \real$ is \emph{non-pathological} if, given any absolutely continuous function $\phi : \real_{\geq 0} \to \dom W$, we have that for almost every $\mathfrak{t}\in\real_{\geq 0}$ there exists $a_\mathfrak{t}\in\real$ satisfying 
\begin{equation}
	\label{eq:non-path eq}
	\langle w, \dot \phi (\mathfrak{t}) \rangle = a_\mathfrak{t}, \quad \forall w \in \partial W(\phi (\mathfrak{t})).
\end{equation} 
In other words, for almost every $\mathfrak{t} \in \real_{\geq 0}$, $\partial W(\phi(\mathfrak{t}))$ is a subset of an affine subspace orthogonal to $\dot \phi (\mathfrak{t})$. \null \hfill $\square$
\end{definition}

\begin{proposition} 
 \label{prop:nonpathological}
  Any function $W:\dom W \subseteq \real \to \real$ that is piecewise continuously differentiable is non-pathological. \null \hfill $\square$
\end{proposition}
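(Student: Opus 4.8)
The plan is to exploit the scalar nature of the problem to reduce the defining condition of non-pathological functions to a simple dichotomy, and then to invoke a classical real-analysis fact about derivatives over level sets. First I would observe that, since $\dom W \subseteq \real$, the Clarke generalized gradient $\partial W(x)$ in \eqref{eq:clarke} is a nonempty compact interval of $\real$ for every $x$, and $\dot\phi(\mathfrak{t})$ is a scalar. Consequently, the requirement \eqref{eq:non-path eq} that $w\,\dot\phi(\mathfrak{t}) = a_\mathfrak{t}$ hold for all $w \in \partial W(\phi(\mathfrak{t}))$ is equivalent to asking that the affine map $w \mapsto w\,\dot\phi(\mathfrak{t})$ be constant on the interval $\partial W(\phi(\mathfrak{t}))$. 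This is automatic whenever either (i) $\partial W(\phi(\mathfrak{t}))$ is a singleton, or (ii) $\dot\phi(\mathfrak{t}) = 0$. Thus it suffices to show that, for almost every $\mathfrak{t}$, at least one of (i), (ii) holds.

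Next I would isolate the bad set $N := \{x \in \dom W : \partial W(x) \text{ is not a singleton}\}$. Using the piecewise continuous differentiability of $W$, on the interior of each $C^1$ piece the function $W$ is continuously, hence strictly, differentiable, so $\partial W(x) = \{W'(x)\}$ is a singleton there; the only candidates for membership in $N$ are the breakpoints, of which there are finitely many in each bounded interval, and at which $\partial W(x) = [\min\{W'_-(x), W'_+(x)\}, \max\{W'_-(x), W'_+(x)\}]$ in terms of the one-sided derivatives guaranteed to exist by the definition. Hence $N$ is at most countable and, in particular, has Lebesgue measure zero.

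The crux of the argument is to prove that $\dot\phi(\mathfrak{t}) = 0$ for almost every $\mathfrak{t}$ in the preimage $\phi^{-1}(N)$, which is exactly what secures alternative (ii) on the set where (i) fails. Since $N$ is countable, I would write $\phi^{-1}(N) = \bigcup_{c \in N} \phi^{-1}(c)$ and argue one level set at a time: at any $\mathfrak{t}_0 \in \phi^{-1}(c)$ where $\phi$ is differentiable and which is an accumulation point of $\phi^{-1}(c)$, evaluating the difference quotient along a sequence $\mathfrak{t}_k \to \mathfrak{t}_0$ with $\phi(\mathfrak{t}_k) = c = \phi(\mathfrak{t}_0)$ forces $\dot\phi(\mathfrak{t}_0) = 0$; the remaining isolated points of each level set are countably many, hence negligible, while $\phi$ is differentiable almost everywhere because it is absolutely continuous. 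A countable union of the resulting null exceptional sets is again null, giving $\dot\phi = 0$ almost everywhere on $\phi^{-1}(N)$. One could alternatively reach the same conclusion in a single stroke from the Banach indicatrix identity $\int_{\phi^{-1}(N)} |\dot\phi| \, d\mathfrak{t} = \int_\real \mathcal{N}(\phi, \phi^{-1}(N), y)\,dy$ valid for absolutely continuous $\phi$, whose right-hand side vanishes because it is an integral of a nonnegative integrand supported in the null set $N$.

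Finally I would assemble the pieces: for almost every $\mathfrak{t}$, either $\phi(\mathfrak{t}) \notin N$, so $\partial W(\phi(\mathfrak{t}))$ is a singleton $\{w\}$ and \eqref{eq:non-path eq} holds with $a_\mathfrak{t} := w\,\dot\phi(\mathfrak{t})$, or $\phi(\mathfrak{t}) \in N$, in which case the previous step gives $\dot\phi(\mathfrak{t}) = 0$ and \eqref{eq:non-path eq} holds with $a_\mathfrak{t} := 0$. This establishes the non-pathological property of $W$. I expect the main obstacle to be the level-set step, namely making rigorous that an absolutely continuous $\phi$ has zero derivative almost everywhere on the preimage of a measure-zero set; this rests on the Luzin $N$ property of absolutely continuous functions rather than on any feature specific to $W$, and every other step is essentially a bookkeeping reduction.
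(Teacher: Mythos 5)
Your proof is correct, but it takes a genuinely different route from the paper's. The paper argues time-locally: it fixes a differentiability instant $\overline{\mathfrak{t}}$ of $\phi$ and splits on the sign of $\dot\phi(\overline{\mathfrak{t}})$; the case $\dot\phi(\overline{\mathfrak{t}})=0$ is immediate, and when $\dot\phi(\overline{\mathfrak{t}})\neq 0$ it shows via difference-quotient estimates that $\phi$ strictly crosses the value $\phi(\overline{\mathfrak{t}})$, so that in a punctured neighbourhood of $\overline{\mathfrak{t}}$ the point $\phi(\mathfrak{t})$ avoids the (locally finite) kinks of $W$ and $\partial W(\phi(\mathfrak{t}))$ reduces to the singleton $\{\nabla W(\phi(\mathfrak{t}))\}$, after which the global almost-everywhere statement is assembled from these local ones. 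You instead work globally in measure: you isolate the null set $N$ of non-singleton Clarke gradients (contained in the countably many breakpoints) and invoke the classical fact that an absolutely continuous $\phi$ satisfies $\dot\phi(\mathfrak{t})=0$ for almost every $\mathfrak{t}\in\phi^{-1}(N)$, via the accumulation-point argument on level sets or, equivalently, the Banach indicatrix identity resting on the Luzin $N$ property. Both proofs exploit the same scalar dichotomy (singleton gradient or zero derivative), but yours replaces the paper's local monotone-crossing argument by a standard real-analysis lemma, and this buys a cleaner assembly: the paper's case b) ends with a loosely worded ``for almost every $\tilde{\mathfrak{t}}$ in a neighbourhood'' claim that still needs a covering or isolated-points argument to become the global almost-everywhere conclusion, which your level-set step supplies directly. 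One small inaccuracy, harmless to the argument: under the paper's definition of piecewise continuous differentiability, $\partial W$ at a breakpoint need not equal the interval spanned by the one-sided derivative limits (consider $W(s)=s^2\sin(1/s)$ near $s=0$, where both one-sided difference quotients vanish yet $\partial W(0)=[-1,1]$); all you actually use is that $N$ is contained in the breakpoint set, hence countable and null, which holds regardless, together with the standing local Lipschitz assumption built into the definition of non-pathological functions.
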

For this paper, the interest of Proposition~\ref{prop:nonpathological} stands in the fact that it implies that function $V$ in \eqref{eq:lyapFunc_practical_V} is non-pathological (\cite{Valadier89,BacCer99}), being the sum of piecewise continuously differentiable scalar functions.  
\begin{remark}
An alternative proof of Proposition~\ref{prop:nonpathological}, might be to first establish that piecewise $C^1$ functions from $\real \to \real$ can be represented as a $\max$-$\min$ of $C^1$ functions from $\real \to \real$ (a similar result has been proven, for example, in \cite[Thm.~1 and Prop.~1]{Xu16} with reference to piecewise affine functions), and then obtain Proposition~\ref{prop:nonpathological} as a corollary of \cite[Lemma 2.20]{DellaRossaPhD21,Valadier89,BacCer99}, which establish non-pathological properties of $\max$-$\min$  functions.  
\end{remark}
\begin{proofname}{Proof of Proposition~\ref{prop:nonpathological}:}
Let $W:\dom W \subseteq \real \to \real$ be piecewise continuously differentiable. Let $\phi: \real_{\geq 0} \to \dom W$   be an absolutely continuous scalar function  and suppose it is differentiable at  $\overline{\mathfrak{t}} \in \real_{\geq 0}$. We split the analysis in three cases.
 \begin{enumerate}[label=\alph*), leftmargin=*, ref=\alph*]  
 	\item $\dot \phi(\overline{\mathfrak{t}})=0$. Then, for all  $w \in \partial W(\phi (\overline{\mathfrak{t}}))$, $\langle w, \dot \phi(\overline{\mathfrak{t}}) \rangle=\langle w, 0 \rangle=0.$ Thus, \eqref{eq:non-path eq} is satisfied with $a_\mathfrak{t}=0$.
 	\item $\dot \phi(\overline{\mathfrak{t}})>0$. If $W$ is continuously differentiable at $\phi(\overline{\mathfrak{t}})$ then, $\partial W(\phi(\overline{\mathfrak{t}}))= \linebreak \nabla W(\phi(\overline{\mathfrak{t}}))$ and \eqref{eq:non-path eq} holds. Consider now the case where $W$ is not differentiable at $\overline{\mathfrak{t}}$. We recall that, in view of the fact that $W$ is piecewise continuously differentiable, $\partial W(\phi(\mathfrak{t}))=\nabla W(\phi(\mathfrak{t}))$ for any $\mathfrak{t}$ in a sufficiently small neighbourhood of $\overline{\mathfrak{t}}$.  From the absolute continuity of $\phi$, there exists $\varepsilon>0$ such that $\lim\limits_{h \to 0}\frac{\phi(\overline{\mathfrak{t}}+\mathfrak{h})-\phi(\overline{\mathfrak{t}})}{\mathfrak{h}}\geq\varepsilon>0$. Hence, there exists $\rho_1 \in \real_{> 0}$ such that for any $\rho \in (0, \rho_1]$, we have  $\frac{\phi(\overline{\mathfrak{t}}+\rho)-\phi(\overline{\mathfrak{t}})}{\rho}\geq\frac{\varepsilon}{2}$ and thus  $\phi(\overline{\mathfrak{t}}+\rho) \geq \frac{\rho\varepsilon}{2} + \phi(\overline{\mathfrak{t}}) >\phi(\overline{\mathfrak{t}})$. With a similar reasoning, there exist $\rho_2 \in \real_{\geq 0}$ such that $\rho_2>0$  implies $\phi(\overline{\mathfrak{t}}-\rho)  < \phi(\overline{\mathfrak{t}})$ for any $\rho\in(0,\rho_2]$. Hence, there exists a neighbourhood of $\overline{\mathfrak{t}}$ contained in  $[\overline{\mathfrak{t}}- \rho_2,\overline{\mathfrak{t}}+ \rho_1]$, for which $\partial W(\phi(\cdot))$ is defined and coincides with $\nabla W(\phi(\mathfrak{t}))$. Therefore, for almost every $\tilde{\mathfrak{t}} \in \mathcal{T} \subseteq [\overline{\mathfrak{t}}- \rho_2,\overline{\mathfrak{t}}+ \rho_1]$ there exists $a_\mathfrak{t}\in\real$ such that \eqref{eq:non-path eq} is satisfied.
 	\item $\dot \phi(\overline{\mathfrak{t}})<0$. This case is identical to the previous one by changing all the signs, therefore implying that for almost every $\mathfrak{t}$ there exists $a_\mathfrak{t}\in\real$  in a compact neighbourhood of  $\overline{\mathfrak{t}}$ such that \eqref{eq:non-path eq} is satisfied.    
 \end{enumerate}
 Since $\phi$ is absolutely continuous, then the set  $\Phi$ where it is not differentiable is of Lebesgue measure zero. We conclude that \eqref{eq:non-path eq} is satisfied for almost all $\mathfrak{t} \in \real_{\geq 0}$ because we have arbitrarily selected  $\overline{\mathfrak{t}} \in \real_{\geq 0}\setminus\Phi$, as to be proven.
\end{proofname}
\subsection{Proof of Proposition~\ref{prop:PropV_practical}}
\label{subsec:proofProp2}
The following lemma 
establishes geometric properties of $V$ that are used, together with
Lemma~\ref{lem:sandwich_etal} and
Proposition~\ref{prop:nonpathological}, to show the trajectory-based results of Proposition~\ref{prop:PropV_practical}.
\begin{lemma} \label{lem:geometricVdots}
    Consider system \eqref{eq:hybr_multi_reg_full}, function $V$ in \eqref{eq:lyapFunc_practical_V}-\eqref{eq:lyapFunc_ij_practical} and $c$ as in Theorem~\ref{thm:practical_stability}. There exists $\alpha_3 \in \K_\infty$, independent of $\overline \omega$ in \eqref{eq:pars} and of $\kappa$, such that
  \begin{subequations} 
  \begin{align}
 &\sup \dot {\overline V}_F(x) \leq - \kappa\underline \lambda\alpha_3(V(x)) +c\overline\omega, &&\quad \forall x\in C,
 \label{eq:flowCond_practical}\\
&\Delta V(x):= \sup \limits_{g \in G(x)}V(g) - V(x) \leq 0, &&\quad \forall x \in D.
\label{eq:jumpCond_practical}
\end{align}
\end{subequations}\null \hfill $\square$
\end{lemma}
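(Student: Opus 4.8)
The plan is to establish the flow inequality \eqref{eq:flowCond_practical} and the jump inequality \eqref{eq:jumpCond_practical} separately, the former being the delicate one because $V$ is only locally Lipschitz (due to possible discontinuities of $\sigma$) and solutions may slide along its nondifferentiability set. For the jump inequality I would reason edge by edge, exploiting that each $V_{ij}$ in \eqref{eq:lyapFunc_ij_practical} depends on the single mismatch $\tilde\theta_{ij}$ only. When $x\in D_i$ and $g=g_i(x)$, Lemma~\ref{lem:jump_theta_prop} guarantees that every mismatch $\tilde\theta_{uv}$, $(u,v)\in\E$, is preserved, so $V(g)=V(x)$ and $\Delta V(x)=0$. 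When $x\in D_{ij}$ and $g\in G_{ij}^\ext(x)$, only $q_{ij}$ is updated while $\theta$ and all remaining $q_{uv}$ are frozen, so only $V_{ij}$ can change. The key remark is that, since $\sigma$ is odd (item~\ref{prop:sigma_symm}) of Property~\ref{prop:sigma}) and $\sgn(s)\sigma(s)\geq0$ (item~\ref{prop:sigma_sect}) of Property~\ref{prop:sigma}), the scalar primitive $s\mapsto\int_0^s\sigma(\sat_{\pi+\delta}(r))\,dr$ is even and nondecreasing in $|s|$ on all of $\real$, the saturation preserving this monotonicity beyond $\dom\sigma$. Because \eqref{eq:G_IJ} picks $q_{ij}^+$ as a minimizer of $|\theta_j-\theta_i+2h\pi|$ over $h\in\{-1,0,1\}$ and $q_{ij}$ is itself admissible, one has $|\tilde\theta_{ij}^+|\leq|\tilde\theta_{ij}|$, whence $V_{ij}(g)\leq V_{ij}(x)$ and $\Delta V(x)\leq0$.

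For the flow inequality I would first characterize $\partial V(x)$. Since $V$ is a sum of the above scalar primitives composed with the affine maps $x\mapsto\tilde\theta_{ij}=\theta_j-\theta_i+2q_{ij}\pi$, and since the Krasovskii regularization $\widehat\sigma$ in \eqref{eq:sigmahat} coincides with the Clarke subdifferential of that primitive, every element of $\partial V(x)$ has $\theta$-block of the form $Bs$ with $s\in\widehat\Sigma(x)$. Conversely — and this is where the tree hypothesis re-enters via Lemma~\ref{lem:B_prop} — the full row rank of $B^\top$ makes $\theta\mapsto B^\top\theta$ surjective, so near $x$ the mismatches $\tilde\theta_{ij}$ can be perturbed with independent signs; approaching $x$ through differentiable points that realize all combinations of one-sided limits of $\sigma$ at the $\tilde\theta_{ij}$'s shows that every $Bs$, $s\in\widehat\Sigma(x)$, is a limit of gradients, so taking convex hulls the $\theta$-block of $\partial V(x)$ equals $\{Bs:s\in\widehat\Sigma(x)\}$. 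The $q$-block is immaterial since $\dot q=\boldsymbol{0}_m$ in \eqref{eq:F_reg_full}.

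With this characterization I would apply Lemma~\ref{lem:LieDSingleton}: for each $f=(\widehat{\boldsymbol\omega}-\kappa B\widehat{\boldsymbol\sigma},\boldsymbol{0}_m)\in F(x)$, with $\widehat{\boldsymbol\omega}\in\widehat\Omega$ and $\widehat{\boldsymbol\sigma}\in\widehat\Sigma(x)$, I select $\varphi(x,f)\in\partial V(x)$ whose $\theta$-block is the \emph{matching} $B\widehat{\boldsymbol\sigma}$, so that $\langle\varphi(x,f),f\rangle=\widehat{\boldsymbol\sigma}^\top B^\top\widehat{\boldsymbol\omega}-\kappa\,\widehat{\boldsymbol\sigma}^\top B^\top B\widehat{\boldsymbol\sigma}$. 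I bound the quadratic term by $\widehat{\boldsymbol\sigma}^\top B^\top B\widehat{\boldsymbol\sigma}\geq\underline\lambda|\widehat{\boldsymbol\sigma}|^2$ and, by Hölder together with $c=\max_{s}\widehat\sigma(s)$ and \eqref{eq:pars}, the cross term by $\widehat{\boldsymbol\sigma}^\top B^\top\widehat{\boldsymbol\omega}\leq|\widehat{\boldsymbol\sigma}|_\infty|B^\top\widehat{\boldsymbol\omega}|_1\leq c\,\overline\omega$, giving $\langle\varphi(x,f),f\rangle\leq c\,\overline\omega-\kappa\underline\lambda|\widehat{\boldsymbol\sigma}|^2$. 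Taking the supremum over $f$, using Lemma~\ref{lem:lowerbound_sigma} to get $\inf_{\widehat{\boldsymbol\sigma}\in\widehat\Sigma(x)}|\widehat{\boldsymbol\sigma}|^2\geq\eta(|x|_{\A})$, and then Lemma~\ref{lem:sandwich_etal} to write $\eta(|x|_{\A})\geq\eta(\alpha_2^{-1}(V(x)))$, yields \eqref{eq:flowCond_practical} with $\alpha_3:=\eta\circ\alpha_2^{-1}\in\K_\infty$, which is manifestly independent of $\kappa$ and $\overline\omega$. This geometric bound, together with non-pathology of $V$ from Proposition~\ref{prop:nonpathological}, is exactly what the trajectory statement of Proposition~\ref{prop:PropV_practical} requires.

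The step I expect to be the main obstacle is the converse subdifferential characterization, namely verifying that every $Bs$ with $s\in\widehat\Sigma(x)$ is genuinely attained as a convexified limit of gradients. This forces the simultaneous and \emph{independent} control of the approach directions of all mismatches $\tilde\theta_{ij}$, which is precisely where surjectivity of $B^\top$ (equivalently, the tree structure) is indispensable: without it one retains only the inclusion $\partial V(x)\subseteq\{Bs:s\in\widehat\Sigma(x)\}$, which is the wrong direction and would prevent selecting the matching $\varphi(x,f)$, breaking the bound on $\langle\varphi(x,f),f\rangle$ through an uncontrolled indefinite quadratic term.
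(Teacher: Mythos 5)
Your proof is correct and follows essentially the same route as the paper's: the same matching selection $\varphi(x,f)=(B\boldsymbol{\sigma}_{\!f},\,2\pi\boldsymbol{\sigma}_{\!f})$ fed into Lemma~\ref{lem:LieDSingleton}, the same bound $-\kappa\underline\lambda|\boldsymbol{\sigma}_{\!f}|^2+c\overline\omega$ via \eqref{eq:pars} and Lemma~\ref{lem:B_prop}, the same $\alpha_3=\eta\circ\alpha_2^{-1}$ obtained from Lemmas~\ref{lem:sandwich_etal} and \ref{lem:lowerbound_sigma}, and the same edge-by-edge jump analysis based on Lemmas~\ref{lem:jump_k_prop} and \ref{lem:jump_theta_prop}, so there is nothing to repair. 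The one place where you go beyond the paper is the explicit converse characterization of $\partial V(x)$: the paper simply asserts $\varphi(x,f)\in\partial V(x)$ without proof, so making this inclusion explicit is a genuine improvement in rigor (with the small caveat that the $q$-block of your selection is not merely ``immaterial'': to be an element of $\partial V(x)$ it must be the matched pair $(B\widehat{\boldsymbol\sigma},2\pi\widehat{\boldsymbol\sigma})$, even though only the $\theta$-block survives the inner product with $f$ because $\dot q=\boldsymbol{0}_m$). However, your closing claim that the tree hypothesis is \emph{indispensable} for this characterization is misplaced. Writing $V(x)=\Psi(B^\top\theta+2\pi q)$ with $\Psi$ separable, the relevant affine map $(\theta,q)\mapsto B^\top\theta+2\pi q$ is surjective for \emph{any} graph, because each $q_{ij}$ enters exactly one mismatch $\tilde\theta_{ij}$ with coefficient $2\pi$; hence independent approach directions for the mismatches (and thus all combinations of one-sided limits of $\sigma$) are always available through the $q$-coordinates, without needing $B^\top$ to have full row rank. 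Where the tree structure is truly needed is in Lemma~\ref{lem:B_prop}, i.e., $\underline\lambda=\lambda_{\min}(B^\top B)>0$: for a graph with cycles one has $\underline\lambda=0$ and the negative term in \eqref{eq:flowCond_practical} disappears, exactly the obstruction illustrated by the counterexample of Section~\ref{subsec:counter}.
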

\begin{proof}
  We prove the two equations one by one.
  
  \emph{Proof of \eqref{eq:flowCond_practical}.}
  For each $x\in C$ and each $f\in F(x)$, there exist $\widehat{\boldsymbol{\omega}} \in \widehat \Omega$ and $\boldsymbol{\sigma}_{\!f} \in \widehat{\Sigma}(x)$ such that $f=(\widehat{\boldsymbol{\omega}}- B \kappa \boldsymbol{\sigma}_{\!f},0) \in F(x)$. Define $\varphi(x,f)$ in Lemma~\ref{lem:LieDSingleton} as $\varphi(x,f):=(B\boldsymbol{\sigma}_{\!f},2\pi\boldsymbol{\sigma}_{\!f})$. From \eqref{eq:LieD} in Lemma~\ref{lem:LieDSingleton}, we have
  \begin{equation}
  	\label{eq:LieVstarting}
  	\sup\dot{\overline{V}}_F(x) \leq \sup\limits_{\substack{\widehat{\boldsymbol{\omega}} \in \widehat{\Omega}, \\ \boldsymbol{\sigma}_{\!f} \in \widehat{\Sigma}(x)}} (-\kappa \boldsymbol{\sigma}_{\!f}^\top B ^\top B \boldsymbol{\sigma}_{\!f} + \boldsymbol{\sigma}_{\!f}^\top B^\top \widehat{\boldsymbol{\omega}}).
  \end{equation} 
Thus, in view of \eqref{eq:pars} and Lemma~\ref{lem:B_prop}, \eqref{eq:LieVstarting} yields 
 \begin{align}
	\label{eq:LieVmid}
	\begin{split}
		\sup\dot{\overline{V}}_F(x)&\leq \sup\limits_{\substack{\boldsymbol{\sigma}_{\!f} \in \widehat{\Sigma}(x)}} (-\kappa \boldsymbol{\sigma}_{\!f}^\top B ^\top B \boldsymbol{\sigma}_{\!f}) + c \overline{\omega} \\ &\leq  \sup\limits_{\substack{\boldsymbol{\sigma}_{\!f} \in \widehat{\Sigma}(x)}} -\kappa \underline\lambda |\boldsymbol{\sigma}_{\!f}|^2 + c \overline{\omega}.
	\end{split}
\end{align}
Moreover, in view of \eqref{eq:sandwich_V} and Lemmas~\ref{lem:sandwich_etal} and \ref{lem:lowerbound_sigma}, we have that, for any  $\boldsymbol{\sigma}_{\!f} \in \widehat{\Sigma}(x)$,
\begin{equation}
  \label{eq:upper_bound_sand_V}
  \eta \circ \alpha_2^{-1}(V(x)) \leq \eta(|x|_{\A}) \leq |\boldsymbol{\sigma}_{\!f}|^2.
\end{equation}
  By defining  $\alpha_3:=\eta\circ\alpha_2^{-1}\in\K_\infty$, \eqref{eq:flowCond_practical} stems from \eqref{eq:LieVmid} and \eqref{eq:upper_bound_sand_V}.
  
  \emph{Proof of \eqref{eq:jumpCond_practical}.}
  We split the analysis in two cases.
  
  First, let $i \in \V$, $x \in D_i$ and $x^+ = g_i(x)$ as in Lemma~\ref{lem:jump_theta_prop}.
  In view of the equality in \eqref{eq:jump_theta_prop} and the definition of $V_{ij}$ in \eqref{eq:lyapFunc_ij_practical}, we have
  $
    V_{ij}(x^+) = \int_{0}^{\theta_j^+ - \theta_i^+ + 2 q_{ij}^+ \pi} \sigma(\sat_{\pi + \delta}(s)) ds 
    = \int_{0}^{\theta_j - \theta_i + 2 q_{ij} \pi} \sigma(\sat_{\pi + \delta}(s)) ds = V_{ij}(x),
  $
  and thus
  $V(x^+) =\!\!\! \sum\nolimits\limits_{(i,j) \in \E} V_{ij}(x^+) =\!\!\! \sum\nolimits\limits_{(i,j) \in \E} V_{ij}(x) = V(x).$
  
  Second, let $(i,j)\in\E$, $x \in D_{ij}$ and $x^+ \in G_{ij}(x)$ as in Lemma~\ref{lem:jump_k_prop}. In view of item~\ref{prop:sigma_symm}) of Property \ref{prop:sigma},
  \begin{align}
    \label{eq:lyap_symm_practical}
    V_{ij}(x) = \int_0^{|\theta_j - \theta_i + 2 q_{ij} \pi|} \sigma(\sat_{\pi+\delta}(s)) ds.
  \end{align}
  On the other hand, in view of \eqref{eq:Dij_multi}, \eqref{eq:G_ij} and Lemma~\ref{lem:jump_k_prop},
 $
    |\theta_j^+ - \theta_i^+ +2 q_{ij}^+ \pi| = |\theta_j - \theta_i + 2 q_{ij}^+ \pi|
    < \pi + \delta \leq |\theta_j - \theta_i + 2 q_{ij} \pi|,
  $
  because $x\in D_{ij}$.
  Consequently, in view of \eqref{eq:Dij_multi} and item~\ref{prop:sigma_sect}) of Property~\ref{prop:sigma}
    $V_{ij}(x^+) = \int_0^{|\theta_j^+ - \theta_i^+ + 2 q_{ij}^+ \pi|}  \sigma(\sat_{\pi+\delta}(s)) ds 
    < \int_0^{|\theta_j - \theta_i + 2 q_{ij} \pi|}  \sigma(\sat_{\pi+\delta}(s)) ds = V_{ij}(x)$.

    On the other hand, from the definition of  $G_{ij}$ in \eqref{eq:G_ij}, $V_{uv}(x^+) = V_{uv}(x)$ for any $(u,v)\neq (i,j) \in \E$.
  Therefore $V(x^+) - V(x) = V_{ij}(x^+) - V_{ij}(x) < 0$, since the arguments of all the other elements of the summation in \eqref{eq:lyapFunc_practical_V} do not change.
\end{proof}

Based on Lemma~\ref{lem:geometricVdots}, we can now prove Proposition~\ref{prop:PropV_practical}.

\begin{proofname}{Proof of Proposition~\ref{prop:PropV_practical}:}
Item~(ii) of Proposition~\ref{prop:PropV_practical} is a direct consequence of \eqref{eq:jumpCond_practical} in Lemma~\ref{lem:geometricVdots}. To prove item~(i) of Proposition~\ref{prop:PropV_practical} we exploit the fact that, in view of \cite[Lemma 2.23]{DellaRossaPhD21} and $V$ being non pathological, for each solution $x$ to \eqref{eq:hybr_multi_reg_full}, for all $\mathfrak{j} \in \{0, \ldots, J\}$ and almost all $\mathfrak{t} \in [\mathfrak{t}_\mathfrak{j}, \mathfrak{t}_{\mathfrak{j}+1}]$ in $\dom x$, 
$\frac{d V(x(\mathfrak{t},\mathfrak{j}))}{d\mathfrak{t}} \in \dot{\overline{V}}_F(x(\mathfrak{t},\mathfrak{j}))$. Hence, as a consequence, $
\frac{d}{d\mathfrak{t}} V(x(\mathfrak{t},\mathfrak{j})) \leq - \kappa\underline \lambda\alpha_3(V(x(\mathfrak{t},\mathfrak{j}))) + c\overline\omega,$
 thus concluding the proof.
\end{proofname}
 \subsection{Proof of Proposition~\ref{prop:PropV_finite_time}}
 \label{subsec:proofProp3}
Paralleling Section~\ref{subsec:proofProp2},  we
establish via the next lemma geometric properties of $V$ that can be used, together with
Lemma~\ref{lem:sandwich_etal} and
Proposition~\ref{prop:nonpathological}, to show the trajectory-based result of Proposition~\ref{prop:PropV_finite_time}.
\begin{lemma} \label{lem:geometricVdots_finitetime}
 If $\sigma$ is discontinuous at the origin, then there exist $\mu \in \real_{>0}$ independent of $\overline \omega$ in \eqref{eq:pars} and $\kappa^\star>0$ such that for each $\kappa> \kappa^\star$
	\begin{subequations} 
		\begin{align}
			&\sup \dot {\overline V}_F(x) \leq -\frac{1}{2}\kappa \underline\lambda \mu^2, &&    \forall x\in C \setminus \A,
			\label{eq:flowCond_finite}\\
			&\Delta V(x):= \!\! \sup \limits_{g \in G(x)} \! V(g) - V(x) \leq 0, &&   \forall x \in D. 
			\label{eq:jumpCond_finite}
		\end{align}
	\end{subequations} \null \hfill $\square$
\end{lemma}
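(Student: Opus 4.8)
The plan is to re-run the argument of Lemma~\ref{lem:geometricVdots} almost verbatim, the only genuine change being that the state-dependent lower bound of Lemma~\ref{lem:lowerbound_sigma} is replaced by the \emph{uniform constant} lower bound $\mu$ supplied by Lemma~\ref{lem:disc_sigma_lower_b}. This is the single point at which the discontinuity of $\sigma$ at the origin enters the analysis, and it is precisely what turns the practical estimate \eqref{eq:flowCond_practical} into a constant strictly-negative bound.

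For the flow inequality \eqref{eq:flowCond_finite}, I would fix $x\in C\setminus\A$ and, exactly as in the proof of \eqref{eq:flowCond_practical}, write any $f\in F(x)$ as $f=(\widehat{\boldsymbol{\omega}}-B\kappa\boldsymbol{\sigma}_{\!f},0)$ with $\widehat{\boldsymbol{\omega}}\in\widehat\Omega$ and $\boldsymbol{\sigma}_{\!f}\in\widehat\Sigma(x)$, use the same selection $\varphi(x,f):=(B\boldsymbol{\sigma}_{\!f},2\pi\boldsymbol{\sigma}_{\!f})\in\partial V(x)$, and invoke Lemma~\ref{lem:LieDSingleton} to get
\[
\sup\dot{\overline V}_F(x)\leq \sup_{\widehat{\boldsymbol{\omega}}\in\widehat\Omega,\ \boldsymbol{\sigma}_{\!f}\in\widehat\Sigma(x)}\bigl(-\kappa\boldsymbol{\sigma}_{\!f}^\top B^\top B\boldsymbol{\sigma}_{\!f}+\boldsymbol{\sigma}_{\!f}^\top B^\top\widehat{\boldsymbol{\omega}}\bigr).
\]
Bounding the cross term by $\boldsymbol{\sigma}_{\!f}^\top B^\top\widehat{\boldsymbol{\omega}}\leq c\overline\omega$ (using $|\boldsymbol{\sigma}_{\!f}|_\infty\leq c$ and \eqref{eq:pars}) and the quadratic term through Lemma~\ref{lem:B_prop} as $-\kappa\boldsymbol{\sigma}_{\!f}^\top B^\top B\boldsymbol{\sigma}_{\!f}\leq-\kappa\underline\lambda|\boldsymbol{\sigma}_{\!f}|^2$ reproduces \eqref{eq:LieVmid}, namely $\sup\dot{\overline V}_F(x)\leq -\kappa\underline\lambda|\boldsymbol{\sigma}_{\!f}|^2+c\overline\omega$. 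At this stage I would apply Lemma~\ref{lem:disc_sigma_lower_b}: since $x\in C\setminus\A\subseteq X\setminus\A$ and $\sigma$ is discontinuous at the origin, there is $\mu>0$, independent of $\overline\omega$, with $|\boldsymbol{\sigma}_{\!f}|\geq\mu$ for all $\boldsymbol{\sigma}_{\!f}\in\widehat\Sigma(x)$, giving $\sup\dot{\overline V}_F(x)\leq -\kappa\underline\lambda\mu^2+c\overline\omega$. Setting $\kappa^\star:=\frac{2c\overline\omega}{\underline\lambda\mu^2}$, every $\kappa>\kappa^\star$ satisfies $c\overline\omega<\tfrac12\kappa\underline\lambda\mu^2$, whence $-\kappa\underline\lambda\mu^2+c\overline\omega\leq-\tfrac12\kappa\underline\lambda\mu^2$, which is \eqref{eq:flowCond_finite}.

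The jump inequality \eqref{eq:jumpCond_finite} coincides with \eqref{eq:jumpCond_practical} and is obtained verbatim from the corresponding part of the proof of Lemma~\ref{lem:geometricVdots}: that argument splits into jumps from $D_i$ (where the mismatch $\theta_j-\theta_i+2q_{ij}\pi$ is preserved by Lemma~\ref{lem:jump_theta_prop}, so $V$ is unchanged) and jumps from $D_{ij}$ (where Lemma~\ref{lem:jump_k_prop} strictly decreases $|\theta_j-\theta_i+2q_{ij}\pi|$, so $V_{ij}$ strictly decreases while the other summands are fixed). This reasoning only uses oddness (item~\ref{prop:sigma_symm}) and the strict sector property (item~\ref{prop:sigma_sect}) of Property~\ref{prop:sigma}, never the regularity of $\sigma$ at the origin, so it transfers without modification.

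I do not anticipate a real obstacle, since the statement is a deliberate companion to Lemma~\ref{lem:geometricVdots}; the whole conceptual content is that the discontinuity upgrades the vanishing lower bound $\eta(|x|_\A)$ of Lemma~\ref{lem:lowerbound_sigma} to the strictly positive constant $\mu^2$, which lets the additive mismatch $c\overline\omega$ be dominated by a sufficiently large gain, yielding a constant strictly-negative right-hand side. The only point deserving a line of care is confirming that $\varphi(x,f)=(B\boldsymbol{\sigma}_{\!f},2\pi\boldsymbol{\sigma}_{\!f})$ still lies in $\partial V(x)$ when $\sigma$ jumps at the origin; this holds because $V$ remains locally Lipschitz, being the integral of a bounded function, and the Krasovskii values $\boldsymbol{\sigma}_{\!f}\in\widehat\Sigma(x)$ are exactly the elements entering Clarke's generalized gradient in \eqref{eq:clarke}, so the selection used in the practical case carries over unchanged.
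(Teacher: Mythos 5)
Your proposal is correct and matches the paper's proof essentially verbatim: the paper likewise proceeds as in \eqref{eq:LieVstarting}, replaces the bound of Lemma~\ref{lem:lowerbound_sigma} by the uniform bound $\mu$ of Lemma~\ref{lem:disc_sigma_lower_b} to get $\sup\dot{\overline{V}}_F(x)\leq -\kappa\underline\lambda\mu^2+c\overline\omega$, chooses $\kappa^\star=\frac{2c\overline\omega}{\underline\lambda\mu^2}$, and dispatches \eqref{eq:jumpCond_finite} by citing the unchanged jump argument from the practical case. Your closing observation that the selection $\varphi(x,f)\in\partial V(x)$ already covers discontinuous $\sigma$ is also consistent with the paper, since Property~\ref{prop:sigma} permits such $\sigma$ in Lemma~\ref{lem:geometricVdots} as well.
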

\begin{proof}
	We only prove \eqref{eq:flowCond_finite}, as \eqref{eq:jumpCond_finite} follows  from  the same arguments as those proving \eqref{eq:jumpCond_practical} in Proposition~\ref{prop:PropV_practical}. For each $x \in C \setminus \A$ and each $f\in F(x)$, we may proceed as in \eqref{eq:LieVstarting} and then exploit from  Lemma~\ref{lem:disc_sigma_lower_b} that, 
	\begin{align}
			\label{eq:LieVmid_fix_bound}
			\begin{split}
				\sup\dot{\overline{V}}_F(x)&\leq -\kappa \underline\lambda \mu^2 + c \overline{\omega}.
			\end{split}
	\end{align}
		By selecting $\kappa \geq \kappa^\star= \frac{2c\overline{\omega}}{\underline\lambda \mu^2}$, \eqref{eq:LieVmid_fix_bound} yields
$
				\sup\dot{\overline{V}}_F(x)\leq -\frac{1}{2}\kappa \underline\lambda \mu^2,
$
	which proves \eqref{eq:flowCond_finite}.
\end{proof}

Based on Lemma~\ref{lem:geometricVdots_finitetime}, we are now ready to prove Proposition~\ref{prop:PropV_finite_time}.

\begin{proofname}{Proof of Proposition~\ref{prop:PropV_finite_time}:}
Item~(ii) of Proposition~\ref{prop:PropV_finite_time} is a direct consequence of \eqref{eq:jumpCond_finite} in Lemma~\ref{lem:geometricVdots_finitetime}. In view of \cite[Lemma 2.23]{DellaRossaPhD21} and $V$ being non pathological,  for each solution $x$ to \eqref{eq:hybr_multi_reg_full}, for all $\mathfrak{j} \in \{0, \ldots, J\}$ and almost all $\mathfrak{t} \in [\mathfrak{t}_\mathfrak{j}, \mathfrak{t}_{\mathfrak{j}+1}]$ in $\dom x$, 
$\frac{d V(x(\mathfrak{t},\mathfrak{j}))}{d\mathfrak{t}} \in \dot{\overline{V}}_F(x(\mathfrak{t},\mathfrak{j}))$. Hence, as a consequence, 
$
		\frac{d}{d\mathfrak{t}} V(x(\mathfrak{t},\mathfrak{j})) \leq-\frac{1}{2}\kappa \underline\lambda \mu^2,
$
whenever $x(\mathfrak{t},\mathfrak{j}) \notin \A$, thus proving item~(i) of Proposition~\ref{prop:PropV_finite_time} which concludes the proof. 
\end{proofname}

\section{Conclusions}
\label{sec:concs}
We presented a cyber-physical hybrid model of leaderless heterogeneous first-order oscillators, where global uniform asymptotic and/or finite-time synchronization is obtained in a distributed way via hybrid coupling. More specifically we establish that the synchronization set for the proposed model enjoys uniform asymptotic practical stability property. Thanks to the mild requirements on the coupling function, the stability result was strengthened to a prescribed finite-time property when the coupling function is discontinuous at the origin. Finally, we proved a useful statement on scalar non-pathological functions, exploited here for the non-smooth Lyapunov analysis in our main theorems. We believe that this work demonstrates the potential of hybrid systems theoretical tools to overcome the fundamental limitations of continuous-time systems.

Future extensions of this work include addressing graphs with cycles (not trees) and investigating the case with leaders as done for a second-order Kuramoto model in (\cite{BossoLeaderKura}). Additional challenges may include studying the converse problem of globally de-synchronizing the network (\cite{DesyncKura}) via hybrid approaches. 

\bigskip

\noindent
{\bf Acknowledgement}.
The authors would like to thank Matteo Della Rossa and \linebreak Francesca Maria Ceragioli for useful discussions about the results of Section~8.1, Elena Panteley for useful discussions and suggestions given during the preparation of the manuscript, and the reviewers of the submission, which allowed us, through their comments, to improve the overall quality of the work.

\appendix
\section{Appendix} 
\begin{proofname}{Proof of Lemma~\ref{lem:jump_k_prop}:}
\label{proof:jump_k_prop}
Let $(i,j) \in \mathcal{E}$, $x \in D_{ij}$ and $x^+$ satisfies \eqref{eq:G_ij}.
Then, $\theta^+ = \theta$ while $q^+ \in \{ -1, 0, 1 \}^m$ in view of \eqref{eq:jump_rule_theta_ij}.
Thus, $x^+ \in X$ and the first part of the statement is proved. Let $\Delta \theta_{ij} := \theta_j - \theta_i$, so that $\theta_j - \theta_i + 2 q_{ij} \pi = \Delta \theta_{ij} + 2 q_{ij} \pi$. Since $x^+ \in X$, by definition of $X$ in \eqref{eq:X}, $|\Delta \theta_{ij}| < 2 \pi + 2 \delta$.

	We now prove that $|\Delta \theta_{ij}^+ + 2q_{ij}^+ \pi|=|\Delta \theta_{ij} + 2q_{ij}^+ \pi|< \pi + \delta$ by exploiting the fact that $q_{ij}^+=h^* \in \argmin\nolimits\limits_{h \in \{ -1, 0, 1 \}} |\Delta \theta_{ij} + 2 h \pi|$ according to \eqref{eq:G_IJ}, and splitting the analysis in five cases.
	\begin{enumerate}[label=\alph*), leftmargin=*, ref=\alph*]
		\item $\Delta \theta_{ij} \in (-\pi,\pi)$.
		Then, the minimizer is $h^* = 0$ and $|\Delta \theta_{ij} + 2 h^* \pi| < \pi < \pi + \delta.$
		\item $\Delta \theta_{ij} = \pi$.
		Then, the minimizer is $h^* \in \{-1,0\} $ and $|\Delta \theta_{ij} + 2 h^* \pi| \leq \pi < \pi + \delta.$
		\item $\Delta \theta_{ij}=-\pi$.
		This case is identical to the previous one by changing all the signs, therefore  $h^* \in \{0,1\} $  and $|\Delta \theta_{ij} + 2 h^* \pi| \leq \pi < \pi + \delta.$
		\item $\Delta \theta_{ij} \in (\pi, 2 \pi + 2 \delta]$.
		Then, the minimizer is $h^* = -1$ and $\Delta \theta_{ij} + 2 h^* \pi \in (-\pi, 2 \delta]$, which implies	$|\Delta \theta_{ij} + 2 h^* \pi| < \pi + \delta$, since $\max (2 \delta, \pi) < \pi + \delta$.
		\item $\Delta \theta_{ij} \in [-2 \pi - 2 \delta, - \pi)$.
		In this case, the minimizer is $h^* = 1$ and $\Delta \theta_{ij} + 2 h^* \pi \in [-2 \delta, - \pi)$, which implies	$|\Delta \theta_{ij} + 2 h^* \pi| < \pi + \delta$, since $2 \delta < \pi + \delta$.
	\end{enumerate}
	 Hence, we obtain, in view of all the previous cases, $|\theta_j^+ - \theta_i^+ +2 q_{ij}^+ \pi| \leq \max (2 \delta, \pi) < \pi + \delta$ thus concluding the proof since we have arbitrarily selected  $(i,j)\in \E$. 
\end{proofname}	 

\begin{proofname}{Proof of Lemma~\ref{lem:jump_theta_prop}:}
\label{proof:jump_theta_prop}
Let $i \in \mathcal{V}$, $x \in D_i$ and $x^+ = g_i(x)$. Let $(u,v) \in \mathcal{E}$, in view of \eqref{eq:jump_rule_theta}, if $u \neq i$ and $v \neq i$ the first equality in \eqref{eq:jump_theta_prop} trivially holds.
If $u = i$, then we have
$
	\theta_v^+ - \theta_u^+ + 2 q_{uv}^+ \pi 
	= \theta_v - \theta_u + \sgn(\theta_u)2 \pi + 2 (q_{uv} - \sgn(\theta_u)) \pi 
	= \theta_v - \theta_u + 2 q_{uv} \pi.
$
Similarly, we obtain for $v = i$ that
$
	\theta_v^+ - \theta_u^+ + 2 q_{uv}^+ \pi = \theta_v - \theta_u + 2 q_{uv} \pi,
$
thus proving the first equality in \eqref{eq:jump_theta_prop}.
On the other hand, in view of \eqref{eq:g_theta}, $|\theta_i^+| = \pi - \delta < \pi + \delta$.
Thus, all the elements of \eqref{eq:jump_theta_prop} are proved.

Let us now prove that $x^+ \in X$. In particular, we need to make sure that $q^+ \in \{ -1, 0, 1 \}^m$. For any $j \neq i \in \V$ we have that $\theta_j^+ = \theta_j$ and $|\theta_i^+| = \pi - \delta$ in view of \eqref{eq:g_theta}.
	Moreover, if $j$ is such that $(i,j) \in \E$, then from \eqref{eq:Dij_multi}, \eqref{eq:Di_multi} we prove next that $q_{ij} \neq -\sgn(\theta_i)$.
	Indeed, if $q_{ij} = -\sgn(\theta_i)$ then we would have
$
		|\theta_j - \theta_i + 2 q_{ij} \pi| = \big|\theta_j - \sgn(\theta_i)|\theta_i| - \sgn(\theta_i) 2 \pi \big| 
		= \big|\theta_j - \sgn(\theta_i)(3 \pi + \delta) \big| \geq 2 \pi > \pi + \delta,
$
	meaning that $x \in \mbox{int}(D_{ij})$ and consequently $x \not \in D_{i}$.
	Thus, $q_{ij}\neq - \sign(\theta_i)$ and, in view of \eqref{eq:g_k}, we obtain $q_{ij}^+ \in \{ 0, -\sgn(\theta_i) \}$.
	With a similar reasoning, we conclude that if $j$ is such that $(j,i) \in \E$ then we must have $q_{ji} \neq \sgn(\theta_i)$, implying that $q_{ji}^+ \in \{ 0, \sgn(\theta_i) \}$ in view of \eqref{eq:g_k}.
	Hence, $x^+ \in X$. 
\end{proofname}
\bibliography{refs1}
\bibliographystyle{plain}
\end{document}